\documentclass[twoside,11pt]{article}

%

%
%
%

\usepackage[preprint]{jmlr2e}

\usepackage{etoolbox}
\usepackage[ansinew, utf8]{inputenc}
\usepackage{mathtools, amssymb, bbm}
\usepackage{multicol,latexsym, array, xcolor}
\usepackage{pifont}
\usepackage{comment}
\usepackage{bm}
\usepackage{dsfont}
\usepackage{ mathrsfs }
\usepackage{enumitem}

\hypersetup{hidelinks}

\newtheorem{cond}[theorem]{Condition}
\newtheorem{set}[theorem]{Setting}

\allowdisplaybreaks[1]

\newcounter{claimcount}
\setcounter{claimcount}{0}
\newenvironment{claim}{\refstepcounter{claimcount}\emph{Claim \arabic{claimcount}:}}{\vspace{1pt}}
\AtBeginEnvironment{proof}{\setcounter{claimcount}{0}}

\newcommand*{\QEDA}{\null\nobreak\hfill\ensuremath{\blacksquare}}%
\newcommand{\R}{\mathbb{R}}
\newcommand{\Rd}{\mathbb{R}^d}
\newcommand{\norm}[1]{\left\lVert#1\right\rVert}
\newcommand{\iid}{\overset{i.i.d.}{\sim}}
\newcommand{\RR}{\mathbb{R}}

\newcommand{\Prob}{\mathbb{P}}

\newcommand{\supp}{\mathrm{supp}}

\newcommand{\Haus}[1]{\mathscr{H}^{#1} }
\newcommand{\Bl}{[}
\newcommand{\Br}{]}
\newcommand{\E}[1]{\ensuremath{\mathbb{E}\left[#1\right]}}
\newcommand{\Esplit}[2]{\ensuremath{\mathbb{E}_{#1}\left[#2\right]}}
\newcommand{\Var}[1]{\ensuremath{\mathrm{Var}\left(#1\right)}}

\newcommand{\diam}[1]{\ensuremath{\mathrm{diam}\left(#1\right)}}
\newcommand{\arccot}[1]{\ensuremath{\mathrm{arccot}\left(#1\right)}}

\newcommand{\indifunc}[1]{\mathds{1}_{\left\lbrace#1 \right\rbrace }}
\newcommand{\X}{\mathcal{X}}
\newcommand{\muX}{\mu_\mathcal{X}}
\newcommand{\empmuX}{\widehat{\mu}_\mathcal{X}}

\newcommand{\mmspaceX}{\left(\X,||\cdot||,\muX \right) }


\newcommand{\dtm}{\mathrm{d}^2_{\X,m}}
\newcommand{\dtmvar}[1]{\mathrm{d}^2_{\X,#1}}
\newcommand{\empdtm}{{\delta}^2_{\X,m}}
\newcommand{\dtmone}{\mathrm{d}^2_{\X,1}}
\newcommand{\empdtmone}{{\delta}^2_{\X,1}}

\newcommand{\dfct}{\mathbbm{d}}

\newcommand{\fdtm}{f_{\dtm}}

\newcommand{\fdtmone}{f_{\dtmone}}

\newcommand{\kde}{\widehat{f}_{\empdtm}}
\newcommand{\kdeone}{\widehat{f}_{\empdtmone}}
\newcommand{\semikde}{\widehat{f}_{\dtm}}
\newcommand{\semikdeone}{\widehat{f}_{\dtmone}}

\newcommand{\fdtmvar}[1]{f_{\mathrm{d}^2_{\X,#1}}}
\newcommand{\Fdtmvar}[1]{F_{\mathrm{d}^2_{\X,#1}}}
\newcommand{\kdeyvar}[2]{\widehat{f}_{{\delta}^2_{{\mathcal{Y}_#1},#2}}}
\newcommand{\kdexvar}[2]{\widehat{f}_{{\delta}^2_{{\mathcal{X}_{#1}},#2}}}


\usepackage{lastpage}
\jmlrheading{25}{2024}{1-\pageref{LastPage}}{11/22}{7/24}{22-1344}{Katharina Proksch, Christoph Alexander Weikamp, Thomas Staudt, Beno\^{\i}t Lelandais, Christophe Zimmer }


\ShortHeadings{Dtm Density based Geometric Analysis of Complex Data}{Proksch, Weitkamp, Staudt, Lelandais and Zimmer}
\firstpageno{1}

\begin{document}
	
	\title{From Small Scales to Large Scales: Distance-to-Measure Density based Geometric Analysis of Complex Data}
	\author{\name Katharina Proksch\thanks{ Katharina Proksch and Christoph Weitkamp contributed equally to this work.}
		\email k.proksch@utwente.nl \\
		\addr Faculty of Electrical Engineering, Mathematics \& Computer Science\\ 
		University of Twente\\
		Hallenweg 19, 7522NH Enschede, Netherlands
		\AND
		\name Christoph A.\ Weitkamp\footnotemark[1]
		\email cweitka@mathematik.uni-goettingen.de \\
		\addr Institute for Mathematical Stochastics\\
		University of G{\"o}ttingen\\
		Goldschmidtstra\ss e 7,
		37077 G{\"o}ttingen, Germany
		\AND 
		\name Thomas Staudt \email thomas.staudt@uni-goettingen.de\\
		\addr Institute for Mathematical Stochastics\\
		University of G{\"o}ttingen\\
		Goldschmidtstra\ss e 7,
		37077 G{\"o}ttingen, Germany
		\AND
		\name Beno\^{\i}t Lelandais \email benoit.lelandais@pasteur.fr\\
		\addr Institut Pasteur\\ Imaging and
		Modeling Unit\\ UMR 3691, CNRS,
		Paris, France
		\AND
		\name Christophe Zimmer \email czimmer@pasteur.fr\\
		\addr Institut Pasteur\\ Imaging and
		Modeling Unit\\ UMR 3691, CNRS,
		Paris, France
	}

	\editor{Florence d'Alche-Buc}
	
	\maketitle
	
	\begin{abstract}
		How can we tell complex point clouds with different small scale characteristics apart, while disregarding global features? Can we find a suitable transformation of such data in a way that allows to discriminate between differences in this sense with statistical guarantees?
		
		In this paper, we consider the analysis and classification of complex point clouds as they are obtained, e.g., via single molecule localization microscopy.
		We focus on the task of identifying differences between noisy point clouds based on small scale characteristics, while disregarding large scale information such as overall size.
		We propose an approach based on a transformation of the data via the so-called Distance-to-Measure (DTM) function, a transformation which is based on the average of nearest neighbor distances. For each data set, we estimate the probability density of average local distances of all data points and use the estimated densities for classification.
		While the applicability is immediate and the practical performance of the proposed methodology is very good, the theoretical study of the density estimators is quite challenging, as they are based on \textit{non-i.i.d.\ }observations that have been obtained via a complicated transformation. In fact, the transformed data are stochastically dependent in a non-local way that is not captured by commonly considered dependence measures. Nonetheless, we show that the asymptotic behaviour of the density estimator is driven by a kernel density estimator of certain i.i.d. random variables by using theoretical properties of $U$-statistics, which allows to handle the dependencies via a Hoeffding decomposition. 
		We show via a numerical study and in an application to simulated single molecule localization microscopy data of chromatin fibers that unsupervised classification tasks based on estimated DTM-densities achieve excellent separation results.
	\end{abstract}
	
	\begin{keywords}
		geometric data analysis,distance-to-measure signature, kernel density estimators, nearest neighbor distributions
	\end{keywords}
	
	\section{Introduction}\label{sec:introduction}
	The analysis and extraction of information from complex point clouds has become a main task in many applications. Prominent examples can be found in geomorphology, where structure in point-clouds obtained from laser scanners is investigated to infer the shape of the Earth \citep{Vosselman,Yuichi}, or in cosmology, where the Cosmic Web is analysed based on a discrete set of points from $N$-body simulations or galaxy studies \citep{Libeskind2}. Related questions also arise in biology, when data from single molecule localization microscopy (SMLM), which is based on the localization of fluorescent molecules that appear at different times, are analyzed \citep{SMLM1,SMLM2}. 
	Data obtained in SMLM are 2D or 3D point clouds, where the points correspond to particular molecular localization events. 
	In this paper, we consider a specific example which is related to the analysis of super-resolution visualization of human chromosomal regions as it has recently been investigated in \citet{Paris}. In this application, the goal is to better understand the 3D organization of the chromatin fiber in cell nuclei, which plays a key role in the regulation of gene expression. \\
	In all aforementioned examples, it is important to identify significant differences between noisy point clouds, where a focus is on general structure and small scale information rather than on global features such as the overall shape of a point cloud. 
	
	\begin{figure}[ht]
		\centering
		\includegraphics[width=0.8\textwidth]{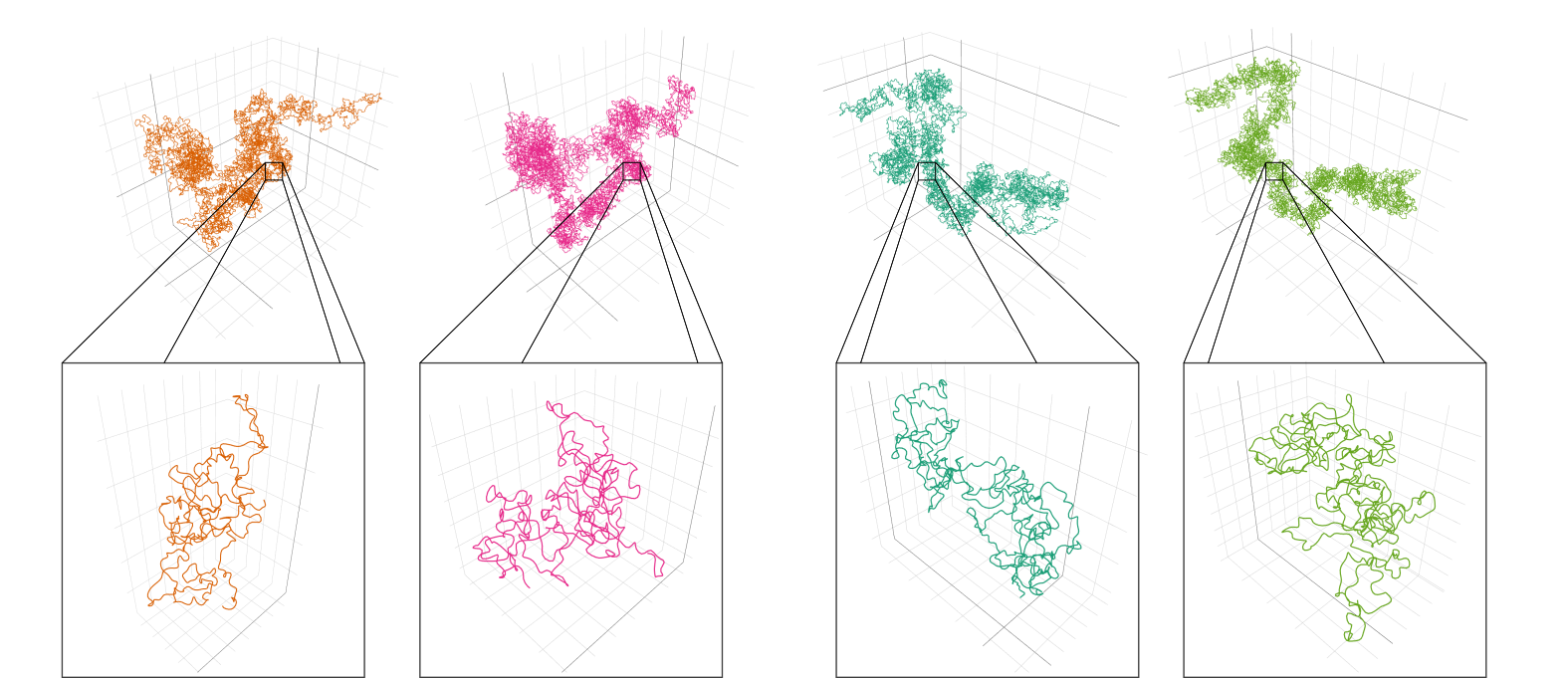}
		\caption{\textbf{Example Data:} Four different simulated chromatin fibers in two different conditions: Condition A (orange (far left) and blue-green (middle right)) and Condition B (pink (middle left) and green (far right)) for the purpose of comparison. }
		\label{fig:data_example_intro}
	\end{figure}
	For illustration, Figure \ref{fig:data_example_intro} shows four simulated chromatin fibers in two different conditions. The displayed structures form loops of different sizes and frequencies, based on the condition under which they were simulated. 
	The differences between the conditions are so subtle that they are not clearly distinguishable by eye. 
	In the application considered in this paper, we analyse noisy samples of such simulated structures. The noise accounts for localization errors as they are present in real SMLM data. 
	The loops are of sizes comparable to the resolution of the images (see Section \ref{sec:CLA} and \citet{Paris} for more details), which makes the problem tractable but difficult. The aim is to classify the point clouds based on their loop distribution (i.e., based on their small scale characteristics), while disregarding their total size or other large scale features. It is natural to transform such complicated data prior to the analysis, in particular when one has a clear objective in mind. In the above reference, the statistical analysis of the simulated and real data was based on a transformation of each data cloud onto a set of two parameters, capturing smoothness and local curvature of the point clouds. While this transformation provided a clear discrimination between different groups, the amount of information preserved in a two-dimensional parameter is not sufficient as a basis for point-by-point classification.
	In this paper, we propose an approach which is similar in spirit, but which provides a transformation into a curve, with different characteristics for the different conditions. In our analysis, the whole curves are then used as features. To this end, we perform the following two steps.
	\begin{itemize}
		\item[(i)] A transformation of the point cloud based on the \textit{Distance-to-Measure (DTM) signature} \citep{chazal2016rates,chazal2017robust} to a one-dimensional data set.
		\item[(ii)] The analysis of the distribution of the DTM-transformed data via their estimated probability density.
	\end{itemize}
	The DTM signature is closely related to certain nearest neighbor distributions, which makes this approach very intuitive. In particular, this framework allows for a comprehensive exploratory analysis of complex data, for which we might seek a simple graphical representation that captures and summarizes the local structural information well. 
	
	\subsection{The DTM-Density as a Representation for Local Features}\label{sec:proposed approach} 
	We now introduce the statistical framework of the paper and carefully define the previously mentioned DTM-signature. Throughout the following, we consider random point clouds as samples from a \textit{Euclidean metric measure space} $\X=\mmspaceX$, i.e., a triple, where $\X\subset\RR^d$ denotes a compact set, $||\cdot||$ stands for the Euclidean distance and $\muX$ denotes a probability measure that is fully supported on the compact set $\X$. If, additionally, $\muX$ has a Lipschitz continuous density with respect to the $d$-dimensional Lebesgue measure, then we call $\X$ a \textit{regular Euclidean metric measure space}. For a metric measure space $\X$, we  define the corresponding \emph{Distance-to-Measure (DTM) function} with \emph{mass parameter} $m\in(0,1]$ for $x\in\Rd$ as 
	\begin{equation}\label{eq:dtm-function}
		\dtm(x)=\frac{1}{m}\int_0^m\!F_x^{-1}(u)\,du,
		\end{equation}
	where $F_x(t)=P(\norm{X-x}^2 \leq t)$, $X\sim \muX$, and $F_x^{-1}$ denotes the corresponding quantile function. The DTM-function, which is essential for the definition of the DTM-signature, is a population quantity that is generally unknown in practice and thus has to be estimated from the data. In order to do so, we replace the quantile function in definition \eqref{eq:dtm-function} by its empirical version as follows. Let $X_1\dots,X_n\iid\muX$ and denote the corresponding empirical measure by $\empmuX$. We define for $t\geq 0$
	\begin{equation}\label{eq:defofFxn}
		\widehat{F}_{x,n}(t)=\frac{1}{n}\sum_{i=1}^n\indifunc{||x-X_i||^2\leq t}
	\end{equation}
	and denote by $\widehat{F}_{x,n}^{-1}$ the corresponding quantile function, giving rise to a plug-in estimator $\empdtm(x)$ for the Distance-to-Measure function $\dtm(x)$:
	\begin{equation}\label{eq:emp dtm}
		\empdtm(x)=\frac{1}{m}\int_0^m\!\widehat{F}_{x,n}^{-1}(u)\,du.
	\end{equation}
	In the special case that $m=\frac{k}{n}$, it is possible to rewrite \eqref{eq:emp dtm} as a \textit{nearest neighbor statistic} as follows
	\begin{equation}\label{eq:emp dtm v2}
		\empdtm(x)=\frac{1}{k}\sum_{X_i\in N_k(x)}||X_i-x||^2, 
	\end{equation}
	where $N_k(x)$ is the set containing the $k$ nearest neighbors of $x$ among the data points $X_1,\dots,X_n$.\\
	
	As discussed previously, we require a good descriptor for the small scale behavior of our data. Hence, in a similar spirit as \citet{Brecheteau}, we reduce the potentially complex Euclidean metric measure space to a one-dimensional probability distribution by considering the \emph{Distance-to-Measure (DTM) signature} $\dtm(X)$, where $X\sim\muX$. That is, the deterministic point $x\in\X$ is replaced by the random variable $X$ representing our observations. The distribution of $\dtm(X)$ captures the relative frequency of the mean of the distances of a random point in $\X$ to its ``$m\cdot100\%$ nearest neighbors''. We will empirically illustrate that the distribution of $\dtm(X)$ is a good descriptor for the small scale behavior of the considered data for small values of $m$ and verify that it is well-suited for chromatin loop analysis. Furthermore, it is easy to see that for $m=1$ the random quantity $\dtmone(X)$ is closely related to the lower bound $\text{FLB}_p$ of the Gromov-Wasserstein distance defined in \citet{memoli2011gromov}:
	\begin{align*}
		\text{FLB}_p=\frac{1}{2}\inf_{\mu\in\mathcal{M}(\muX,\mu_{\mathcal{Y}})}
		\left(\int_{\mathcal{X}\times\mathcal{Y}}|s_{\mathcal{X},p}(x)-s_{\mathcal{Y},p}(y)|^p\mu(dx,dy)\right)^{\frac{1}{p}},
	\end{align*}
	where $s_{\mathcal{X},p}=\|d_{\mathcal{X}}(x,\cdot)\|_{L^p(\mu_{\mathcal{X}})}$ and $\mathcal{M}(\muX,\mu_{\mathcal{Y}})$ denotes the set of all couplings of $\muX$ and $\mu_{\mathcal{Y}}$.
	This motivates why $\dtmone(X)$ is well suited for object discrimination with a focus on large scale characteristics. Although this case is not of interest in our specific data example, we include it in our analysis, since variants of $\dtmone(X)$ have been proven very useful for pose invariant object discrimination when large scale differences are crucial \citep{hamza2003geodesic,gellert2019substrate}, see also Section \ref{subsec:disc props} for an example.\\
	
	Since we propose to reduce (possibly complex) multi-dimensional metric measure spaces to a one-dimensional probability distribution, the next step is to visualize and investigate these distributions. It is well known that probability densities (if they exist) can provide a useful visual insight into the probability distributions considered. In this regard, they are usually better suited than cumulative distribution functions (see, e.g., \citet{chen2018modern}). Therefore, we focus on the estimation of the density of $\dtm(X)$ in this paper. A natural estimator for the density of $\dtm(X)$, in the following denoted as \textit{DTM-density}, in case of a known DTM-function, is given by
	\begin{equation}\label{eq:def of semikde}
		\semikde(y)=\frac{1}{nh}\sum_{i=1}^n K\left(\frac{\dtm(X_i)-y}{h}\right).
	\end{equation}
	However, since $\muX$ is  unknown, we cannot compute $\dtm$  in practice and consequently it is generally not feasible to estimate $\fdtm$ via $\semikde$. Instead, we propose to replace $\dtm$ by its empirical version $\empdtm $ and estimate $\fdtm$ based on the plug-in estimator 
	\begin{equation}\label{eq:def of kde}
		\kde(y)=\frac{1}{nh}\sum_{i=1}^n K\left(\frac{\empdtm(X_i)-y}{h}\right).
	\end{equation}
	
	It is important to note that, in contrast to $\semikde$, the plug-in estimator $\kde$ is based on the \textit{non-i.i.d.\ }observations $\empdtm(X_1),\ldots,\empdtm(X_n)$. In fact, for each $i\neq j$, $\empdtm(X_i)$ and $\empdtm(X_j)$ are stochastically dependent.
	The asymptotic behaviour of kernel density estimators under dependence has been studied extensively in the literature for various mixing and linear processes connected to weakly dependent time series \citep{Castellana,Robinson,liebscher1996strong,Lu,Mielniczuk}. In all these settings, results on asymptotic normality similar to the i.i.d.\ case can be derived.
	Related results for spatial processes can be found, e.g., in \citet{Hallin2004}.
	For \textit{long-range dependent} data, the asymptotic behaviour of kernel density estimators changes drastically. Here, the empirical density process (based on kernel estimators of the marginal densities) converges weakly to a tight limit (see \citet{csorgo1995density}). For the sequence $\empdtm(X_1),\ldots,\empdtm(X_n)$, however, a structure as in the above examples (in space or time) is not given. For each $i\neq j$, $\empdtm(X_i)$ and $\empdtm(X_j)$ are stochastically dependent
	in a way that is not captured by the dependency models considered in the literature discussed above. Curiously perhaps, $\kde$ allows a decomposition into a non-degenerate $U$-statistic $U_n(X_1,\ldots,X_n)$ of order 2 and negligible remainder terms (Step 2 in the proof of Theorem \ref{thm:kde limit}, see Lemma \ref{lemma:ustat reform} in Section \ref{sec:technical lemmas}). $U$-statistics are averages of symmetric functions \textit{applied to all tuples} of fixed size (here 2) in the sample and  are a concept from classical mathematical statistics due to \cite{Hoeffding}. By proper projection we can find an \textit{i.i.d.} approximation to $U_n(X_1,\ldots,X_n)$ which drives the asymptotics and ensures a Gaussian limit (Step 3 in the proof of Theorem \ref{thm:kde limit}, see Lemma \ref{lemma:Hoeffding decomposition} in Section \ref{sec:technical lemmas}, see also Chapter 12.1 in \cite{van2000asymptotic} for mathematical properties of $U$-statistics in general).
	
	\subsection{Main Results}
	The main theoretical contribution of the paper is the distributional limit of the kernel density estimator defined in \eqref{eq:def of kde}. More precisely, we prove (cf. Theorem \ref{thm:kde limit}), given certain regularity conditions on $\fdtm$, $\dtm(y)$ and $\X$, (see Condition \ref{condition} in Section \ref{sec:setting and assumptions}) that for 
	$n\to\infty$, $h=o\left(n^{-1/5}\right)$ and $nh\to \infty$ 
	\begin{equation}\sqrt{nh}\left(\kde(y)-\fdtm(y)\right)\Rightarrow N\left(0,\fdtm(y)\int\!K^2(u)\,du\right)\label{eq:pointwise limit}.\end{equation}
	This means that, although the kernel density estimator $\kde$ is based on transformed, dependent random variables, asymptotically, it behaves 
	precisely as the inaccessible kernel density estimator $\semikde$ based on independent random variables. This entails that many methods which are feasible for kernel density estimators based on i.i.d.\ data, can be applied in this much more complex setting as well, with the same asymptotic justification.
	
	\subsection{Application}\label{subsec:application}
	Chromosomes, which consist of chromatin fibres, are essential parts of cell nuclei in human beings and carry the genetic information important for heredity transmission. It is known by now that there are small scale self-interacting genomic regions, so called
	topologically
	associating domains (TADs) which
	are often associated with loops in the chromatin fibers \citep{NueblerE6697}. As an application, we consider
	chromatin loop analysis, one aspect of which is to study the presence or absence of loops in the chromatin (see Section \ref{sec:CLA}). \\
	The local loop structure is very well characterized by local nearest neighbor means as illustrated on the right of Figure \ref{fig:pipeline_and_results} and hence we propose to use DTM-signatures for tackling this issue. Figure \ref{fig:pipeline_and_results} shows the pipeline for the data transformation (left) and the resulting kernel density estimators ($m=1/250$, biweight kernel, bandwidth selection as in Section \ref{sec:CLA}) for the four data sets shown in Figure \ref{fig:data_example_intro} (right of Figure \ref{fig:pipeline_and_results}, same coloring). It shows that the kernel density estimators mainly differ between the different conditions and not between the corresponding chromatin fibers and that the differences between the conditions are clearly pronounced. This demonstrates  that the transformation is well suited for a qualitative analysis of the data.
	\begin{figure}[ht]
		\centering
		\includegraphics[width=\textwidth]{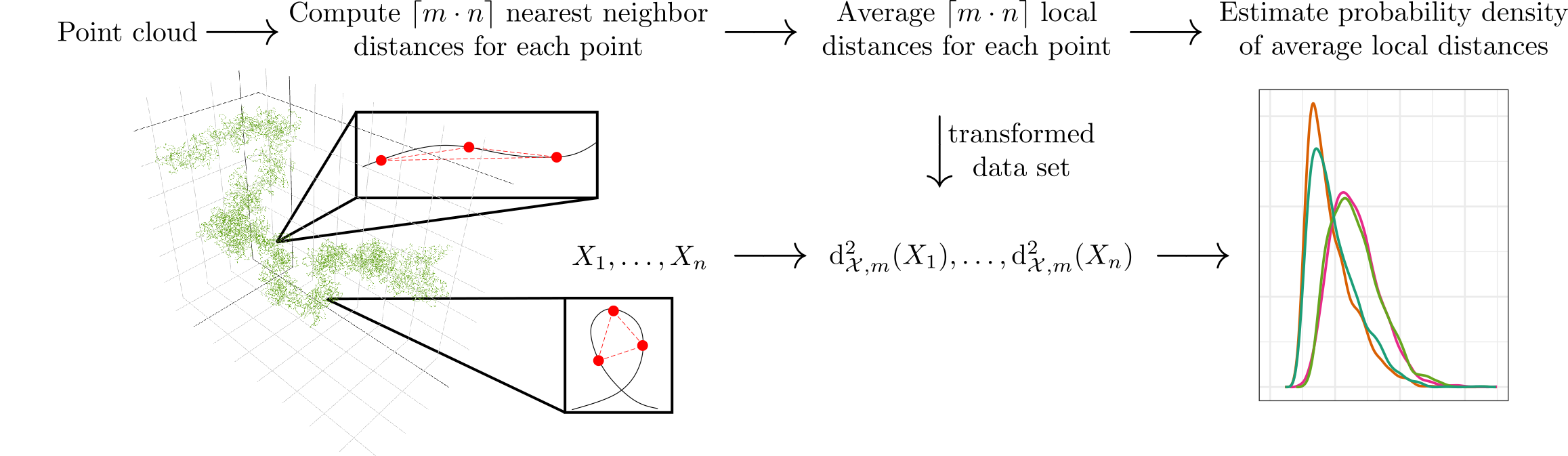}
		\caption{\textbf{Data analysis pipeline:} Illustration of the different steps in the proposed data analysis. The red dots in the details of the image represent data points, the red lines show the point-to-point distances, whereas the underlying chromatin structure is depicted by a black line.   Right: The resulting DTM-density estimates of the point clouds illustrated in Figure \ref{fig:data_example_intro} (same coloring).}
		\label{fig:pipeline_and_results}
	\end{figure}
	\subsection{Related Work}
	The use of the DTM-signature for the purpose of pose invariant object discrimination was proposed by \citet{Brecheteau}, who in particular established a relation between the DTM-signature and the Gromov-Wasserstein distance (see \citet{memoli2011gromov} for a definition). In the aforementioned work, the author considers the asymptotic behavior of the Wasserstein distance between sub-sampled estimates of the DTM-signatures for two different spaces. While the approaches are related, one big advantage of our method of
	estimating the DTM-densities  is that it does not require sub-sampling and all data points can be used for the analysis, guaranteeing a powerful procedure without loss of information.\\
	As illustrated in Section \ref{sec:proposed approach}, the DTM-signature is based on the DTM-function (see Eq.\ \ref{eq:dtm-function}). This function has been thoroughly studied and applied in the context of support estimation and topological data analysis \citep{chazal2011geometric,chazal2013persistence,buchet2014topological} and for its sample counterpart (see Eq.\ \ref{eq:emp dtm}) many consistency properties have been established in \citet{chazal2016rates,chazal2017robust}. \\
	
	Distance based signatures for object discrimination have been applied and studied in a variety of settings \citep{osada2002shape,gelfand2005robust,belongie2006matching,shi2007direct,brinkman2012invariant,berrendero2016shape}. Recently, lower bounds of the Gromov-Wasserstein distance (see \citet{memoli2011gromov}) have received some attention in applications \citep{gellert2019substrate} and in the investigation of their discriminating properties and their statistical behavior \citep{memoli2018distance, weitkamp2020gromovwasserstein}.

	Furthermore, it is noteworthy that nearest neighbor distributions are of great interest in various fields in biology \citep{zou1995nearest,meng2020k} as well as in physics \citep{torquato1990nearest,bhattacharjee2003nth,hsiao2020mean}. In these fields it is quite common to consider the (mean of the) distribution of all nearest neighbors for data analysis. This case corresponds to $m=1/n$ in our method (see Remark \ref{rem:sampling} for a discussion of different types of sampling mechanisms). We would like to emphasize that taking the mean over a certain percentage of nearest neighbors makes our method a lot more robust against noise, leading to a more reliable performance in the analysis of noisy point clouds.
	\\ In the analysis of SMLM images, methods from spatial statistics are often employed. Related to the global distribution of all distances is Ripley's K, which is used to infer the amount and the degree of clustering in a given data set as compared to a point cloud generated by a homogeneous Poisson point process (see, e.g., \citet{SMLM1} for the application of Ripley's K in this context). Despite the connection via certain distributions of distances, the objectives and underlying models are quite different to the setting of this paper, such that a direct comparison is not sensible.
	\\
	
	Kernel density estimation from dependent data is a broad and well investigated topic. 
	In addition to the references provided in Section \ref{sec:proposed approach}, kernel density estimators of symmetric functions of the data and dyadic undirected data have been considered \citep{frees1994estimating,graham2019kernel}. In these settings, the summands of the corresponding kernel density estimators admit a ``$U$-statistic like'' dependency structure that has to be accounted for. While this is more closely related to the dependency structure which we are encountering in our analysis, the structure of the statistics that appear in the decomposition of the kernel density estimator \eqref{eq:def of kde} is quite different, such that those results cannot directly be transferred to our setting.
	\subsection{Organization of the Paper}
	In Section \ref{sec:Distributional} we state the main results and are concerned with the derivation of distributional limit \eqref{eq:pointwise limit} 
	and the assumptions required. Afterwards, in Section \ref{sec:simulations} we illustrate our findings via simulations.
	In Section \ref{sec:CLA}, we apply our methodology to the classification within the framework of chromatin loop analysis.
	
	\subsection{Notation:} Throughout the following, we denote the $d$-dimensional Lebesgue measure by $\lambda^d$ and the $(d-1)$-dimensional surface measure in $\Rd$ by $\sigma^{d-1}$. We write $B(x,r)$ for the open ball in $\Rd$ (equipped with $||\cdot||$) with center $x$ and radius $r$. Given a function $f$ or a measure $\mu$, we write $\supp(f)$ and $\supp(\mu)$ to denote their respective support. Let $F$ be a distribution function with compact support $[a,b]$ and let $F^{-1}$ denote the corresponding quantile function. As frequently done, we set $F^{-1}(0)=a$ and $F^{-1}(1)=b$. Let $U\subseteq\R^{d_1}$ be an open set. We denote by $C^k(U,\R^{d_2})$ the set of all $k$-times continuously differentiable functions from $U$ to $\R^{d_2}$. Further, we denote by $C^{k,1}(U,\R^{d_2})$ the set of all $k$-times continuously differentiable functions from $U$ to $\R^{d_2}$, whose $k$th derivative is Lipschitz continuous. For $d_2=1$, we abbreviate this to $C^k(U)$ and $C^{k,1}(U)$. If the domain and range of a function $g$ are clear from the context, we will usually write $g\in C^{k}$ or $g\in C^{k,1}$.
	\section{Distributional Limits}\label{sec:Distributional}
	In this section, we recall the setting, establish the conditions required for our asymptotic theory and ensure that these are met in some simple examples before we state our main theoretical results, upon which our statistical methodology is based. To this end, we show that $\kde$ is a reasonable estimator for the density of the DTM-signature by proving the distributional limit \eqref{eq:pointwise limit}.

	\subsection{Setting and Assumptions}\label{sec:setting and assumptions}
	First of all, we summarize the setting introduced in Section \ref{sec:proposed approach}. 
	\begin{set}\label{setting}
		Let $\mmspaceX$ denote a regular Euclidean metric measure space. For $x\in\X$ let $\dtm(x)$ denote the corresponding Distance-to-Measure function with mass parameter $m\in(0,1]$. Let $X\sim\muX$ and assume that the Distance-to-Measure signature $\dtm(X)$ admits a density $\fdtm$. Let $X_1,\dots,X_n\iid\muX$ and denote by $\semikde$ and $\kde$ the kernel density estimators defined in \eqref{eq:def of semikde} and \eqref{eq:def of kde}, respectively. 
	\end{set}
	It is noteworthy that the assumption that $\dtm(X)$ admits a Lebesgue density is slightly restrictive. The probability measure $\mu_{\dtm}$ of the DTM-signature can have a pure point component $\mu_{\dtm,\text{pp}}$ in addition to the continuous component $\mu_{\dtm,\text{cont}}$, if the spaces considered have very little local structure (for examples, see Section \ref{sec:examples}). That is,
	\begin{align*}
		\mu_{\dtm}= \mu_{\dtm,\text{pp}}+\mu_{\dtm,\text{cont}}.
	\end{align*}
	If we define $\fdtm$ to be the Radon-Nikodym derivative of the absolutely continuous component $\mu_{\dtm,\text{cont}}$, i.e., $\fdtm\mathrm{d}\lambda=\mathrm{d}\mu_{\dtm,\text{cont}}$, the pointwise asymptotic analysis of $\kde$ performed in Section \ref{subsec:asym theo} (see Theorem \ref{thm:kde limit}) remains valid for all $y$ with $\mu_{\dtm}(\{y\})=0$ that meet the corresponding assumptions. This guarantees that our analysis remains meaningful even if parts of our space do not provide local structure that is discriminative. \\
	In order to derive statement \eqref{eq:pointwise limit}, we require certain regularity assumptions on the density $\fdtm$, the DTM-function $\dtm$ and the kernel $K$. For the sake of completeness, we first recall some facts about the relation of the level sets of a given function. Let $g:\Rd\to\R$ and let $y\in\R$ be such that $g^{-1}(\{y\})\neq \emptyset$.  Suppose that the function $g$ is continuously differentiable in an open neighborhood of $g^{-1}(\{y\})$.  Assume further that $\nabla g\neq 0$ on the level set $g^{-1}(\{y\})$. Then, it follows by Cauchy-Lipschitz's theory (see e.g., \cite{hirsch1974differential,amann2011ordinary}) that there exists a constant $h_0>0$, an open set $W\supset g^{-1}([y-h_0,y+h_0])$ and a canonical one parameter family of $C^1$-diffeomorphisms $\Phi:[-h_0,h_0]\times W\to \Rd$ with the following property:
	\[\Phi(v, g^{-1}(\{y\}))=g^{-1}(\{y+v\})\]
	for all $v\in [-h_0,h_0]$ (for the precise construction of $\Phi$ see the proof of Lemma \ref{lem:local Lipschitz} in Section \ref{sec:lemmasB1toB4}). Throughout the following, the family $\{\Phi(v,\cdot)\}_{v\in[-h_0,h_0]}$ (also abbreviated to $\Phi$) is referred to as \textit{canonical level set flow of $g^{-1}(\{y\})$}.
	
	\begin{cond}\label{condition}
		Let $\fdtm$ be supported on $[D_1,D_2]$ and let $y\in[D_1,D_2]$. Assume that there exists $\epsilon>0$ such that $\fdtm$ is twice continuously differentiable on $(y-\epsilon,y+\epsilon)$. Further, suppose that the function $\dtm:\Rd\to\R$ is $C^{2,1}$ on an open neighborhood of the level set \[\Gamma_y\coloneqq {\dtm}^{-1}(\{y\})=\{x\in\Rd:\dtm(x)=y\},\] that $\nabla \dtm \neq 0$ on $\Gamma_y$ and that there exists $h_0>0$ such that for all $-h_0<v<h_0$
		\begin{equation}\label{eq:set level set cond}
			\mathcal{I}_{\mathcal{X}}(y;v):= \int_{\Gamma_y}\!\left|\indifunc{x\in \X}-\indifunc{\Phi(v,x)\in \X}\right|\,d\sigma^{d-1}(x)\leq C_y |v|,
		\end{equation}
		where $\{\Phi(v,\cdot)\}_{v\in[-h_0,h_0]}$ denotes the canonical level set flow of $\Gamma_y$ and $C_y$ denotes a finite constant that depends on $y$ and $\dtm$. Suppose that the kernel $K:\R\to \R_+$, is an even, twice continuously differentiable function with $\supp(K)=[-1,1]$. If $m<1$, we assume additionally that there are constants $\kappa>0$ and $1\leq b<5$ such that for $u\in(0,1)$ it holds \begin{equation} \omega_\X(u)\coloneqq\sup_{x\in\X}\sup_{t,t'\in(0,1)^2,|t-t'|<u}\left|F_x^{-1}(t)-F_x^{-1}(t')\right| \leq \kappa u^{1/b}.\label{eq:cond mod of cont}\end{equation}
	\end{cond}
	The satisfiability of Condition \ref{condition} is an important issue that is difficult to address in general. Hence, in Section \ref{sec:examples} we will verify that the requirements of Condition \ref{condition} are met in several simple examples. Nevertheless, in order to put Condition \ref{condition} into a broader perspective, we first gather some known regularity properties of $\dtm$ as well as $\{F_x^{-1}\}_{x\in\X}$ and discuss the technical requirement \eqref{eq:set level set cond} afterwards.\\
	
	\subsubsection{Regularity of \texorpdfstring{$\dtm$ and $\{F_x^{-1}\}_{x\in\X}$}{the Distance-to-Measure-Function}}
	We distinguish between the cases $m<1$ and $m=1$ for the presentation of known regularity results. For $m<1$, the smoothness of $\dtm$ has been investigated in \citet{chazal2011geometric}, where the authors derived the following results.
	\begin{lemma}~
		\vspace{-0.25cm}
		\begin{enumerate}
			\item Let $\mmspaceX$ denote an Euclidean metric measure space. Then, the function $\dtm:\R^d\to \R$ is almost everywhere twice differentiable.
			\item If $\X=\mmspaceX$ denotes a regular Euclidean metric measure space, then the function $\dtm:\R^d\to \R$ is differentiable with derivative
			$$\nabla\dtm (x)=\frac{2}{m}\int\![x-y]\,d\bar{\mu}_x(y),$$
			where $\bar{\mu}_x=\muX|_{B\left(x,F_x^{-1}(m)\right)}$. 
		\end{enumerate}  
	\end{lemma}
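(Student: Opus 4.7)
The plan is to exploit the variational representation
\[
\dtm(x) = \frac{1}{m}\inf_{\nu}\int\!||x-y||^2\,d\nu(y),
\]
where $\nu$ ranges over positive Borel sub-measures of $\muX$ with total mass $m$; this follows from the definition \eqref{eq:dtm-function} once one observes that the optimal $\nu$ concentrates on the $m$-fraction of points closest to $x$, so that $\int ||x-y||^2\,d\nu(y) = \int_0^m F_x^{-1}(u)\,du$ at the minimum. Expanding $||x-y||^2 = ||x||^2 - 2\langle x,y\rangle + ||y||^2$ and isolating the $x$-independent factor yields
\[
\dtm(x) - ||x||^2 = \frac{1}{m}\inf_{\nu}\!\left(\int\! ||y||^2\,d\nu(y) - 2\bigl\langle x,\,\textstyle\int\! y\,d\nu(y)\bigr\rangle\right),
\]
which exhibits $\dtm - ||\cdot||^2$ as a pointwise infimum of affine functions of $x$, hence as a concave function on $\Rd$. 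Part 1 then follows from Alexandrov's theorem on the $\lambda^d$-a.e.\ twice differentiability of concave functions.

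For part 2, I would first show that under the Lipschitz-density hypothesis the variational minimizer is unique and equal to $\bar{\mu}_x = \muX|_{B(x,\gamma_{\muX,m}(x))}$. Absolute continuity of $\muX$ forces $\muX(\partial B(x,r))=0$ for every $r$, making $r\mapsto\muX(B(x,r))$ continuous, so $\gamma=\gamma_{\muX,m}(x)$ satisfies $\muX(B(x,\gamma))=m$. A bathtub-type rearrangement argument then shows that any $\nu\neq\bar{\mu}_x$ must place some mass on points farther than $\gamma$ from $x$, strictly increasing the objective. Given a unique minimizer, I would apply an envelope argument: for each fixed $\nu$ the map $x\mapsto\int ||x-y||^2\,d\nu(y)$ is smooth with gradient $2\int(x-y)\,d\nu(y)$, and a concave function with singleton super-differential at a point is differentiable there with that element as gradient, so the formula
\[
\nabla\dtm(x) = \frac{2}{m}\int\![x-y]\,d\bar{\mu}_x(y)
\]
drops out once the super-differential of $\dtm(\cdot)-||\cdot||^2$ at $x$ is identified with a singleton.

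The main obstacle is making this envelope identification rigorous, i.e.\ showing that in every direction $h$ the one-sided derivative of the infimum equals $\langle h,\tfrac{2}{m}\int(x-y)\,d\bar{\mu}_x(y)\rangle$, with no extra contribution from the dependence of $\bar{\mu}_{x+th}$ on $t$. The Lipschitz-density hypothesis is precisely what makes this work: it yields continuity of $\gamma_{\muX,m}(\cdot)$ at $x$ (via an implicit-function argument applied to $r\mapsto\muX(B(x,r))-m$) and hence total-variation continuity of the set-valued map $x\mapsto\bar{\mu}_x$, which is what is needed to rule out such a boundary contribution and conclude differentiability with the stated gradient.
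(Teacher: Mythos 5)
Correct in substance, and note that the paper itself contains no proof of this lemma: it is imported from \citet{chazal2011geometric}, and your argument is essentially the one given there --- writing $\dtm(x)$ as $\frac{1}{m}$ times an infimum of $\int\!\|x-y\|^2\,d\nu(y)$ over submeasures $\nu\leq\muX$ of total mass $m$, so that $\dtm(\cdot)-\|\cdot\|^2$ is an infimum of affine functions, hence concave, with part 1 following from Alexandrov's theorem, and part 2 from identifying the unique minimizer $\bar{\mu}_x$ and an envelope argument. The only step that genuinely needs care is the one you flag yourself: passing from uniqueness of the minimizing submeasure to a singleton superdifferential is best done by a Danskin/Valadier-type argument using weak-$*$ compactness of the set of submeasures of mass $m$ and joint continuity of the objective (your route via continuity of $\gamma_{\muX,m}$ and total-variation continuity of $x\mapsto\bar{\mu}_x$ also works). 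One small remark: for part 2 absolute continuity of $\muX$ already gives $\muX(\partial B(x,r))=0$ and hence uniqueness of the minimizer; the Lipschitz continuity of the density built into the definition of a regular space is not actually needed for this lemma.
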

	Another important point for the case $m<1$ is the verification of inequality \eqref{eq:cond mod of cont}. This corresponds to bounding a
	uniform modulus of continuity for the family $\{F_x^{-1}\}_{x\in\X}$.
	An application of Lemma 3 in \citet{chazal2016rates} immediately yields the subsequent result. 
	\begin{lemma}
		Let $\mmspaceX$ be a regular Euclidean metric measure space. Suppose that there are constants $a,b>0$ such that for all $r>0$ and all $x\in\X$
		\begin{equation}\label{eq:ab standard}
			\muX(B(x,r))\geq 1\wedge ar^b.
		\end{equation}
		Then, it holds that
		\begin{equation*}
			\omega_\X(u)\leq 2 \left(\frac{h}{a}\right)^{1/b}\diam{\X}.
		\end{equation*}
	\end{lemma}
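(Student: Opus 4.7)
The plan is to reduce the modulus-of-continuity bound on the family $\{F_x^{-1}\}_{x\in\X}$ to a corresponding H\"older bound on the quantile function of the one-dimensional radial distribution $G_x(r) := \muX(B(x,r))$, and then to invoke the cited Lemma 3 of \citet{chazal2016rates} to finish.

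First I would observe that $F_x(r^2) = G_x(r)$ for all $r \geq 0$ (using that, since $\muX$ admits a density, spheres carry no mass), which forces $F_x^{-1}(t) = \bigl(G_x^{-1}(t)\bigr)^2$ for $t \in (0,1)$. The elementary factorisation $\alpha^2 - \beta^2 = (\alpha-\beta)(\alpha+\beta)$ then gives
\begin{equation*}
\bigl|F_x^{-1}(t) - F_x^{-1}(t')\bigr| \;=\; \bigl(G_x^{-1}(t) + G_x^{-1}(t')\bigr)\,\bigl|G_x^{-1}(t) - G_x^{-1}(t')\bigr|
\end{equation*}
for all $t, t' \in (0,1)$ and $x \in \X$. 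Since $G_x^{-1}(t)$ and $G_x^{-1}(t')$ are distances from $x$ to points of the compact set $\X = \supp(\muX)$, the first factor is uniformly bounded by $2\,\diam{\X}$.

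For the second factor, Lemma 3 of \citet{chazal2016rates} states that under the $(a,b)$-standard hypothesis \eqref{eq:ab standard} the radial quantile function satisfies $\bigl|G_x^{-1}(t) - G_x^{-1}(t')\bigr| \leq (|t - t'|/a)^{1/b}$ uniformly in $x\in\X$ and $t, t' \in (0,1)$. Multiplying the two estimates and then taking the supremum over $x \in \X$ and over pairs $(t,t')$ with $|t-t'| < u$ recovers the claimed inequality $\omega_\X(u) \leq 2\,(u/a)^{1/b}\,\diam{\X}$ (reading the symbol $h$ on the right-hand side of the stated bound as the argument $u$).

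The only non-trivial ingredient is the H\"older estimate on $G_x^{-1}$; once it is in hand, the rest of the argument is essentially algebraic. The geometric intuition behind the cited lemma is that, starting from $r' = G_x^{-1}(t')$, applying the $(a,b)$-standard condition at a well-chosen point of $\X$ lying near the sphere of radius $r'$ around $x$ produces an explicit lower bound of order $\delta^b$ on the extra $\muX$-mass that a radial increment of size $\delta$ must capture, thereby controlling how fast $G_x^{-1}$ can increase with its argument. For the purposes of this proof proposal I would quote that estimate as a black box, as the paper itself does.
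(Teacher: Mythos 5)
Your argument is essentially the paper's own: the paper offers no proof beyond the remark that the bound is an immediate application of Lemma 3 in \citet{chazal2016rates}, and your reduction $F_x^{-1}=(G_x^{-1})^2$ with the difference-of-squares factorisation (which produces the factor $2\,\diam{\X}$) combined with that cited H\"older bound on the radial quantile function is precisely the intended application, constants included. Reading the paper's $h$ in the stated bound as the argument $u$ is also the correct interpretation of that typo.
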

	\begin{remark}
		Condition \eqref{eq:ab standard} is frequently assumed in the context of shape analysis. Measures that fulfill \eqref{eq:ab standard} are often called \textit{(a,b)-standard} (see \citet{cuevas2009set,fasy2014confidence,chazal2016rates} for a detailed discussion of (a,b)-standard measures). In particular, we observe that our assumption \eqref{eq:cond mod of cont} is met, whenever $b< 5$.
	\end{remark}
	In the case $m=1$, it is important to observe that the DTM-function admits the following specific form:
	\begin{equation}\label{eq:dtm one identity}
		\dtmone(x)=\int_0^1\!F_x^{-1}(u)\,du=\E{||X-x||^2},
	\end{equation}
	where $X\sim\muX$. This identity gives rise to the following lemma.
	\begin{lemma}\label{lemma:dtm m=1}
		Let $\X=\mmspaceX$ denote a regular Euclidean metric measure space and let $X\sim \muX$. Then, it holds that: 
		\begin{enumerate}
			\item The function $\dtmone:\R^d\to \R$ is given as
			\begin{equation}\label{eq:dtmone rep}x=(x_1,\dots,x_d)\mapsto (x_1-c_1)^2+(x_2-c_2)^2+\dots+(x_d-c_d)^2+\zeta,\end{equation}
			where $c=(c_1,\dots,c_{d})^T=\E{X}$ and $\zeta$ denotes a finite constant that can be made explicit.
			\item The function $\dtmone:\R^d\to \R$ is three times continuously differentiable.
			\item We have $\nabla \dtmone(x)=0$ if and only if $x=\E{X}$.
			\item Consider the representation of $\dtmone$ in \eqref{eq:dtmone rep}. Set $\Gamma_y={\dtmone}^{-1}(\{y\})$ and suppose that  $\left(y-\E{||X-\E{X}||^2}\right)>2h_0>0$. In this case, the canonical level set flow $\{\Phi(v,\cdot)\}_{v\in[-h_0,h_0]}$ of $\Gamma_y$ considered as function from $[-h_0,h_0]\times{\dtmone}^{-1}([y-h_0,y+h_0])$ to $\Rd$ is for $x=(x_1,\dots,x_d)$ given as
			\begin{equation}\label{eq:level set flow}(v,x)\!\mapsto\!\! \left(\!\!(x_1-c_1)\sqrt{1+\frac{v}{||x-c||^2}}+c_1,\dots,(x_d-c_d)\sqrt{1+\frac{v}{||x-c||^2}} +c_d\right)\!.\end{equation}
		\end{enumerate}
	\end{lemma}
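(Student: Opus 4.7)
The plan is to handle the four parts in order, since each is a direct computation leveraging the identity $\dtmone(x)=\E{\|X-x\|^2}$ stated in \eqref{eq:dtm one identity}.

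For part (1), I would expand $\|X-x\|^2$ around the mean $c=\E{X}$:
\begin{equation*}
\|X-x\|^2 = \|X-c\|^2 + 2\langle X-c,\,c-x\rangle + \|c-x\|^2.
\end{equation*}
Taking expectations, the cross term vanishes because $\E{X-c}=0$, yielding $\dtmone(x)=\|x-c\|^2+\zeta$ with $\zeta=\E{\|X-c\|^2}$, which is finite since $\X$ is compact. This is precisely \eqref{eq:dtmone rep}. Part (2) is immediate: the right-hand side of \eqref{eq:dtmone rep} is a quadratic polynomial in $x$, hence $C^\infty$. Part (3) follows from $\nabla\dtmone(x)=2(x-c)$, which vanishes exactly at $x=c=\E{X}$.

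The only part that requires real work is part (4). Recall that the canonical level set flow is the solution of the ODE
\begin{equation*}
\tfrac{\partial}{\partial v}\Phi(v,x) = \frac{\nabla \dtmone(\Phi(v,x))}{\|\nabla \dtmone(\Phi(v,x))\|^2},\qquad \Phi(0,x)=x,
\end{equation*}
since then $\tfrac{d}{dv}\dtmone(\Phi(v,x))=1$, so $\dtmone(\Phi(v,x))=\dtmone(x)+v$ and the level sets are mapped as required. (The detailed construction of $\Phi$ referenced in the excerpt will confirm this is the right vector field.) For $\dtmone$ the vector field is $(x-c)/(2\|x-c\|^2)$, which is radial from the center $c$. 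I would therefore make the ansatz $\Phi(v,x)=c+r(v)\,\tfrac{x-c}{\|x-c\|}$, reducing the ODE to $r(v)^2=\|x-c\|^2+v$, so $r(v)=\sqrt{\|x-c\|^2+v}$. Substituting back gives exactly the formula \eqref{eq:level set flow}.

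The remaining step is to verify that this flow is defined on the entire domain $[-h_0,h_0]\times{\dtmone}^{-1}([y-h_0,y+h_0])$, which is where the assumption $y-\E{\|X-\E{X}\|^2}>2h_0$ enters. For any $x$ with $\dtmone(x)\in[y-h_0,y+h_0]$ we have $\|x-c\|^2=\dtmone(x)-\zeta\ge y-h_0-\zeta>h_0$, so for $v\in[-h_0,h_0]$ the quantity $\|x-c\|^2+v\ge 0$ stays strictly positive, making the square root (and the flow) smooth and well-defined. The main obstacle is purely bookkeeping: ensuring the radicand remains positive on the stated domain so that $\Phi$ is genuinely a $C^1$-diffeomorphism, but this is delivered cleanly by the assumption placed on $h_0$.
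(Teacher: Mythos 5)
Your proposal is correct and follows essentially the same route as the paper: parts (1)--(3) are the same direct computation, and for part (4) both arguments characterize the canonical level set flow by the ODE $\partial_v\Phi=\nabla\dtmone/\lVert\nabla\dtmone\rVert^2$ with $\Phi(0,x)=x$ and confirm the closed-form expression (the paper verifies the given formula and invokes uniqueness via Hirsch--Smale, while you derive it by a radial ansatz, which amounts to the same verification). Your explicit check that the assumption $y-\E{\|X-\E{X}\|^2}>2h_0$ keeps the radicand positive on the stated domain matches the role this assumption plays in the paper's proof.
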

	In order to increase the readability of this section, the proof of Lemma \ref{lemma:dtm m=1} is postponed to Section \ref{sec:proof of lemma:dtm m=1} in the Appendix.\\
	
	\subsubsection{Discussion of assumption (\ref{eq:set level set cond}) in Condition \ref{condition}}
	To conclude this section, we consider the technical assumption \eqref{eq:set level set cond}. First of all, it is obvious (if $\dtm$ is nowhere constant) that the assumption only comes into play for $d\geq 2$. Furthermore, we observe that it is trivially fulfilled if there exists some $\epsilon>0$ such that $\Gamma_{y-\epsilon}\subset\X$, $\Gamma_{y}\subset\X$ and $\Gamma_{y+\epsilon}\subset\X$. Only if this is not the case, there might be points $y\in\mathcal{X}$ for which \eqref{eq:set level set cond} is not satisfied. However, the assumption will typically be satisfied for all points of regularity of the density $\fdtm$. To provide some intuition on this matter, we will consider the following example.
	\begin{example}[For $(\Bl0,1\Br^2,\mathcal{U}(\Bl0,1\Br^2),\|\cdot\|)$ with $m=1$ and $y=\frac{5}{12}$, \eqref{eq:set level set cond} does not hold.] 
		\label{ex:set level set ex}
		Let $\X=[0,1]^2$ and let $\muX$ stand for the uniform distribution on $\X$. In this case, using relation \eqref{eq:dtm one identity}, we obtain for $x=(x_1,x_2)\in\X$
		\[\dtmone(x)=\E{||X-x||^2} =\left(x_1-\frac{1}{2}\right)^2 + \left(x_2-\frac{1}{2}\right)^2+\frac{1}{6}.\]
		The corresponding DTM-density is supported on $[1/6,2/3]$ and it is  smooth everywhere except for $y=5/12$, where $\fdtm$ has a kink  (more details are provided in Section \ref{subsec:details ex 3} in the Appendix). The level sets $\Gamma_y$ ($y\geq 1/6$), are concentric circles centered at $(1/2,1/2)$ with radii $\sqrt{y-1/6}$. For all $y<5/12$ the level sets are fully contained in the open cube $(0,1)^2$. For all $y>5/12$, we have $\mathbb{R}^2\backslash[0,1]^2\cap\Gamma_y\neq \emptyset$, i.e., the level sets are at least partly outside of the cube $[0,1]^2$. This means that $y=5/12$ is, in a sense, a transition point. In order to check \eqref{eq:set level set cond} for $y= 5/12$, we observe that $x\in\Gamma_{\frac{5}{12}}$  implies $\indifunc{x\in \X}=1$, while for $v>0$ we find $\indifunc{\Phi(v,x)\in \X}=0$ if and only if $x$ is in one of the neighborhoods around the points $(1/2,1), (1,1/2), (1/2,0)$ and $(0,1/2)$ that correspond to the red solid lines in Figure \ref{fig:set level set cond}. Therefore, in order to evaluate integral \eqref{eq:set level set cond} we need to compute the arc lengths of the red curves in Figure \ref{fig:set level set cond} (corresponding to eight times the length of the segment $\gamma$). For this, we note that we have
		\begin{align*}
			\gamma=\beta\cdot\pi=\arccos\Big(\frac{1/2}{1/2+v}\Big)\cdot\pi.
		\end{align*}
		By a Taylor expansion, we obtain
		\begin{align*}
			0=\arccos(1)\leq\arccos\Big(\frac{1/2}{1/2+v}\Big)+\Bigg(1-\frac{1/2}{1/2+v}\Bigg)\cdot\Bigg(-\frac{1}{\sqrt{1-\frac{1/4}{(1/2+v)^2}}}\Bigg),
		\end{align*}
		where we used that the remainder term in the expansion is negative.
		Therefore, for $v\leq 1$,
		\begin{align*}
			\arccos\Big(\frac{1/2}{1/2+v}\Big)\geq\frac{v}{\sqrt{v+v^2}}\geq\frac{v}{\sqrt{2v}}.
		\end{align*}
		This yields
		\begin{align*}
			\mathcal{I}_{\mathcal{X}}(5/12;v)\geq8\cdot\pi\cdot\frac{\sqrt{v}}{\sqrt{2}}>\sqrt{v},
		\end{align*}
		which proves that for $y=5/12$ the requirement \eqref{eq:set level set cond} is not fulfilled. 
	\end{example}

	\begin{figure}
		\centering
		
		\includegraphics[width=0.35\textwidth]{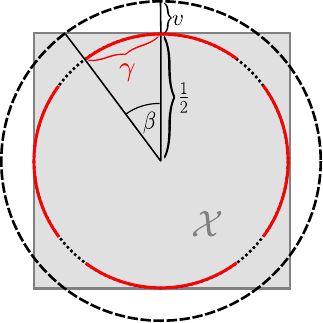}
		
		\caption{\textbf{Tangential Level Set:} Illustration of the behavior of the level sets in a neighborhood of tangential intersection point with the boundary of $\X$ in the setting of Example \ref{ex:set level set ex}.
		}\label{fig:set level set cond}
	\end{figure}

	We conclude this subsection by noting that the dimension of $\X$ heavily influences the regularity of \eqref{eq:set level set cond}. While it seems to be problematic, if $\Gamma_y$ intersects tangentially with the boundary $\partial \X$ of $\X$ for $d=2$, this is not necessarily the case for $d\geq 3$. In particular, if we consider $\X=[0,1]^3$ equipped with the uniform distribution, we find that for $y=3/4$ the level set $\Gamma_y$ tangentially touches $\partial\X$ at 6 points. However, here, it does not cause any problems. Following our considerations from Example \ref{ex:set level set ex}, one can show that condition \eqref{eq:set level set cond} holds for all points $y$ in the support of $\fdtm$ for $d\geq3$.

	\subsection{Examples of DTM-Densities}\label{sec:examples}
	In the following, we will derive $\dtm$ as well as $\fdtm$ in several simple examples explicitly and verify that in these settings Condition \ref{condition} is met almost everywhere. Since calculating $\dtm$ and $\fdtm$ explicitly is quite cumbersome (especially for $m<1$), we concentrate on one- or two-dimensional examples. In order to increase the readability of this section, we postpone the  explicit, but lengthy representations of the derived DTM-functions and densities (as well as their derivation) to Section \ref{sec:detail for examples}.\\
	
	We begin our considerations with the simplest case possible, the interval $[0,1]$ equipped with the uniform distribution. 
	\begin{example}\label{ex:first ex}
		Let $\X=[0,1]$ and let $\muX$ denote the uniform distribution on $\X$. 
		In Section \ref{subsec:details ex 1}, we derive  $\dtmvar{m}$ for general $m$. There  we also compute $\fdtmone$ (see Figure \ref{fig:dtmden} for an illustration). For $m=1$, the requirement \eqref{eq:set level set cond} does not come into play as $\X$ is one-dimensional and $\dtmone$ is nowhere constant. However, we point out that the density $\fdtmone$ is unbounded (but twice continuously differentiable in the interior of its support). In the case $m<1$ things are quite different. The function $\dtmvar{m}$ is constant on $[m/2,1-m/2]$ as all balls of radius $m/2$ centered at $x\in[m/2,1-m/2]$ have the same structure and hence the distribution of the random variable $\dtmvar{m}(X)$ has a pure point component showing that it does not admit a Lebesgue density. 
	\end{example}
	It is immediately clear that the DTM-signature can only admit a density with respect to the Lebesgue measure, if the DTM-function defined in \eqref{eq:dtm-function} is almost nowhere constant.  In the next example, we equip $\X=[0,1]$ with another probability distribution, whose density is not constant on $\X$, thereby adding structure to the metric measure space. In this case, we will find that also for $m<1$ the corresponding DTM-signature admits a Lebesgue density. 
	\begin{example}\label{ex:example 2}
		Let $\X=[0,1]$ and let $\muX$ denote the probability distribution on $[0,1]$ with density $f(x)=2x$. Let $m=0.1$. In Section \ref{subsec:details ex 2}, we derive $\dtmvar{0.1}$ explicitly and demonstrate that the random variable $\dtmvar{0.1}(X)$, $X\sim\muX$, admits a Lebesgue density in this setting (see Figure \ref{fig:dtmden} for an illustration). We observe that $\dtmvar{0.1}$ is continuously differentiable everywhere and three times continuously differentiable almost everywhere. Further, the density $\fdtmvar{0.1}$ admits one discontinuity for $y=\frac{-683}{60}+18\sqrt{\frac{2}{5}}$ and is $C^2$ almost everywhere.
	\end{example}
	We observe that the DTM-densities derived in Example \ref{ex:first ex} and Example \ref{ex:example 2} are both unbounded. This has a simple explanation. Let $\mmspaceX$ be a regular Euclidean metric measure space and denote the $d$-dimensional Lebesgue density of $\muX$ by $g_{\muX}$. Suppose that $\fdtm$ exists. Then, one can show (see e.g.\ Appendix C of \citet{weitkamp2020gromovwasserstein}) that
	\begin{equation}\label{eq:general derivative}
		\fdtm(y)=\int_{\left\{x\in\X:\dtm(x)=y\right\}}\frac{g_{\muX}(u)}{||\nabla\dtm(u)||} \,d\sigma^{d-1}(u).
	\end{equation}
	Since $d\sigma^{0}$ corresponds to integration with respect to the counting measure, the DTM-density of a one-dimensional Euclidean metric measure space is unbounded if there are $u\in\X$ with $|\nabla\dtm(u)|=0$ (this is the case in Example \ref{ex:first ex} and Example \ref{ex:example 2}). However, it is important to note that this behavior mainly occurs for one-dimensional Euclidean metric measure spaces. For higher dimensional spaces, the area (w.r.t. $d\sigma^{d-1}$) of the set $A=\{x\in\X:||\nabla\dtm(u)||=0\}$, is usually a null set. Hence, it is possible that the density $\fdtm$ defined in \eqref{eq:general derivative} remains bounded even if $A$ is non-empty (see Example \ref{ex:constant part} and Example \ref{ex:2d nonunif}).\\
	To conclude this section and in order to illustrate that the showcased regularity of the DTM-function $\dtm$ and the DTM-density $\fdtm$ does not only hold for one-dimensional settings, we consider two simple examples in $\R^2$ next. As the derivation of the family $(F_x^{-1})_{x\in\X}$ is in general very cumbersome, we restrict ourselves to the case $m=1$. 
	\addtocounter{theorem}{-3}
	\begin{example}[Continued]\label{ex:constant part}
		Recall that $\X=[0,1]^2$, $\muX$ stands for the uniform distribution on $\X$ and that $m=1$. Based on our previous considerations it is possible to derive $\fdtmone$ explicitly (see Section \ref{subsec:details ex 3} for the derivation). As illustrated in Figure \ref{fig:dtmden}, the density $\fdtmone$ is continuous. Moreover, it is twice continuously differentiable inside its support for $y\neq \frac{5}{12}$, which is also the only point where the requirements of \eqref{eq:set level set cond} are not met, as discussed previously.
	\end{example}
	\addtocounter{theorem}{2}
	
	We note that the density $\fdtmone$ derived in Example \ref{ex:constant part} is constant on $[1/6,5/12]$. This kind of behavior is also expressed when considering a disc in $\R^2$ equipped with the uniform distribution (it is easy to verify that $\fdtmone$ is a constant function in this case). It is well known that it is difficult for kernel density estimators to approximate constant pieces or a constant function. However, it is not reasonable to assume that the data stems from a uniform distribution over a compact set in many applications (such as chromatin loop analysis). More often, it is possible to assume that the data generating distribution is more concentrated in the center of the considered set. The final example of this section showcases that in such a case the corresponding DTM-signature admits a density without any constant parts even on the disk.
	\begin{example}\label{ex:2d nonunif}
		Let $\X$ denote a disk in $\R^2$ centered at $(0,0)$ with radius 1 and let $\muX$ denote probability measure with density
		\begin{equation}\label{eq:2d nonunif density}
			f(x_1,x_2)=\begin{cases}-\frac{2}{\pi}\left(x_1^2+x_2^2-1\right) & x_1^2+x_2^2 \leq 1 ,\\
				0 & \, \text{else.}\end{cases}
		\end{equation}
		In this framework, we derive $\dtmone$ and $\fdtmone$ in Section \ref{subsec:details ex 4}. We observe that the level sets $\Gamma_y$ (of $\dtmone$) are contained in $\X$ for any $y\in [1/3,4/3]$, i.e., condition \eqref{eq:set level set cond} is met for all $y\in(1/3,4/3)$ in this setting. Further, we realize that $\fdtmone$ (see Figure \ref{fig:dtmden} for an illustration) is smooth and nowhere constant on the interior of its support.
	\end{example}
	\begin{figure}
		\centering
		\begin{minipage}[c]{0.24\textwidth}
			\includegraphics[width = \textwidth,height=\textwidth]{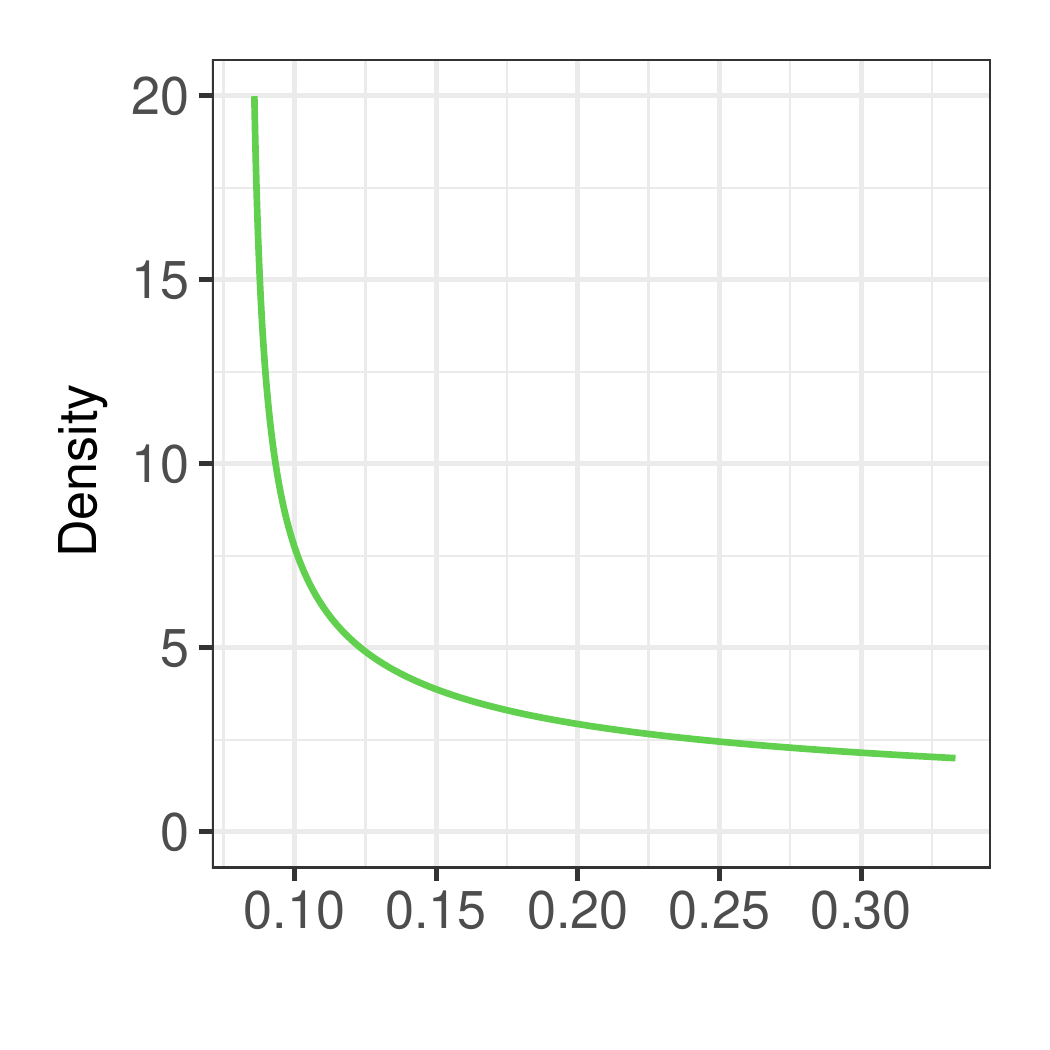}
		\end{minipage}
		\begin{minipage}[c]{0.24\textwidth}
			\includegraphics[width =\textwidth,height=\textwidth]{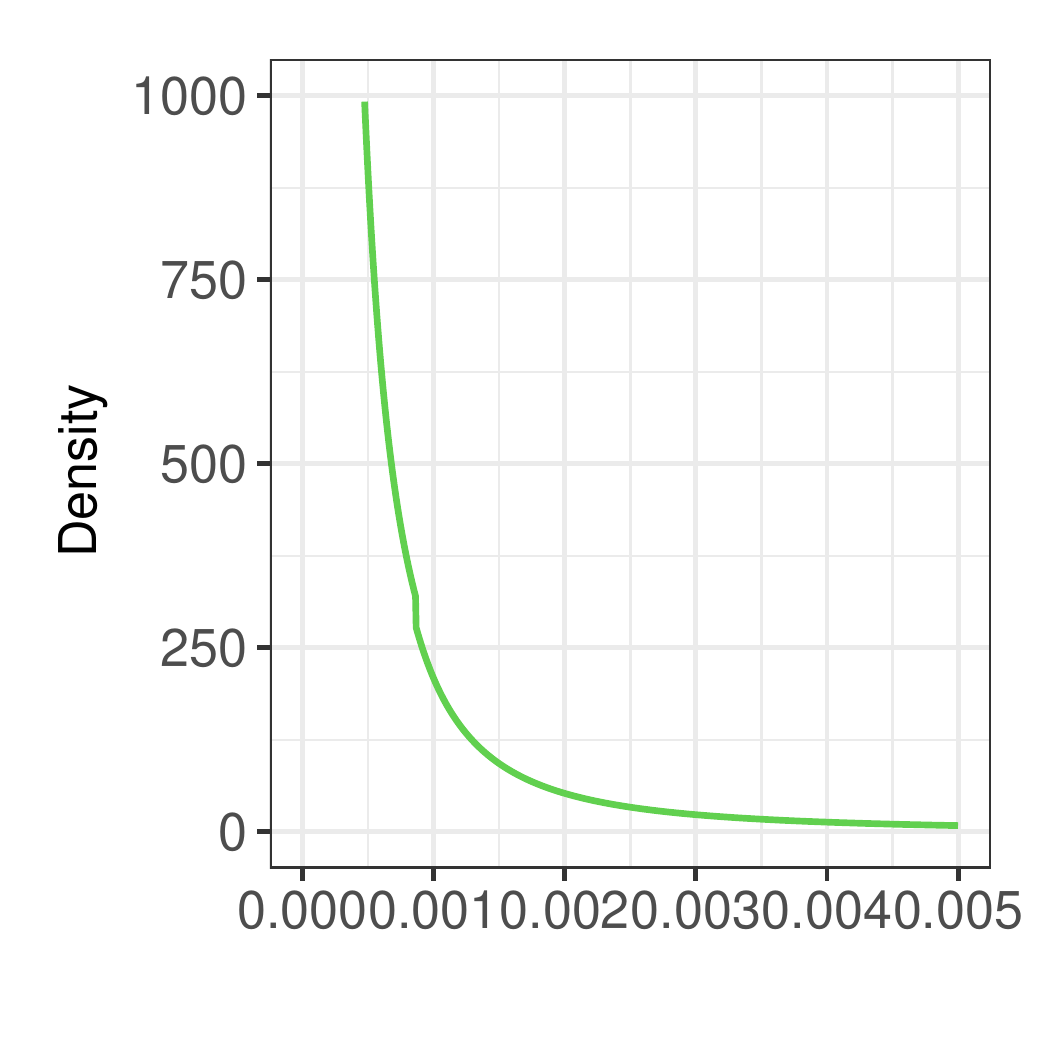}
		\end{minipage}
		\begin{minipage}[c]{0.24\textwidth}
			\includegraphics[width = \textwidth,height=\textwidth]{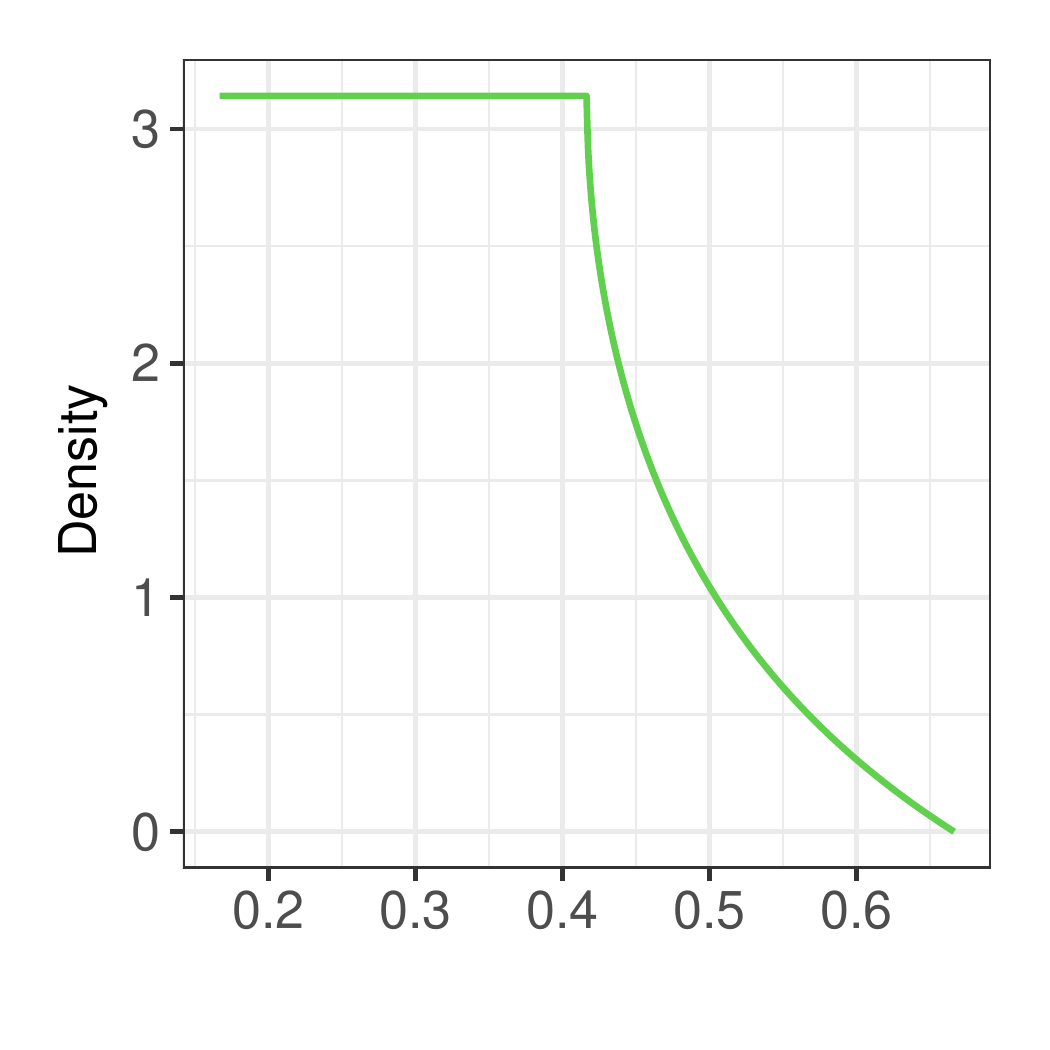}
		\end{minipage}
		\begin{minipage}[c]{0.24\textwidth}
			\includegraphics[width = \textwidth,height=\textwidth]{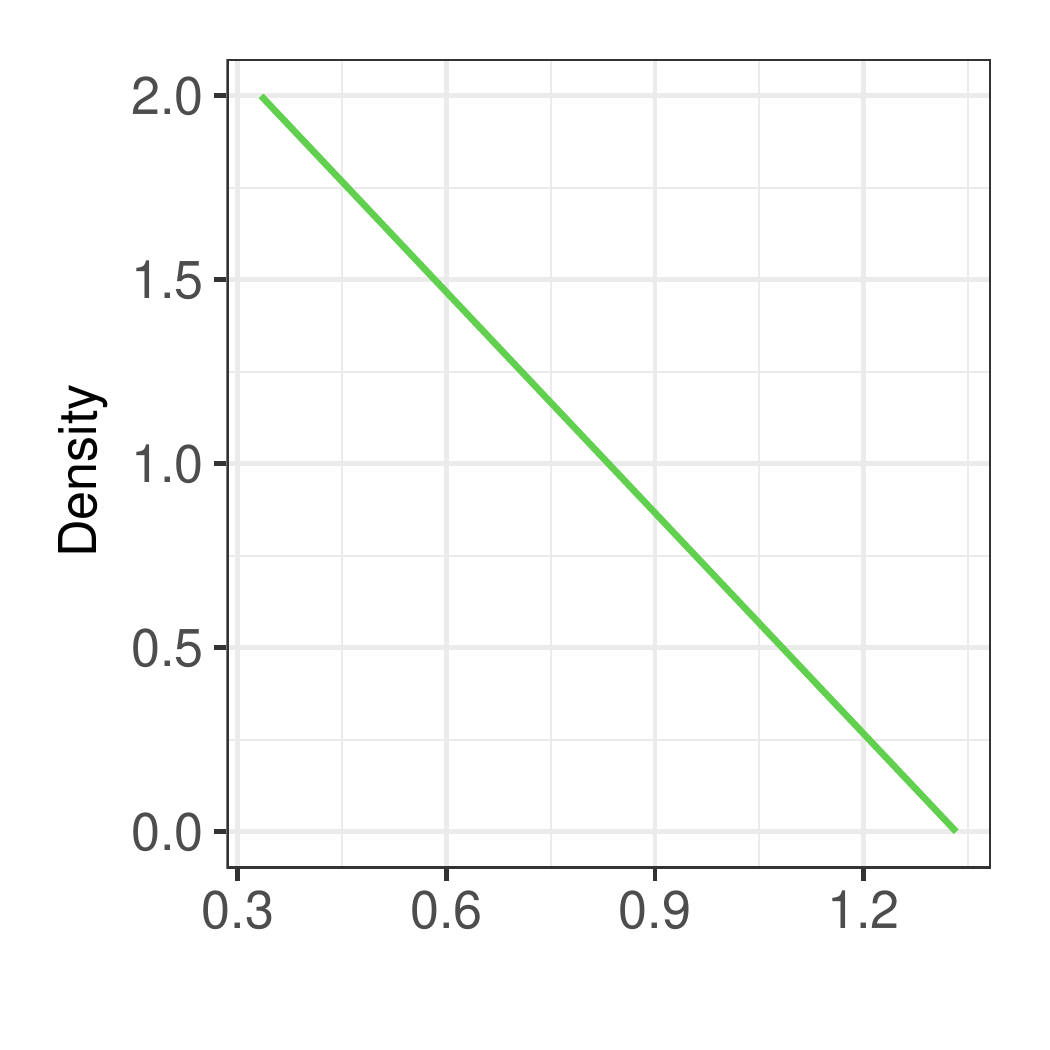}
		\end{minipage}
		\caption{\textbf{Distance-to-Measure signature:} Illustration of the densities calculated in Example \ref{ex:first ex}-\ref{ex:example 2}, Example \ref{ex:constant part} and Example \ref{ex:2d nonunif} (from left to right). 		
		}
		\label{fig:dtmden}
	\end{figure} 
	\subsection{Theoretical Results}\label{subsec:asym theo}
	In this section we study the asymptotic behavior of the kernel estimator of the DTM-density \eqref{eq:def of kde}. Clearly, standard methodology implies the following pointwise central limit theorem for the kernel estimator $\semikde$ defined in \eqref{eq:def of semikde}. 
	\begin{theorem}\label{thm:pointwise limit}
		Assume Setting \ref{setting} and suppose that $\dtm(X_1)$ admits a density that is twice continuously differentiable in a neighborhood of $y$.  Suppose further that the kernel $K:\R\to \R_+$, is an even, twice continuously differentiable function with $\supp(K)=[-1,1]$.
		Then, it holds for $n\to\infty$, $nh\to\infty$ and $h=o\left(n^{-1/5}\right)$ that
		\[\sqrt{nh}\left(\semikde(y)-\fdtm(y)\right)\Rightarrow N\left(0,\fdtm(y)\int\!K^2(u)\,du\right).\]
	\end{theorem}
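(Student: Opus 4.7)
The plan is to recognize that, under Setting \ref{setting}, the random variables $Y_i := \dtm(X_i)$, $i=1,\ldots,n$, are genuinely i.i.d.\ with common density $\fdtm$, since $\dtm:\Rd\to\R$ is a deterministic measurable map and the $X_i$ are i.i.d. Hence $\semikde$ is just a classical Rosenblatt--Parzen kernel density estimator for a one-dimensional i.i.d.\ sample, and the conclusion follows from the standard CLT for kernel density estimators. The proof reduces to the usual bias/variance split
\[
\sqrt{nh}\left(\semikde(y)-\fdtm(y)\right) = \sqrt{nh}\left(\semikde(y)-\E{\semikde(y)}\right) + \sqrt{nh}\left(\E{\semikde(y)}-\fdtm(y)\right),
\]
and I would treat the two summands separately.

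For the bias term, I would write $\E{\semikde(y)} = \int K(u)\fdtm(y+hu)\,du$ after the substitution $u = (t-y)/h$. A second-order Taylor expansion of $\fdtm$ around $y$, valid on $(y-\epsilon,y+\epsilon)$ by the $C^2$ assumption (and thus valid on the support of $K((\cdot-y)/h)$ for $h$ sufficiently small since $\supp(K)=[-1,1]$), together with the evenness of $K$ (so the first-order term vanishes), yields $\E{\semikde(y)} - \fdtm(y) = \tfrac12 h^2 \fdtm''(y)\int u^2 K(u)\,du + o(h^2)$. Multiplying by $\sqrt{nh}$ and using $h = o(n^{-1/5})$ gives $\sqrt{nh}\,h^2 = o(\sqrt{n h^{5}}) = o(1)$, so the bias contribution is asymptotically negligible.

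For the variance term, I would apply a Lindeberg--Lyapunov CLT to the i.i.d.\ sum
\[
\sqrt{nh}\left(\semikde(y)-\E{\semikde(y)}\right) = \frac{1}{\sqrt{n}}\sum_{i=1}^n \frac{1}{\sqrt{h}}\left\{K\!\left(\tfrac{Y_i-y}{h}\right) - \E{K\!\left(\tfrac{Y_i-y}{h}\right)}\right\}.
\]
A direct substitution shows that $\Var{h^{-1/2}K((Y_i-y)/h)} = \int K^2(u)\fdtm(y+hu)\,du - h\left(\int K(u)\fdtm(y+hu)\,du\right)^2 \to \fdtm(y)\int K^2(u)\,du$ by continuity of $\fdtm$ at $y$. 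The Lyapunov condition is easy to verify: for any $\delta>0$, $\E{|h^{-1/2}K((Y_1-y)/h)|^{2+\delta}} = O(h^{-\delta/2})$, so that the Lyapunov ratio is $O((nh^{\delta/(2+\delta)})^{-\delta/2})$, which tends to zero since $nh\to\infty$.

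I don't anticipate a genuine obstacle here: this is the classical kernel density estimator CLT and every step is standard bookkeeping once the observation that the $\dtm(X_i)$ are i.i.d.\ is made. The theorem is included as a baseline whose analogue for the plug-in estimator $\kde$ (based on dependent observations) is the real substance of the paper; accordingly, I would keep the exposition short and cite a standard reference (e.g.\ Wand and Jones) for the detailed computations of the bias and variance asymptotics of kernel density estimators from i.i.d.\ data.
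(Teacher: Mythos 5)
Your proposal is correct and is essentially the argument the paper has in mind: the paper offers no separate proof of Theorem \ref{thm:pointwise limit}, remarking only that it follows from standard methodology once one observes that $\dtm(X_1),\dots,\dtm(X_n)$ are i.i.d.\ with density $\fdtm$, which is precisely the bias/variance split plus Lyapunov CLT that you spell out. One cosmetic remark: the Lyapunov ratio is of order $(nh)^{-\delta/2}$ rather than the exponent you display, but it tends to zero under $nh\to\infty$ either way, so your conclusion stands.
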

	Surprisingly perhaps, despite the complicated stochastic dependence of the random variables $\empdtm(X_i)$, asymptotically, $\semikde$ and $\kde$ behave equivalently in the following sense.
	\begin{theorem}\label{thm:kde limit}
		Assume Setting \ref{setting} and let Condition \ref{condition} hold. Then, it holds for $n\to\infty$, $nh\to\infty$ and $h=o\left(n^{-1/5}\right)$ that
		\[\sqrt{nh}\left(\kde(y)-\fdtm(y)\right)\Rightarrow N\left(0,\fdtm(y)\int\!K^2(u)\,du\right).\]
	\end{theorem}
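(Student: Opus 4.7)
The plan is to leverage Theorem~\ref{thm:pointwise limit}, which already establishes the central limit theorem for the oracle estimator $\semikde$. It then suffices to show the plug-in error is asymptotically negligible, namely
\begin{equation*}
\sqrt{nh}\bigl(\kde(y)-\semikde(y)\bigr)=o_P(1).
\end{equation*}
Since $K\in C^2$, a second-order Taylor expansion of $K$ about $(\dtm(X_i)-y)/h$ gives
\begin{equation*}
\kde(y)-\semikde(y) \;=\; \frac{1}{nh^2}\sum_{i=1}^n K'\!\Bigl(\tfrac{\dtm(X_i)-y}{h}\Bigr)\bigl(\empdtm(X_i)-\dtm(X_i)\bigr) \;+\; \frac{1}{nh}\sum_{i=1}^n R_i,
\end{equation*}
where the quadratic Taylor remainders $R_i$ are supported on indices with $\dtm(X_i)$ within $O(h)$ of $y$ (since $K''$ is compactly supported and $\empdtm(X_i)-\dtm(X_i)$ is small). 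Using the uniform modulus bound \eqref{eq:cond mod of cont} together with concentration of the empirical quantiles (cf.\ \citet{chazal2016rates,chazal2017robust}), the rescaled remainder can be shown to be $o_P(1)$ under the bandwidth hypotheses.

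The main work is to control the linear contribution
\begin{equation*}
T_n \;=\; \frac{1}{\sqrt{nh}\,h}\sum_{i=1}^n K'\!\Bigl(\tfrac{\dtm(X_i)-y}{h}\Bigr)\bigl(\empdtm(X_i)-\dtm(X_i)\bigr).
\end{equation*}
A Bahadur-type linearization of $\widehat{F}_{X_i,n}^{-1}$ around $F_{X_i}^{-1}$ expresses $\empdtm(X_i)-\dtm(X_i)=\tfrac{1}{n}\sum_{j\neq i}g(X_i,X_j)+r_{i,n}$, where
\begin{equation*}
g(x,z) \;=\; -\frac{1}{m}\int_0^m \frac{\indifunc{\|z-x\|^2 \le F_x^{-1}(u)}-u}{q_x\!\bigl(F_x^{-1}(u)\bigr)}\,\mathrm{d}u,
\end{equation*}
($q_x$ denoting the density of $\|X-x\|^2$) has the one-sided centering $\E{g(x,X)}=0$ for every $x\in\X$ by construction, and $r_{i,n}$ is a negligible Bahadur remainder. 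Thus $T_n$ is, up to lower-order terms, a non-symmetric second-order U-statistic with kernel $G_h(x,z)=K'((\dtm(x)-y)/h)\,g(x,z)$. In its Hoeffding decomposition, the projection onto the first argument vanishes identically thanks to the centering, so only
\begin{equation*}
\psi_h(z) \;:=\; \E{K'\!\bigl((\dtm(X)-y)/h\bigr)\,g(X,z)}
\end{equation*}
and the canonical degenerate part survive. By the coarea formula and $C^{2,1}$-regularity of $\dtm$ near $\Gamma_y$,
\begin{equation*}
\psi_h(z) \;=\; \int K'\!\Bigl(\tfrac{t-y}{h}\Bigr)\Psi(t,z)\,\mathrm{d}t, \qquad \Psi(t,z)\;=\;\int_{\Gamma_t\cap\X} g(x,z)\,\frac{g_{\muX}(x)}{\|\nabla\dtm(x)\|}\,\mathrm{d}\sigma^{d-1}(x).
\end{equation*}
Condition~\ref{condition}, and in particular \eqref{eq:set level set cond}, guarantees that $t\mapsto\Psi(t,z)$ is Lipschitz (indeed differentiable) at $t=y$ uniformly in $z$. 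Since $K$ is even, $\int K'=0$, and a change of variables then forces $\psi_h(z)=O(h^2)$. The i.i.d.\ linear part of $T_n$ therefore has variance $O(h^4)/h^3=O(h)=o(1)$, while the fully degenerate part of the U-statistic has variance $O(1/(nh))=o(1)$ by a standard second-moment estimate, using that $K'((\dtm(x)-y)/h)^2$ localizes to an $h$-neighborhood of $\Gamma_y$ that carries mass of order $h$.

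The principal obstacle is precisely the Lipschitz-in-$t$ regularity of the level-set integral $\Psi(t,z)$ uniformly in $z$. This is the role of the assumption \eqref{eq:set level set cond}: were $\Gamma_t$ to intersect $\partial\X$ non-tangentially in a way that forced $\Gamma_t\cap\X$ to vary with non-Lipschitz speed across $y$, as in Example~\ref{ex:set level set ex}, the leading-order cancellation $\int K'\cdot\Psi(y)\,\mathrm{d}t=0$ would break and $\psi_h(z)$ would drop to only $O(h)$, producing a non-vanishing limiting variance in $T_n$. A secondary technical point is the non-smooth dependence of $g$ on the quantile process, which is absorbed by the uniform H\"older bound \eqref{eq:cond mod of cont} with exponent $1/b$ and $b<5$; the same bound controls the Bahadur remainders $r_{i,n}$ and the quadratic Taylor remainders $R_i$ after the $\sqrt{nh}$-rescaling.
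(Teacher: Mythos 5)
Your proposal follows, in substance, the same route as the paper's proof: Taylor expansion of $K$, linearization of $\empdtm(X_i)-\dtm(X_i)$ into a sum over the other sample points so that the linear term becomes a second-order U-statistic, Hoeffding decomposition in which the projection onto the argument carrying $K'$ vanishes by centering, and an $O(h^2)$ bound on the surviving projection obtained from the coarea formula, the Lipschitz behaviour of the level-set integrals guaranteed by \eqref{eq:set level set cond}, and $\int K'(u)\,du=0$; this is exactly the content of Lemmas \ref{lemma:stat rewrite}--\ref{lemma:limit of Un} and Theorem \ref{thm:h2bound}. Routing the conclusion through Theorem \ref{thm:pointwise limit} instead of applying Lyapunov's CLT directly to the H\'ajek projection, as the paper does, is an equivalent packaging.

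Two points need repair. First, the Bahadur-type linearization with $q_x\bigl(F_x^{-1}(u)\bigr)$ in the denominator is both unnecessary and not supported by the assumptions: Setting \ref{setting} and Condition \ref{condition} do not give a density of $\norm{X-x}^2$ bounded away from zero (for $d\ge 3$ it vanishes at $t=0$, and $g_{\muX}$ may vanish at points of $\X$), and a uniform-in-$x$ control of the Bahadur remainders $r_{i,n}$ at the rate required after the $1/(\sqrt{nh}\,h)$ rescaling is nontrivial and is only asserted. The paper avoids this by the exact identity of \citet{chazal2017robust}, $\empdtm(x)-\dtm(x)=A_n(x)+R_n(x)$ with $A_n(x)=\frac1m\int_0^{F_x^{-1}(m)}\bigl(F_x(t)-\widehat{F}_{x,n}(t)\bigr)\,dt$ already exactly linear in the empirical measure and $R_n$ a boundary term at quantile level $m$, which is controlled via \eqref{eq:cond mod of cont} (Lemma \ref{lemma:stat rewrite}); note that the substitution $u=F_x(t)$ turns your kernel $g(x,z)$ into precisely this $A_n$-kernel, so the density $q_x$ cancels and your ``Bahadur remainder'' is nothing but $R_n$ — you should argue via the exact identity rather than an asymptotic quantile expansion. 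Second, the variance of the degenerate part is overstated: with the localization $\E{K'\bigl((\dtm(X)-y)/h\bigr)^2}=O(h)$ and $g$ bounded (but not small), the second-moment bound gives order $1/(nh^2)$ after the $\sqrt{nh}$-rescaling of the kernel estimator, not $1/(nh)$; this matches the $\mathcal{O}_P\bigl(1/(nh^2)\bigr)$ the paper obtains for this term and is disposed of there under its bandwidth conditions, so your claimed rate should be corrected accordingly. Neither point changes the architecture of the argument, which is the paper's.
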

	As the  proof of Theorem \ref{thm:kde limit} is lengthy and quite technical, it has been deferred to Appendix \ref{sec:proof of thm:kde limit}. There, we will write the density estimator $\kde$ as a $U$-statistic plus remainder terms as previously discussed at the end of Section \ref{sec:proposed approach}. Then, using a Hoeffding decomposition, the dependencies can be handled. However, showing that the remainder terms vanish is not trivial and requires the application of some tools from geometric measure theory.
	
	\section{Simulations}\label{sec:simulations}
	In the following, we investigate the finite sample behavior of $\kde$ in Monte Carlo simulations. To this end, we illustrate the pointwise limit derived in Theorem \ref{thm:kde limit} in the setting of Example \ref{ex:2d nonunif} and exemplarily highlight the  potential of $\kde$ to discriminate between different Euclidean metric measure spaces. All simulations were performed in $\mathsf{R}$ (\citet{Rbasicversion}). An R package, that implements the calculation of $\kde$ as well as some basic analytical tools, is available at \url{https://github.com/cweitkamp3/DTMdemo}.
	\subsection{Pointwise Limit} We start with the illustration of Theorem \ref{thm:pointwise limit}. To this end, we consider the Euclidean metric measure space $\mmspaceX$ from Example \ref{ex:2d nonunif}. Recall that in this setting, $\X$ denotes a disk in $\R^2$ centered at $(0,0)$ with radius $1$ and that $\muX$ denotes the probability measure with density
	\[f(x_1,x_2)=\begin{cases}-\frac{2}{\pi}\left(x_1^2+x_2^2-1\right) & x_1^2+x_2^2 \leq 1 ,\\
		0 & \, \text{else.}\end{cases}\]
	Now, we choose $m=1$ and consider
	\[\kdeone(y)=\frac{1}{nh_1}\sum_{i=1}^n K_{Bi}\left(\frac{\empdtm(X_i)-y}{h_1}\right),\]
	where $K_{Bi}$ denotes the Biweight kernel, i.e.,
	\begin{equation}
		K_{Bi}(u)= \begin{cases}\frac{15}{16}\left(1-u^2\right)^2 & |u| \leq 1 ,\\
			0 & \, \text{else.}\end{cases} 
	\end{equation}
	Since we have calculated $\dtmone$ explicitly (see Eq.\ \ref{eq:ex4dtm}), it is of interest to compare the behavior of $\kdeone(y)$ to the one of
	\[\semikdeone(y)=\frac{1}{nh_2}\sum_{i=1}^n K_{Bi}\left(\frac{\dtmone(X_i)-y}{h_2}\right).\]
	As discussed previously, $\semikdeone$  a kernel density estimator based on independent data, whose limit behavior is well understood (see Theorem \ref{thm:pointwise limit}). Nevertheless, for $y\in(1/3,4/3)$, $\semikdeone(y)$ and $\kdeone(y)$ admit the same asymptotic behavior according to Theorem \ref{thm:kde limit}, whose requirements can be easily checked in this setting (see Example \ref{ex:2d nonunif}). In order to illustrate this, we generate two independent samples $\{X_i\}_{i=1}^n$ and $\{X'_i\}_{i=1}^n$ of $\muX$ and calculate $\Delta_n=\{\empdtmone(X_i)\}_{i=1}^n$ as well as $D_n=\{\dtmone(X_i)\}_{i=1}^n$ for $n=50,500, 2500,5000$. We set 
	$$h_1=(1.06\min \{s(\Delta_n),\text{IQR}(\Delta_n)/1.34)\})^{5/4}n^{-1/4}$$
	and 
	$$h_2=(1.06\min \{s(D_n),\text{IQR}(D_n)/1.34)\})^{5/4}n^{-1/4},$$
	where $s$ is the usual sample standard deviation and IQR denotes the inter quartile range. Based on $\Delta_n$ and $D_n$, we choose a central value of $y$ and calculate
	\begin{equation}\label{eq:emp and semi emp pointwise limits}\sqrt{nh_1}(\kdeone(y)-\fdtmone(y))\text{ and }\sqrt{nh_2}(\semikdeone(y)-\fdtmone(y)).\end{equation}
	For each $n$, we repeat this process 5,000 times. The finite sample distributions of the quantities defined in \eqref{eq:emp and semi emp pointwise limits} are compared to their theoretical normal counter part in Figure \ref{fig:pointwiseclt} (exemplarily for the specific choice of $y=0.7$). The kernel density estimators displayed  highlight that the asymptotic behavior of $\kdeone(y)$ (red) matches that of $\semikdeone(y)$ (green). Further, we observe that even for small samples sizes both finite sample distributions strongly resemble their theoretical normal limit distribution (blue).
	\begin{figure}
		\centering
		\begin{minipage}[c]{0.24\textwidth}
			\centering
			\small{\quad$n=50$}
			\includegraphics[width = \textwidth,height=\textwidth]{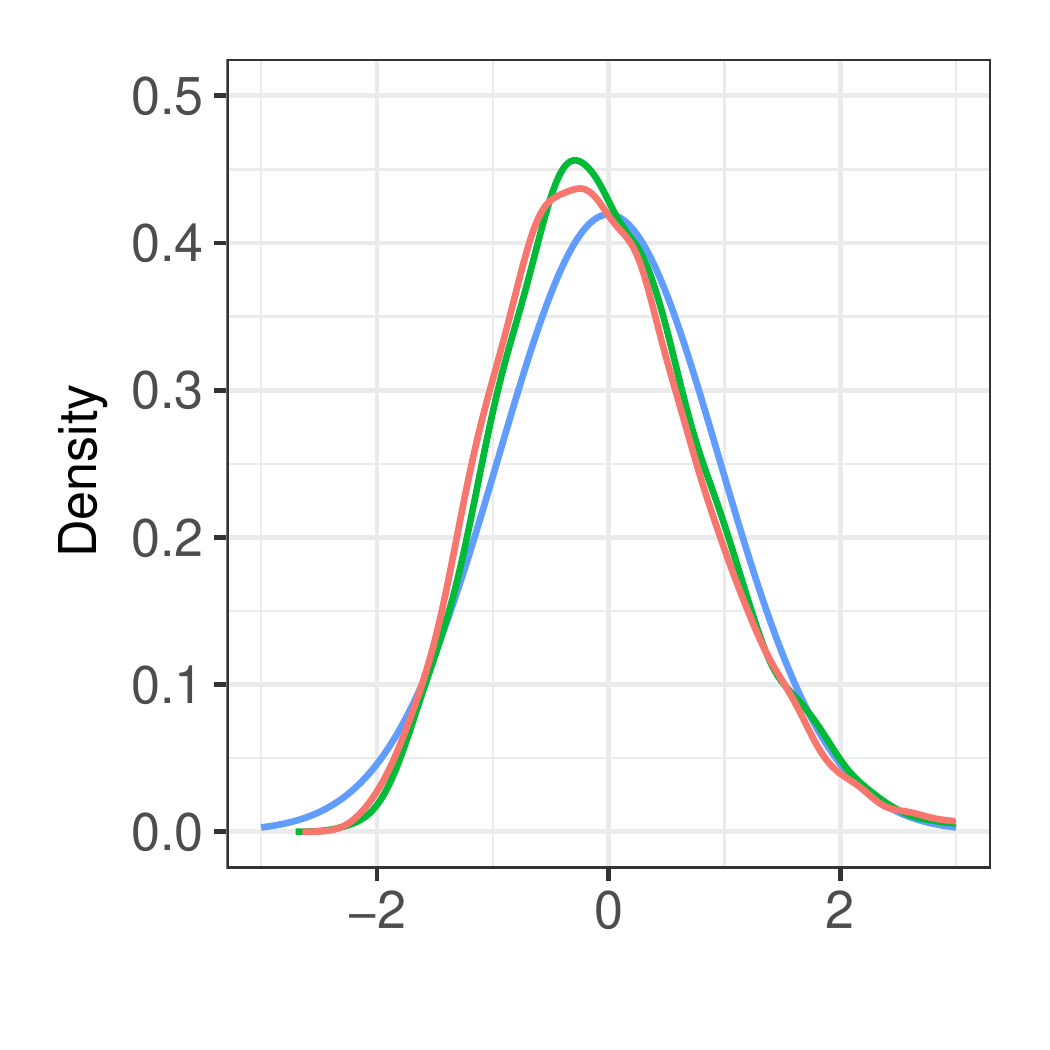}\\
		\end{minipage}
		\begin{minipage}[c]{0.24\textwidth}
			\centering
			\small{\quad$n=500$}
			\includegraphics[width = \textwidth,height=\textwidth]{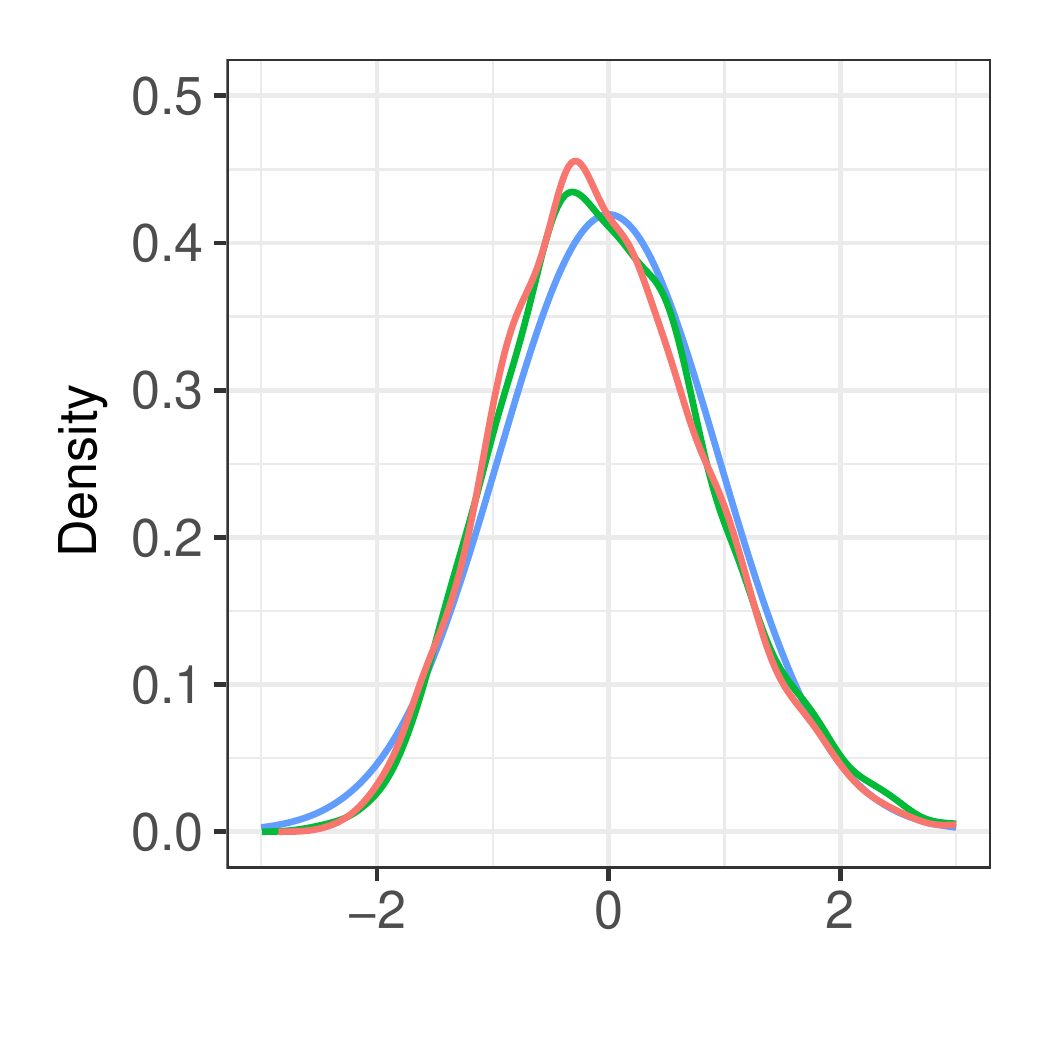}\\
		\end{minipage}
		\begin{minipage}[8]{0.24\textwidth}
			\centering
			\small{\quad$n=2500$}
			\includegraphics[width = \textwidth,height=\textwidth]{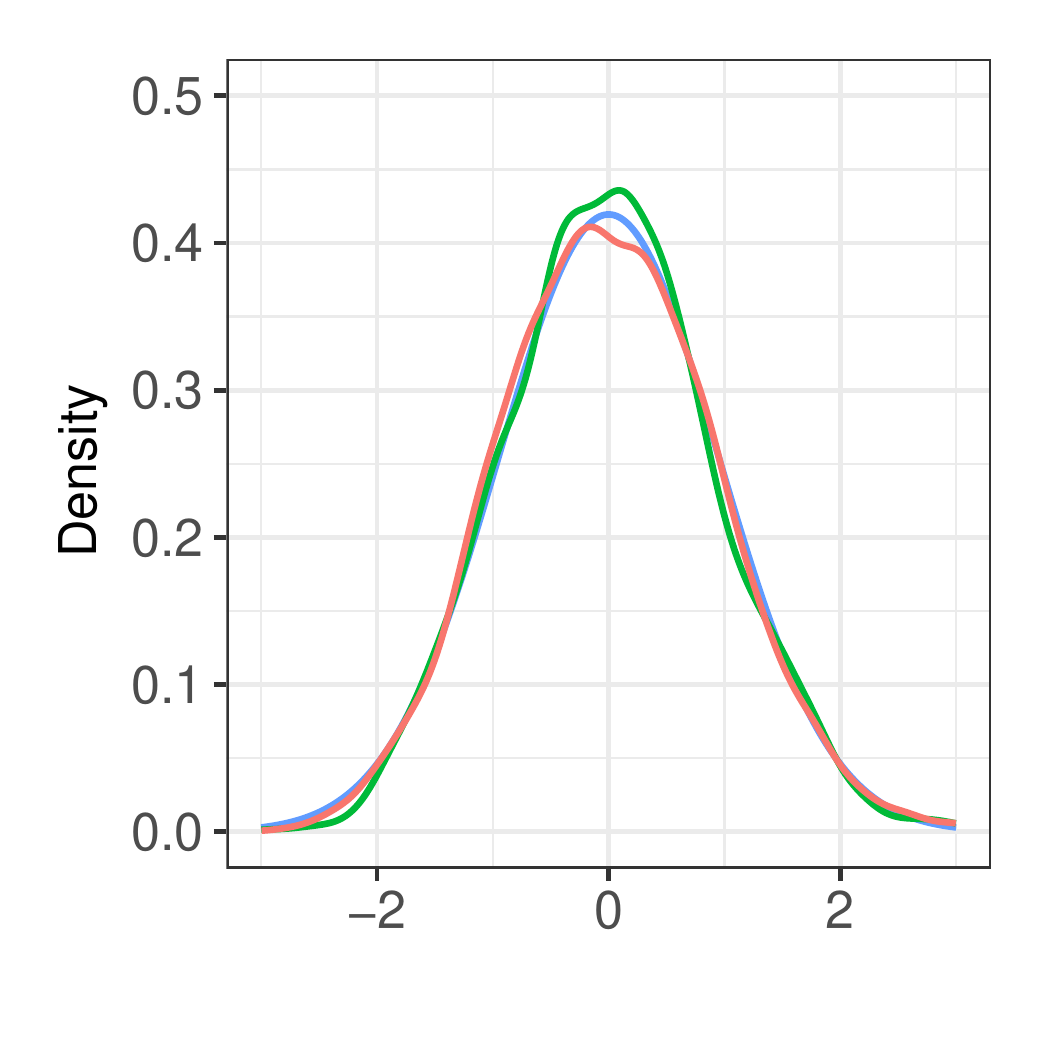}\\
		\end{minipage}
		\begin{minipage}[8]{0.24\textwidth}
			\centering
			\small{\quad$n=5000$}
			\includegraphics[width = \textwidth,height=\textwidth]{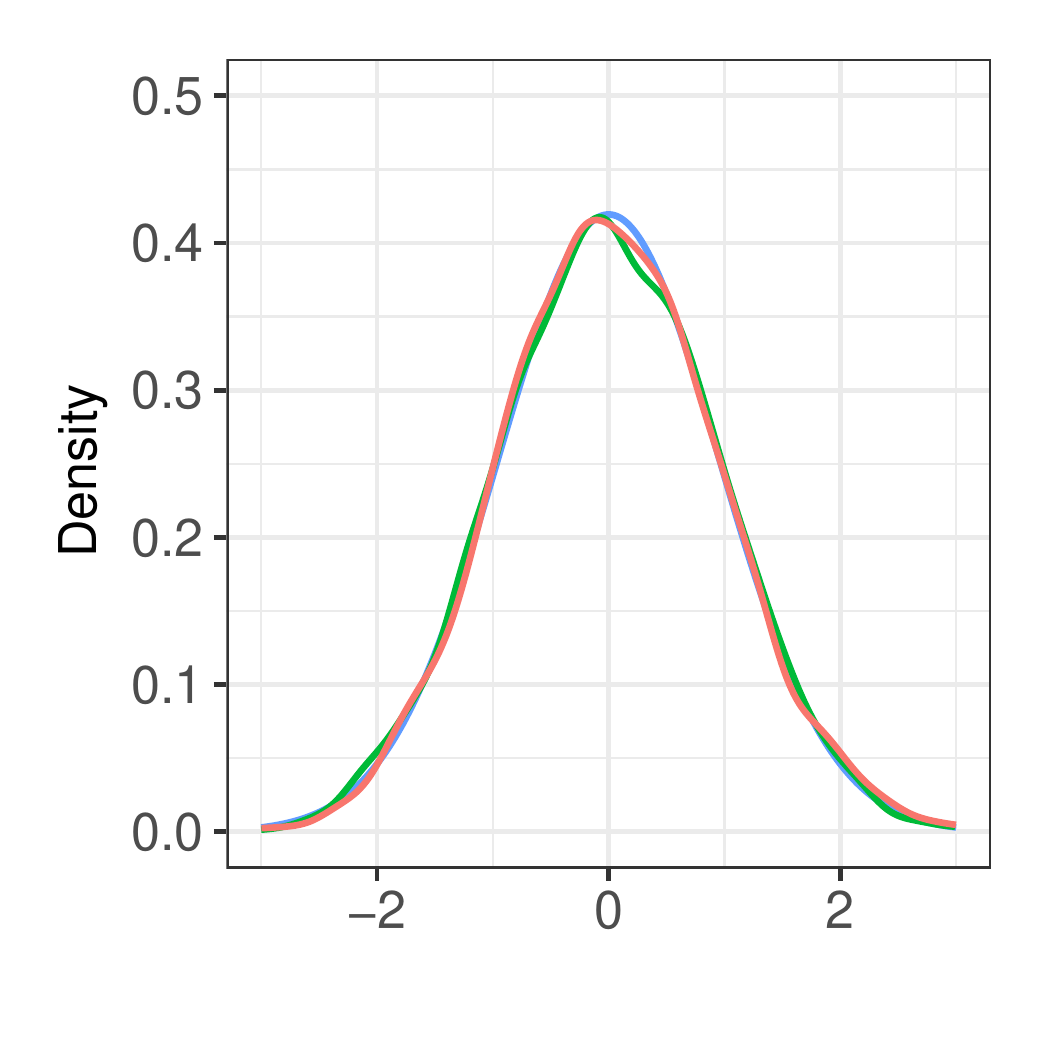}\\
		\end{minipage}
		\caption{\textbf{Pointwise limit distribution:} Kernel density estimators of $\sqrt{nh_1}(\kdeone(0.7)-\fdtmone(0.7))$ (in red) and $\sqrt{nh_2}(\semikdeone(0.7)-\fdtmone(0.7))$ (in green) for $n=50,500,2500,5000$ (from left to right, sample size 5,000) and the normal limiting density (blue).
		}\label{fig:pointwiseclt}
	\end{figure}
	
	\subsection{Discriminating Properties}\label{subsec:disc props}
	In the remainder of this section, we will showcase empirically the potential of the DTM-signature for discriminating between different Euclidean metric measure spaces. To this end, let $\mu_{\mathcal{Y}_1}$ stand for the uniform distribution on a 3D-pentagon (inner pentagon side length: 1, Euclidean distance between inner and outer
	pentagon: 0.4, height: 0.4) and let $\mu_{\mathcal{Y}_7}$ denote the uniform distribution on a torus (center radius: 1.169, tube radius: 0.2) with the same center and orientation (see the plots for $t_1=0$ and $t_7=1$ in Figure \ref{fig:geodesic}). In order to interpolate between these measures, let $\Pi_{\mu_{\mathcal{Y}_1}}^{\mu_{\mathcal{Y}_7}}(t)$, $t\in[0,1]$, denote the 2-Wasserstein geodesic between $\mu_{\mathcal{Y}_1}$ and $\mu_{\mathcal{Y}_7}$ (see e.g.\ \citet[Sec. 5.4]{santambrogio2015optimal} for a formal definition). Figure \ref{fig:geodesic} displays the Euclidean metric measure spaces $\mathcal{Y}_i$, $1\leq i\leq 7$, corresponding to $\mu_{\mathcal{Y}_i}=\Pi_{\mu_{\mathcal{Y}_1}}^{\mu_{\mathcal{Y}_7}}(t_i)$ for $t_i\in\{0,0.1,0.2,0.4,0.6,0.8,1\}$ (the geodesic has been approximated discretely based on 40,000 points with the $\mathsf{WSGeometry}$-package \citep{WSgeom}).
	
	\begin{figure}
		\centering
		\begin{minipage}[c]{0.13\textwidth}
			\centering
			\includegraphics[width = \textwidth,height=\textwidth]{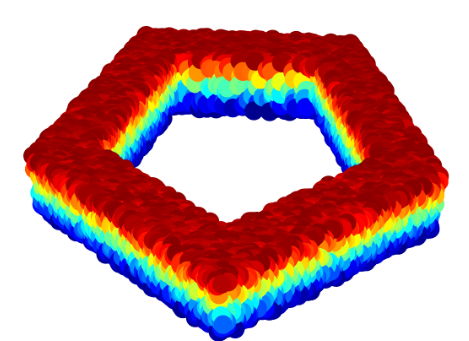}\\
			\vspace*{1mm}
			{~~$t_1=0$}
		\end{minipage}
		\begin{minipage}[c]{0.13\textwidth}
			\centering
			\includegraphics[width = \textwidth,height=\textwidth]{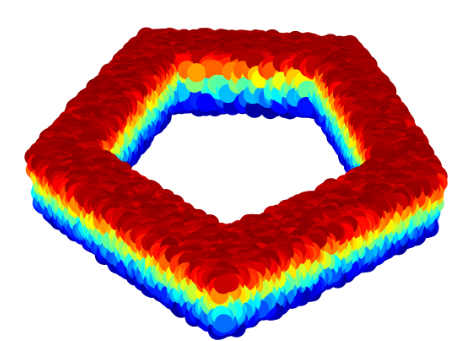}\\
			\vspace*{1mm}
			{~~$t_2=0.1$}
		\end{minipage}
		\begin{minipage}[8]{0.13\textwidth}
			\centering
			\includegraphics[width = \textwidth,height=\textwidth]{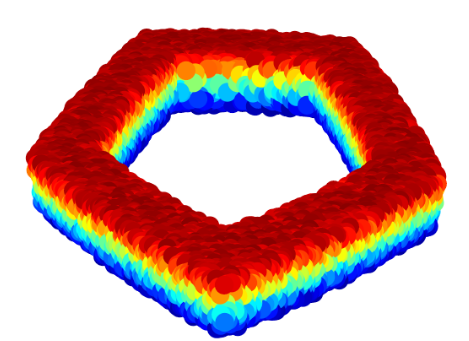}\\
			\vspace*{1mm}{~~$t_3=0.2$}
		\end{minipage}
		\begin{minipage}[c]{0.13\textwidth}
			\centering
			\includegraphics[width = \textwidth,height=\textwidth]{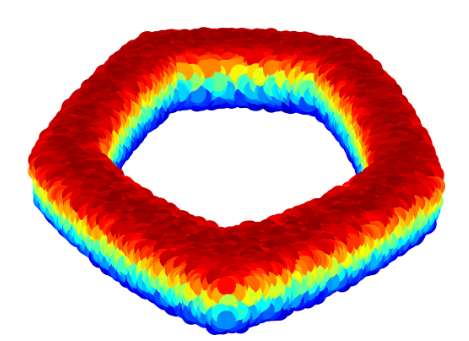}\\
			\vspace*{1mm}	{~~$t_4=0.4$}
		\end{minipage}
		\begin{minipage}[c]{0.13\textwidth}
			\centering
			\includegraphics[width = \textwidth,height=\textwidth]{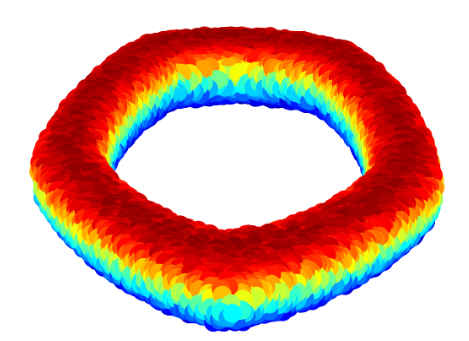}\\
			\vspace*{1mm}{~~$t_5=0.6$}
		\end{minipage}
		\begin{minipage}[c]{0.13\textwidth}
			\centering
			\includegraphics[width = \textwidth,height=\textwidth]{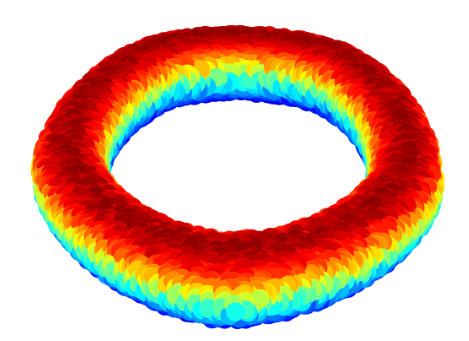}\\
			\vspace*{1mm}{~~$t_6=0.8$}
		\end{minipage}
		\begin{minipage}[c]{0.13\textwidth}
			\centering
			\includegraphics[width = \textwidth,height=\textwidth]{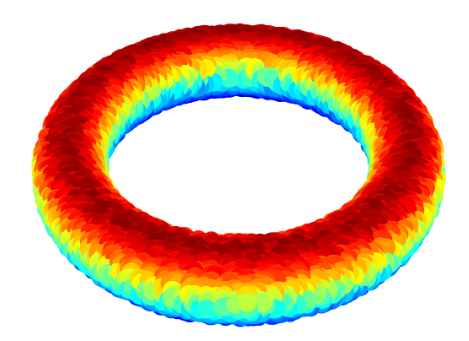}\\
			\vspace*{1mm}{~~$t_7=1$}
		\end{minipage}
		\caption{\textbf{Metric measure spaces:} Graphical illustration of the metric measure spaces $\left\lbrace\mathcal{Y}_i \right\rbrace_{i=1}^7$.	 
		}\label{fig:geodesic}
	\end{figure}
	
	In this example, we are not interested in only finding local changes, but we want to distinguish between Euclidean metric measure spaces that differ globally. Hence, $m=1$ seems to be the most reasonable choice. At the end of this section, we will illustrate the influence of the parameter $m$ in the present setting. We draw independent samples of size $n$ from $\mu_{\mathcal{Y}_i}$, denoted as $\{Y_{j,n,i}\}_{j=1}^n$, and calculate $\Delta_{n,i}=\{\empdtmone(Y_{j,n,i})\}_{j=1}^n$ and $\kdeyvar{i}{1}$ based on each of these samples for $1\leq i\leq 7$ and $n=500, 2500,5000,10000$ (Biweight kernel, $h_i=1.06\min \{s(\Delta_n),\text{IQR}(\Delta_n)/1.34)\}n^{-1/5})$. We repeat this procedure for each $i$ and $n$ 10 times and display the resulting kernel density estimators in the upper row of Figure \ref{fig:dtm dsicrimination}. While it is not possible to reliably distinguish between the realizations of $\kdeyvar{1}{1}$ (blue-green), $\kdeyvar{2}{1}$ (orange) and $\kdeyvar{3}{1}$ (blue) by eye for $n=500$, this is very simple for $n\geq 2500$. Now that we have estimated the densities, we can choose a suitable notion of distance between densities (e.g.\ the $L^1$-distance) and perform a linkage clustering in order to showcase that the illustrations in the upper row are not deceptive and that it is indeed possible to discriminate between the Euclidean metric measure spaces considered based on the kernel density estimators of the respective DTM-densities. To this end, we calculate the $L_1$-distance between the kernel density estimators considered and perform an average linkage clustering on the resulting distance matrix for each $n$. The results are showcased in the lower row of Figure \ref{fig:dtm dsicrimination}. The average linkage clustering confirms our previous observations.

	\begin{figure}
		\centering
		\begin{minipage}[c]{0.22\textwidth}
			\centering
			\scriptsize{\qquad$n=500, m=1$}
			\includegraphics[width = \textwidth,height=\textwidth]{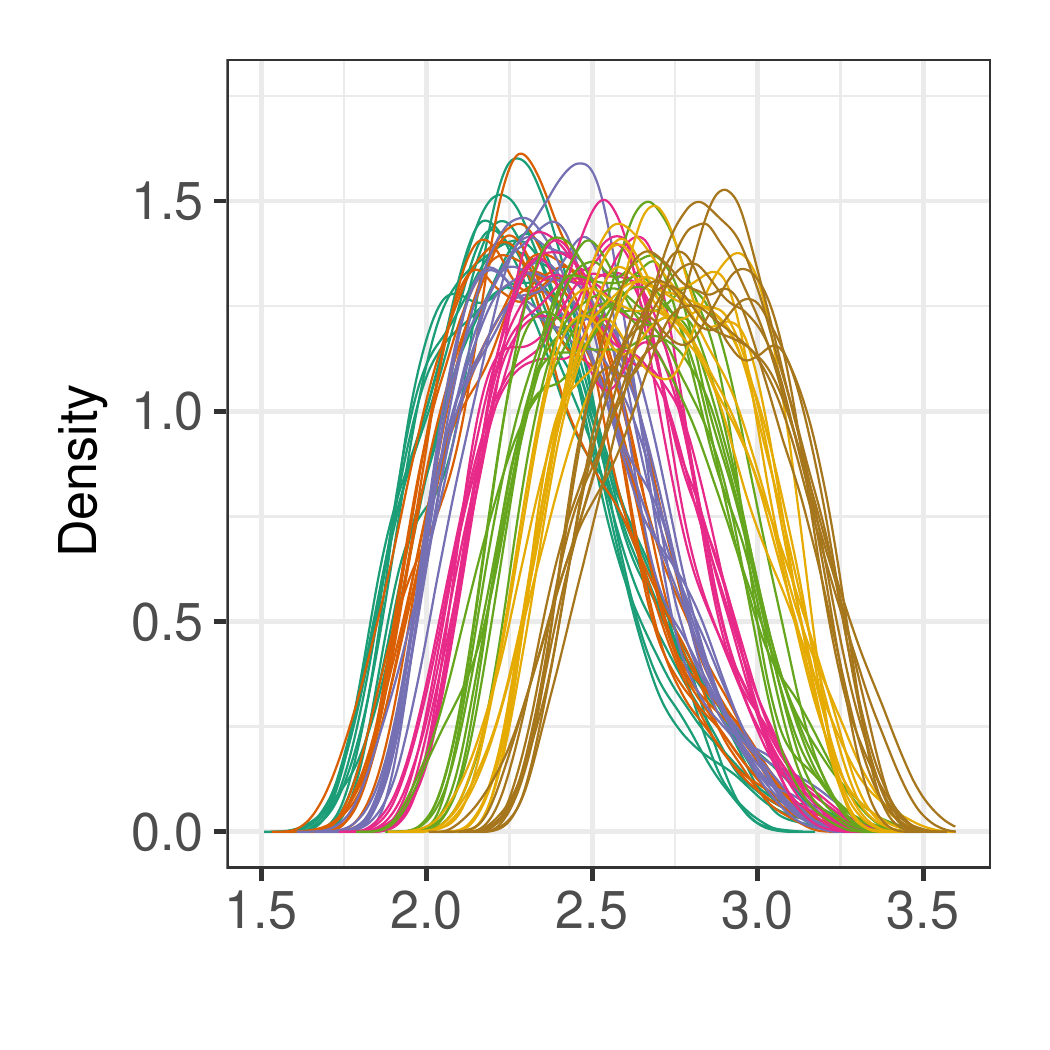}\\
		\end{minipage}
		\begin{minipage}[c]{0.22\textwidth}
			\centering
			\scriptsize{\qquad$n=2500, m=1$}
			\includegraphics[width = \textwidth,height=\textwidth]{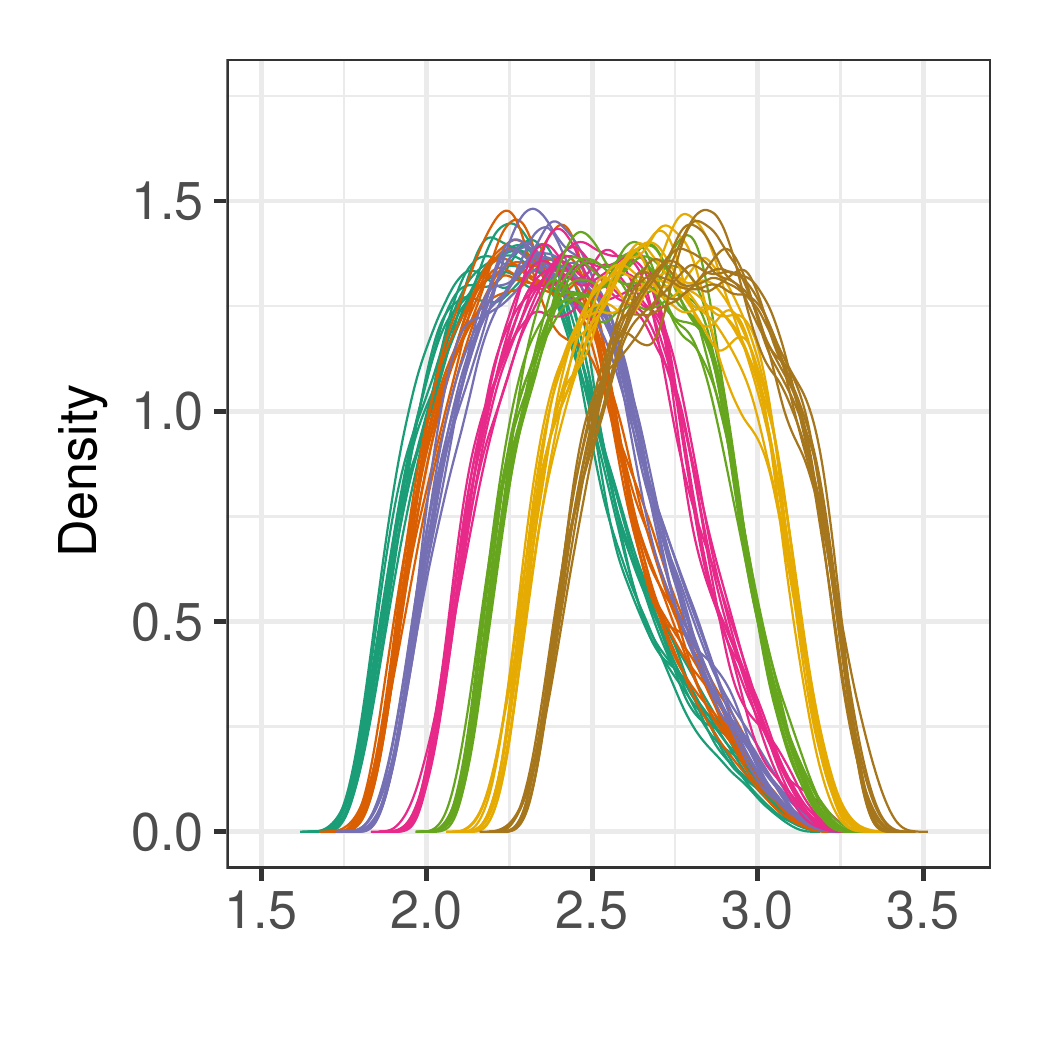}\\
		\end{minipage}
		\begin{minipage}[8]{0.22\textwidth}
			\centering
			\scriptsize{\qquad$n=5000, m=1$}
			\includegraphics[width = \textwidth,height=\textwidth]{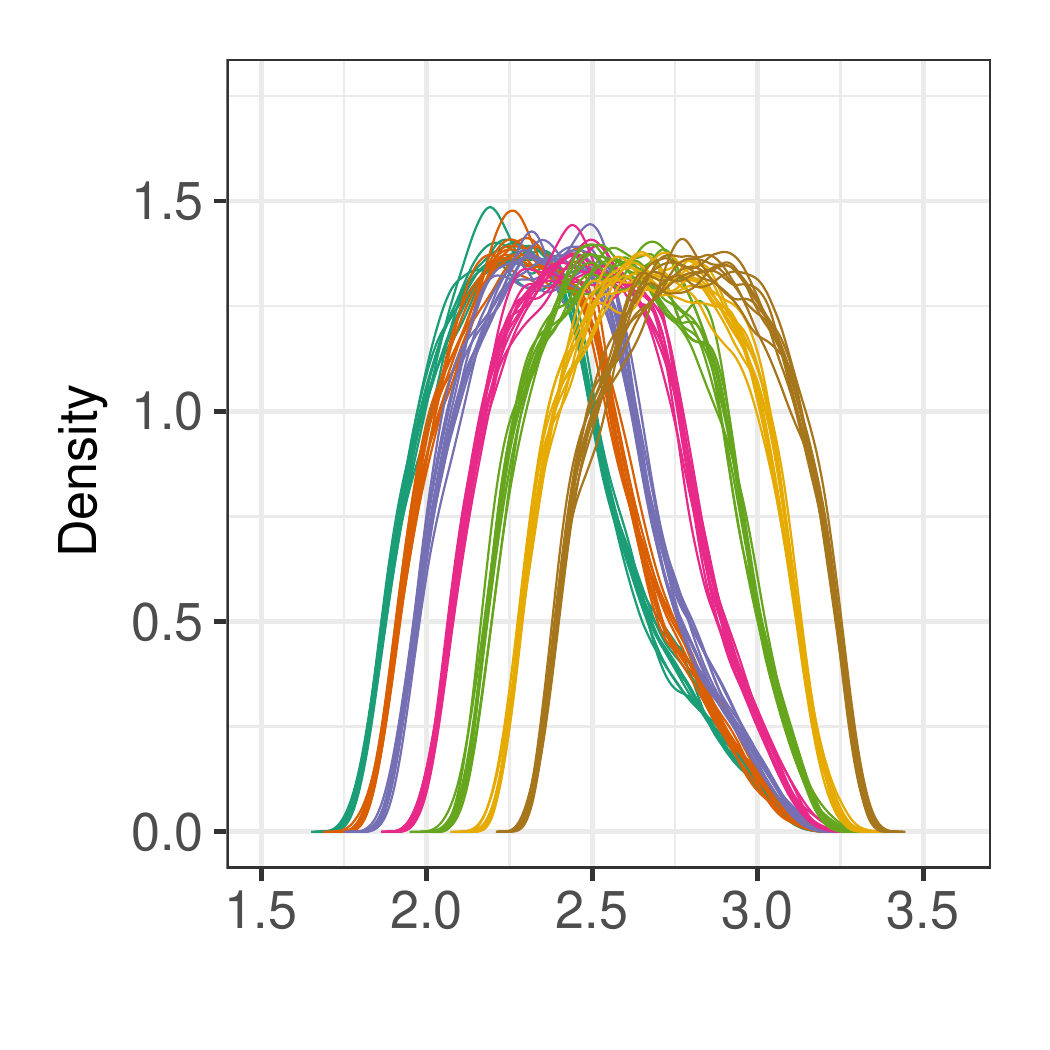}\\
		\end{minipage}
		\begin{minipage}[8]{0.22\textwidth}
			\centering
			\scriptsize{\qquad$n=10000, m=1$}
			\includegraphics[width = \textwidth,height=\textwidth]{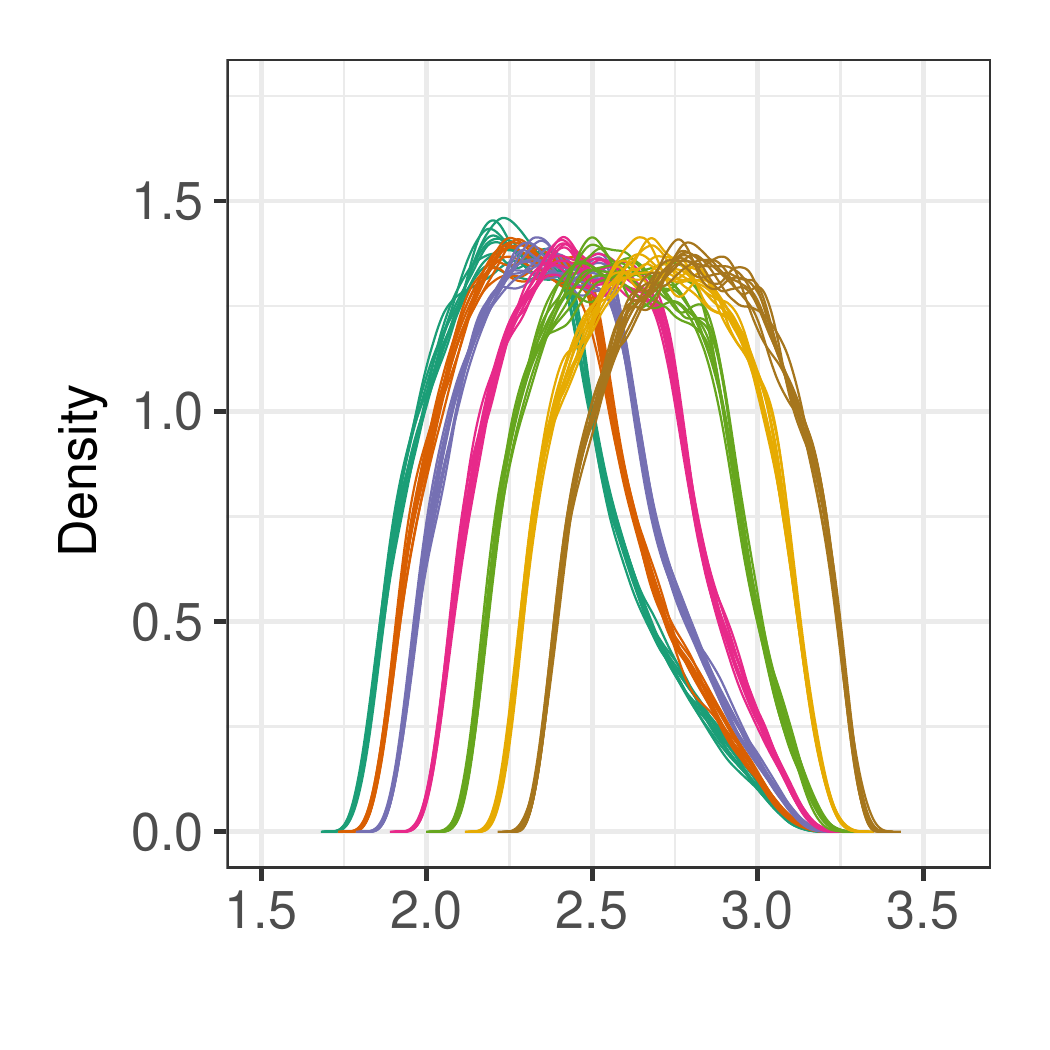}
		\end{minipage}
		\begin{minipage}[c]{0.04\textwidth}
			\hspace{0.1cm}\includegraphics[height=0.135\textheight]{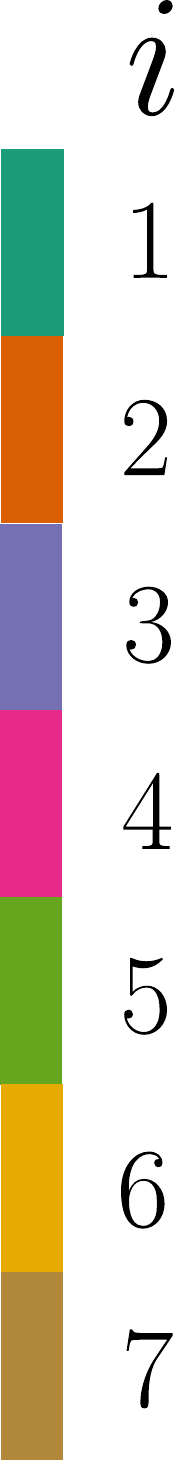}
			\vspace{0.3cm}
		\end{minipage}
		\\~\\
		\begin{minipage}[c]{0.22\textwidth}
			\centering
			\includegraphics[width = \textwidth,height=\textwidth]{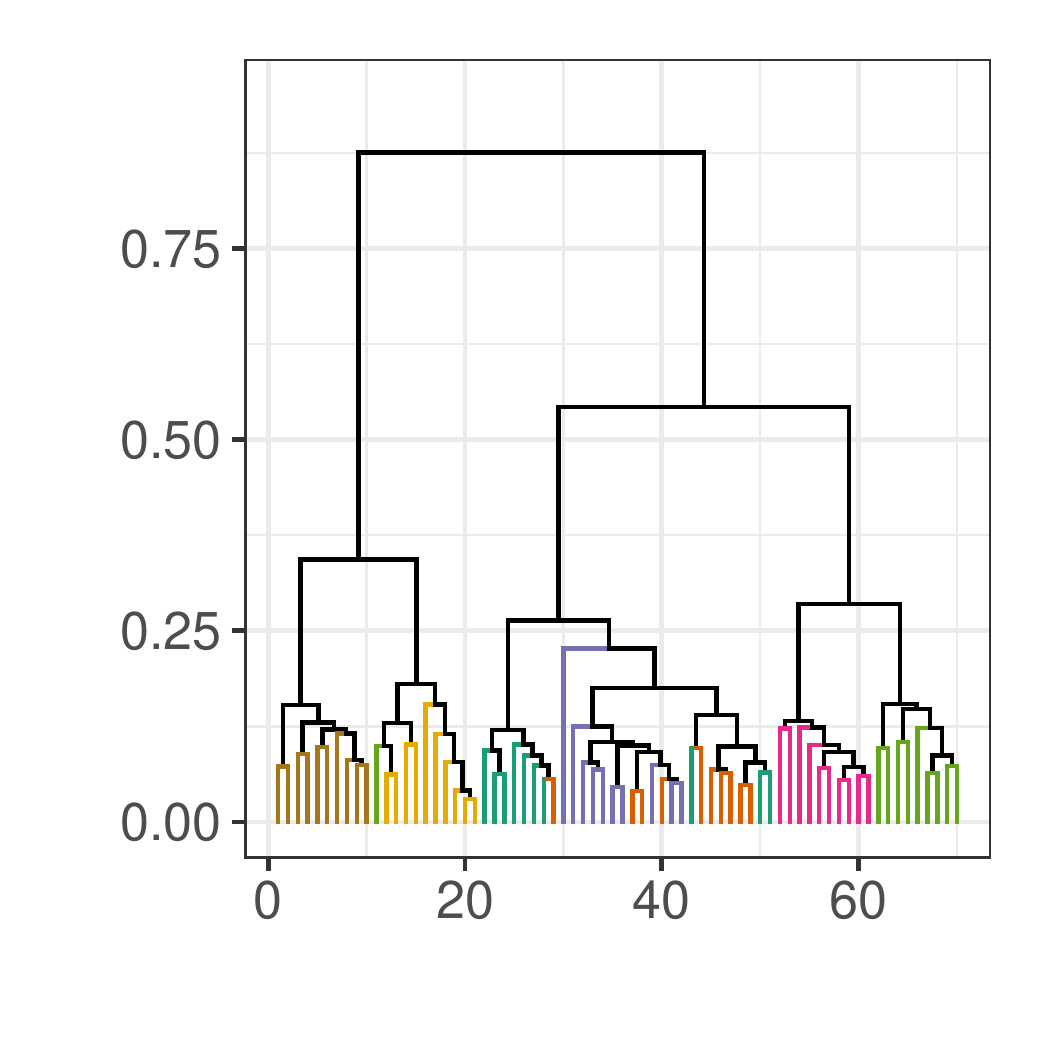}
		\end{minipage}
		\begin{minipage}[c]{0.22\textwidth}
			\centering
			\includegraphics[width = \textwidth,height=\textwidth]{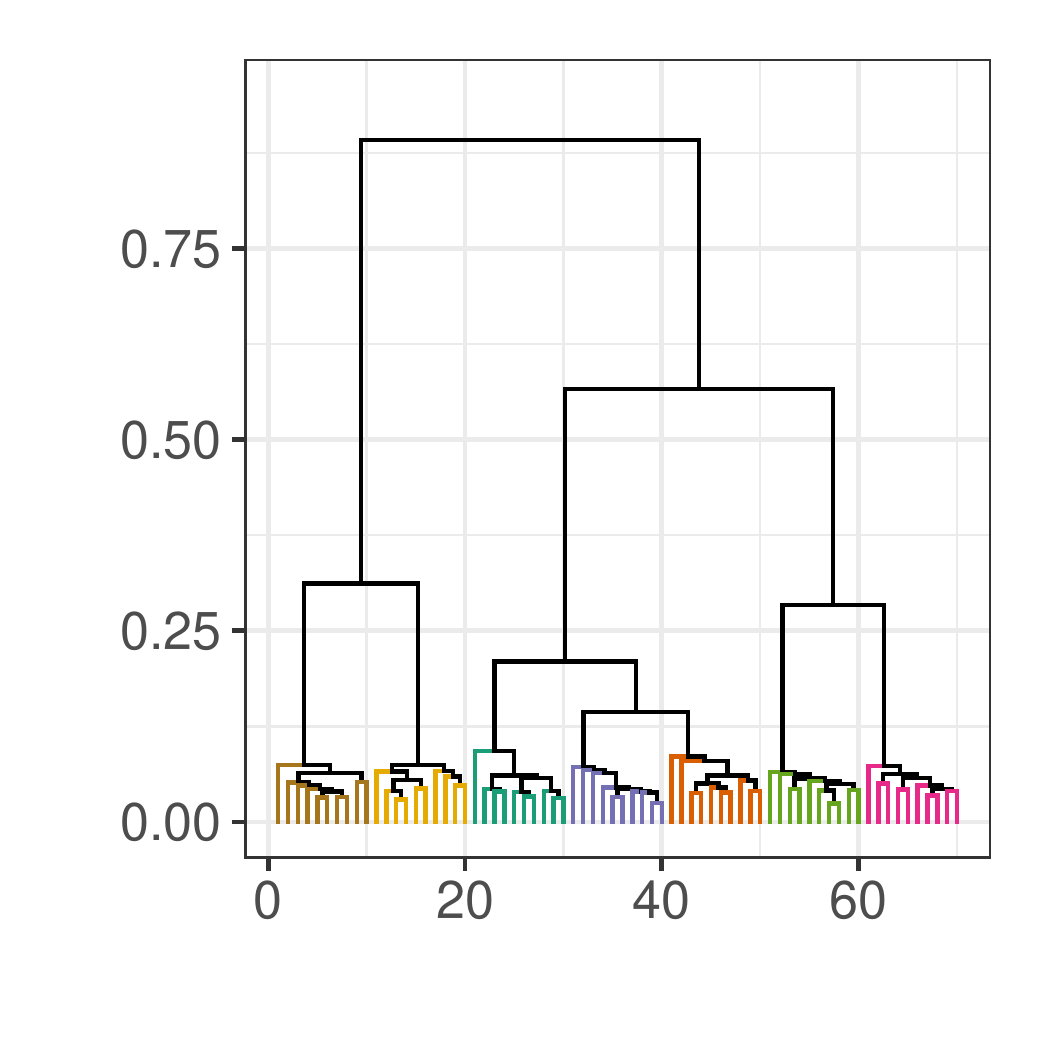}
		\end{minipage}
		\begin{minipage}[8]{0.22\textwidth}
			\centering
			\includegraphics[width = \textwidth,height=\textwidth]{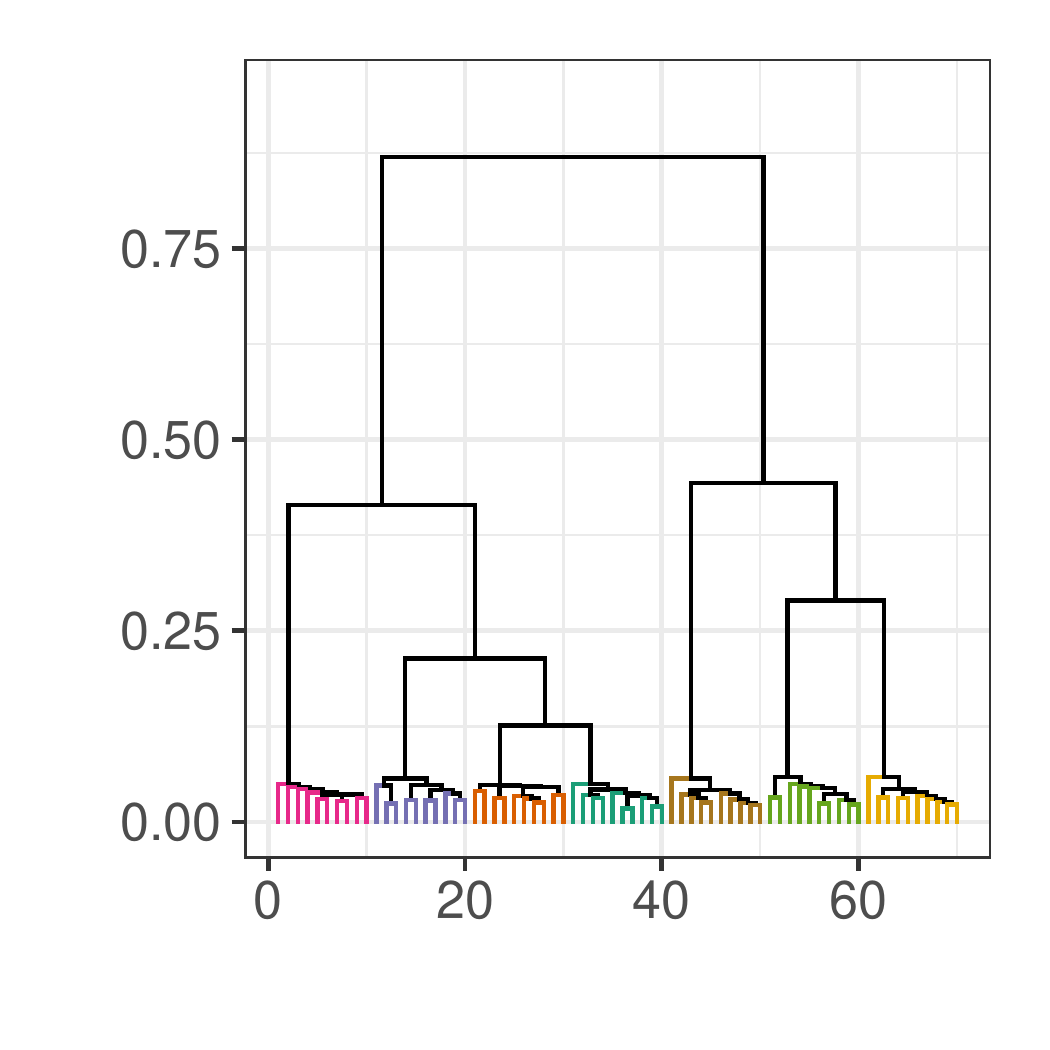}
		\end{minipage}
		\begin{minipage}[8]{0.22\textwidth}
			\centering
			\includegraphics[width = \textwidth,height=\textwidth]{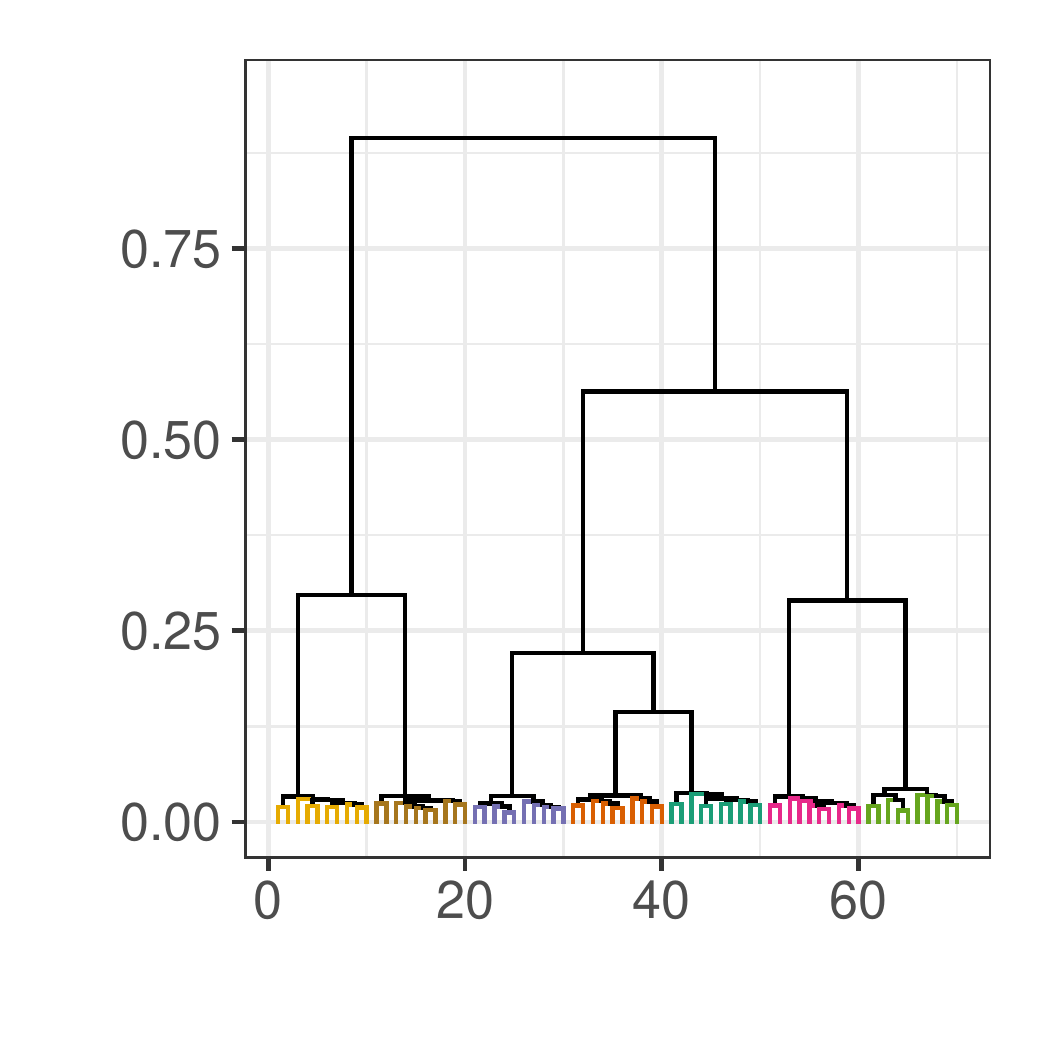}
		\end{minipage}
		\begin{minipage}[c]{0.04\textwidth}
			\hfill
		\end{minipage}
		\caption{\textbf{Discriminating between Euclidean metric measure spaces:} Upper row: Ten realizations of the kernel density estimators $\kdeyvar{1}{1}$ (blue-green), $\kdeyvar{2}{1}$ (orange), $\kdeyvar{3}{1}$ (blue), $\kdeyvar{4}{1}$ (pink), $\kdeyvar{5}{1}$ (green), $\kdeyvar{6}{1}$ (yellow) and $\kdeyvar{7}{1}$ (brown) for $n=500,2500,5000,10000$ (from left to right). Lower row: The results of an average linkage clustering of the considered kernel density estimators based on the $L^1$-distance (same coloring).
		}\label{fig:dtm dsicrimination}
	\end{figure}

	To conclude this section, we illustrate the influence of the choice of $m$. For this purpose, we repeat the above procedure with $n=10,000$ and $m=0.2,0.4,0.6,0.8$ (this means that we can use the alternative representation of $\dtm$ in \eqref{eq:emp dtm v2} with $k=2000,4000,6000,8000$). The resulting kernel density estimators are displayed in the upper row and the corresponding average clustering in the bottom row of Figure \ref{fig:influence of m} (same coloring as previously). As we consider the transformation of $\mu_{\mathcal{Y}_1}$ into $\mu_{\mathcal{Y}_7}$ along a 2-Wasserstein geodesic, it is intuitive that choosing $m$ too small is not informative in this setting (the goal is to distinguish between the whole spaces). Indeed, this is exactly, what we observe. For $m= 0.2$ the kernel density estimators strongly resemble each other and in particular the Euclidean metric measure spaces $\mathcal{Y}_1$ and $\mathcal{Y}_2$ are hardly distinguishable (see the corresponding dendrogram in the lower row of Figure \ref{fig:influence of m}). For $m\geq 0.4$ the kernel density estimators are better separated and the corresponding dendrograms highlight that it is possible to discriminate between the spaces $\mathcal{Y}_i$ based on the kernel density estimators $\kdeyvar{i}{m}$, $i=1,\dots,7$ and $m=0.4,0.6,0.8$. It is noteworthy that although the form of the kernel density estimators drastically changes between $m=0.4$ and $m=1$, the quality of the corresponding clustering only increases slightly with increasing $m$.
	\begin{figure}
		\centering
		\begin{minipage}[c]{0.22\textwidth}
			\centering
			\scriptsize{\qquad$n=10000, m=0.2$}
			\includegraphics[width = \textwidth,height=\textwidth]{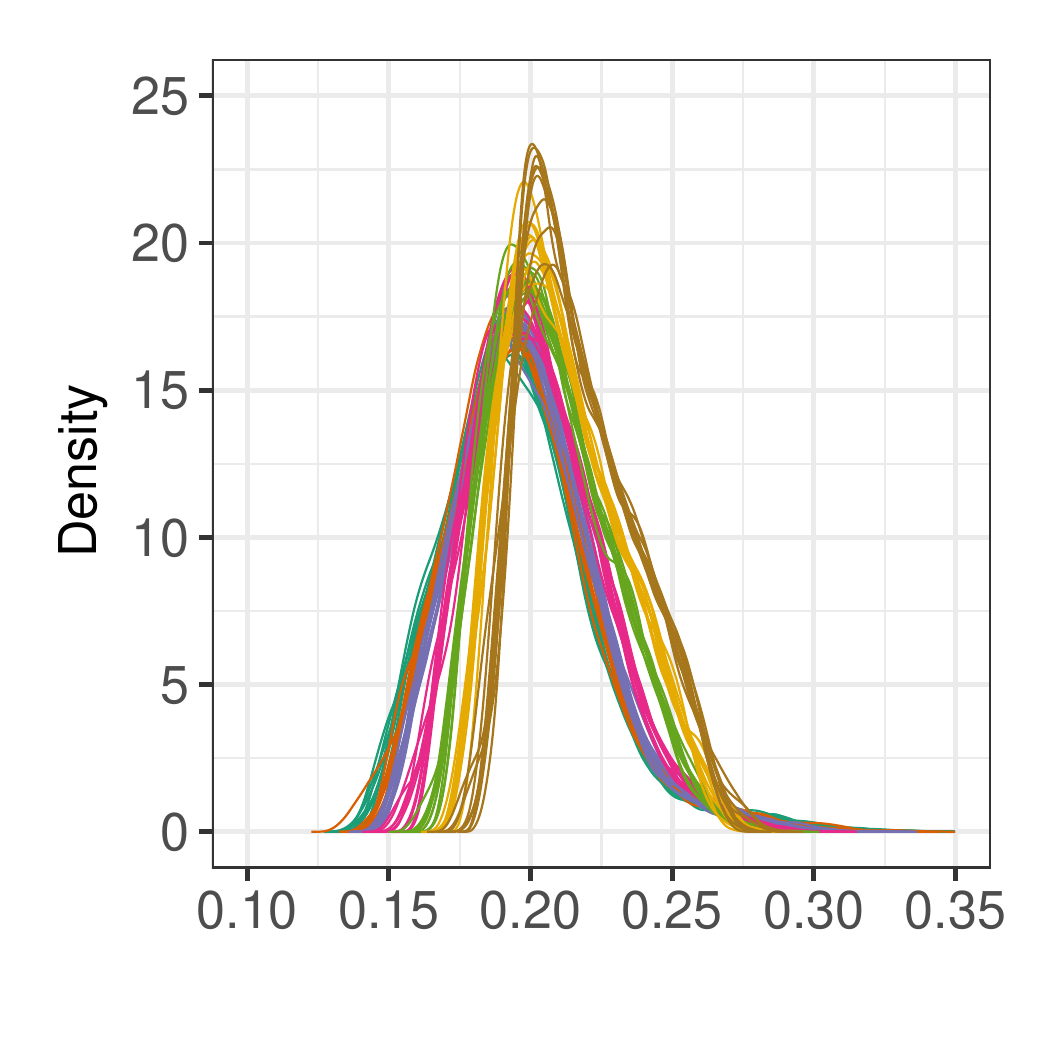}\\
		\end{minipage}
		\begin{minipage}[c]{0.22\textwidth}
			\centering
			\scriptsize{\qquad$n=10000, m=0.4$}
			\includegraphics[width = \textwidth,height=\textwidth]{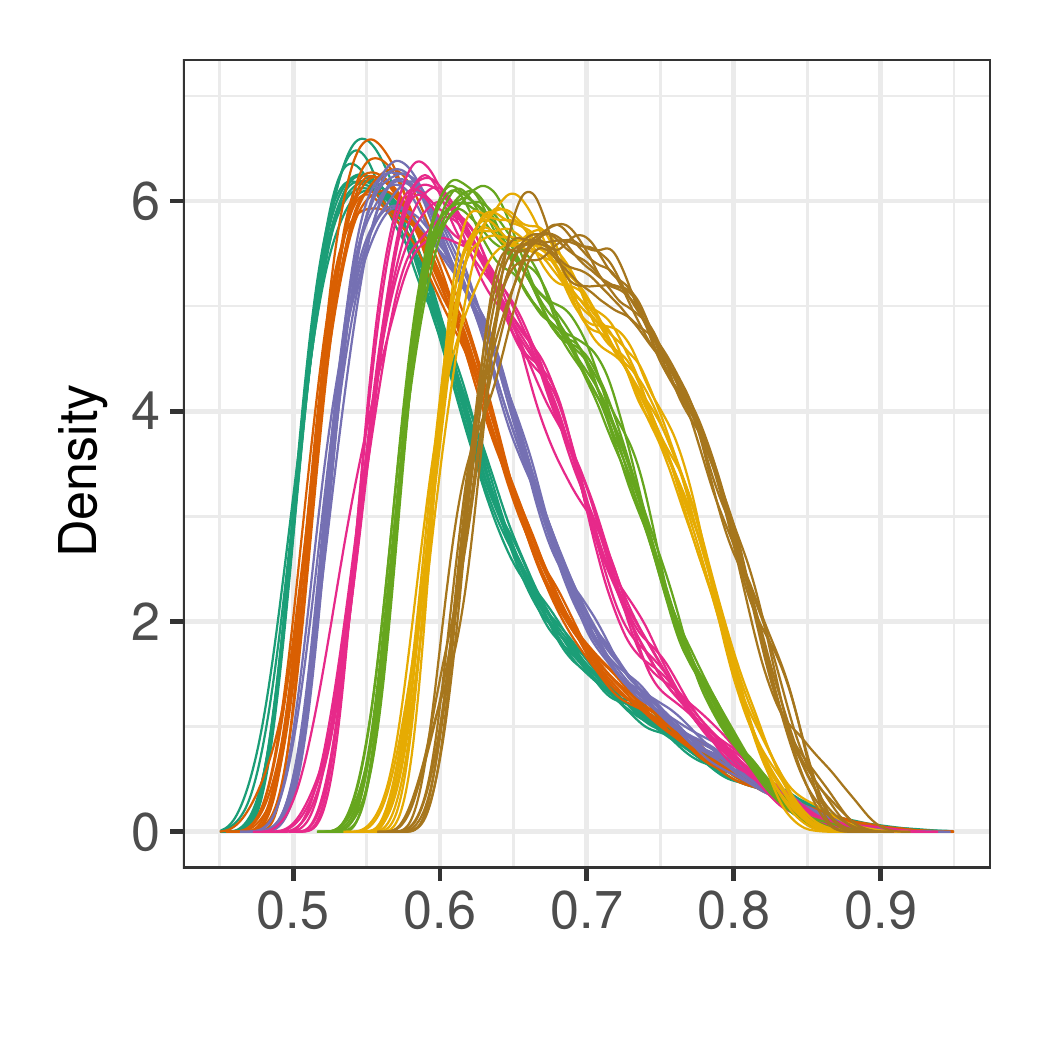}\\
		\end{minipage}
		\begin{minipage}[8]{0.22\textwidth}
			\centering
			\scriptsize{\qquad$n=10000, m=0.6$}
			\includegraphics[width = \textwidth,height=\textwidth]{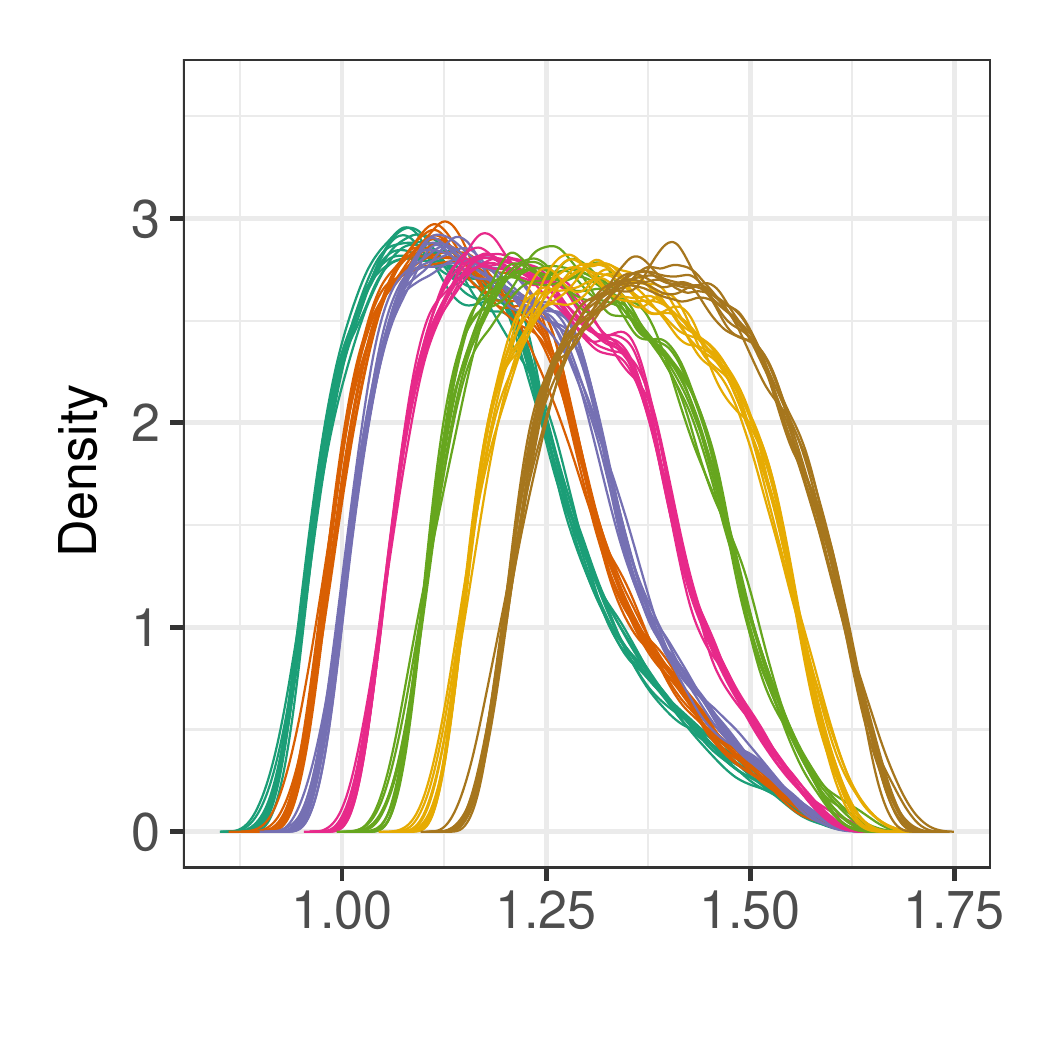}
		\end{minipage}
		\begin{minipage}[8]{0.22\textwidth}
			\centering
			\scriptsize{\qquad$n=10000, m=0.8$}
			\includegraphics[width = \textwidth,height=\textwidth]{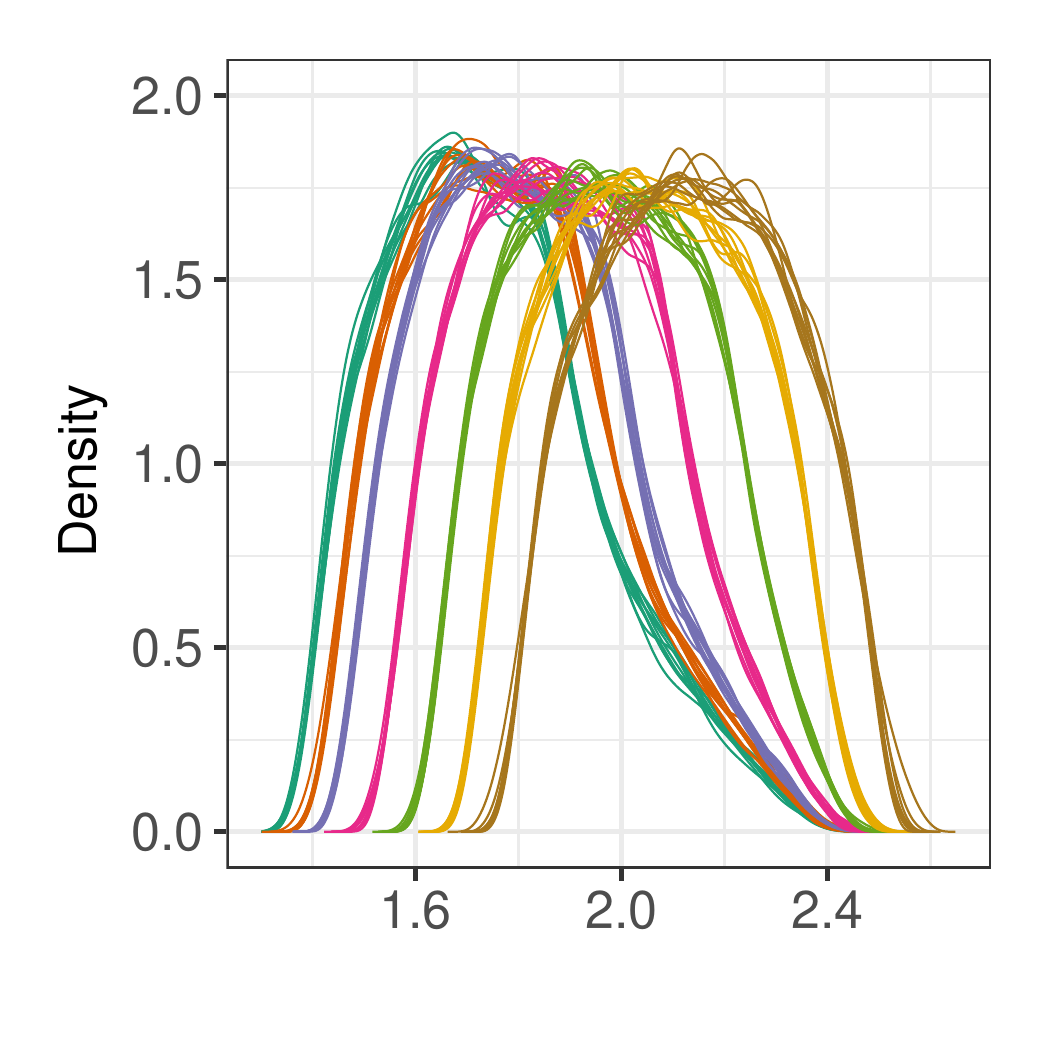}\\
		\end{minipage}
		\begin{minipage}[c]{0.04\textwidth}
			\hfill\hspace{0.1cm}\includegraphics[height=0.13\textheight]{Graphics/colorbar6.pdf}
			\vspace{0.5cm}
		\end{minipage}
		\begin{minipage}[c]{0.22\textwidth}
			\centering
			\includegraphics[width = \textwidth,height=\textwidth]{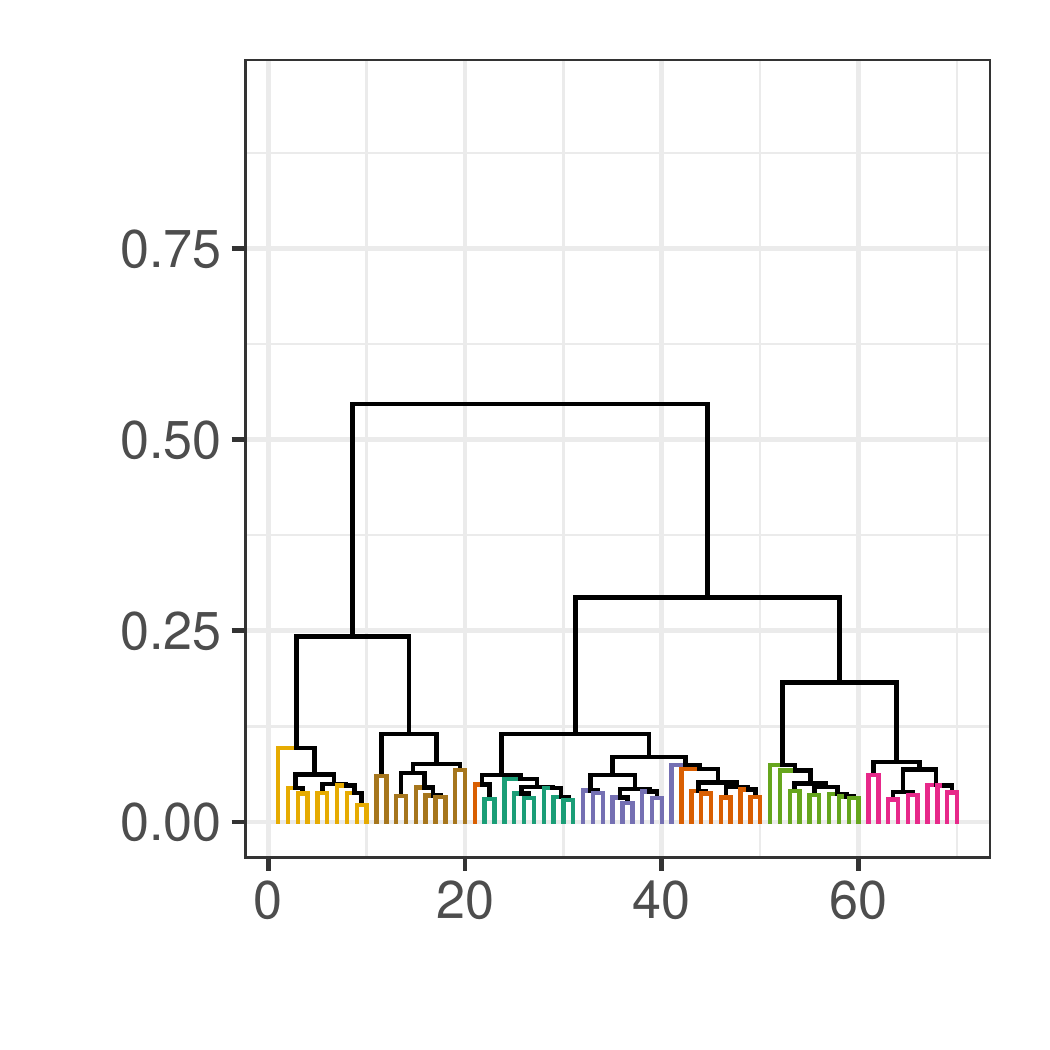}
		\end{minipage}
		\begin{minipage}[c]{0.22\textwidth}
			\centering
			\includegraphics[width = \textwidth,height=\textwidth]{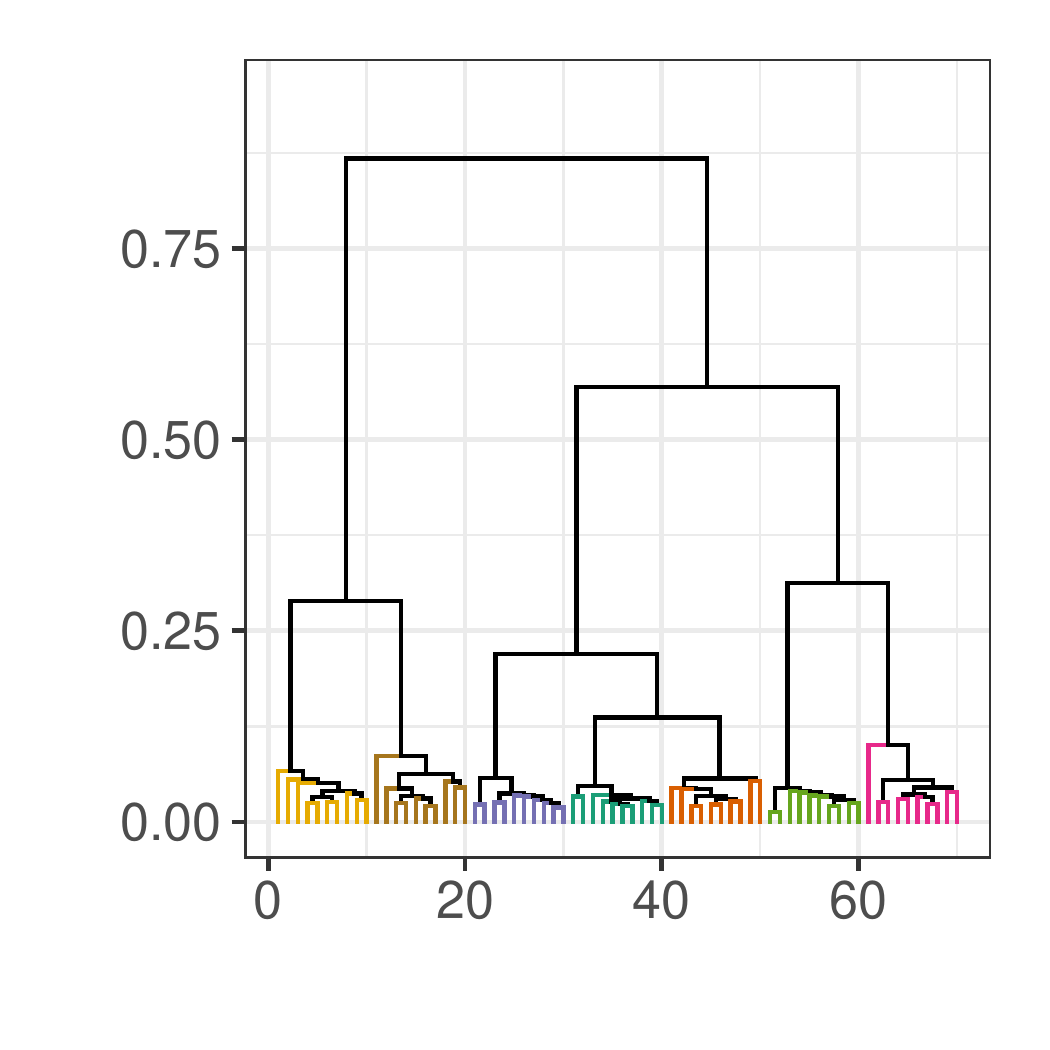}
		\end{minipage}
		\begin{minipage}[8]{0.22\textwidth}
			\centering
			\includegraphics[width = \textwidth,height=\textwidth]{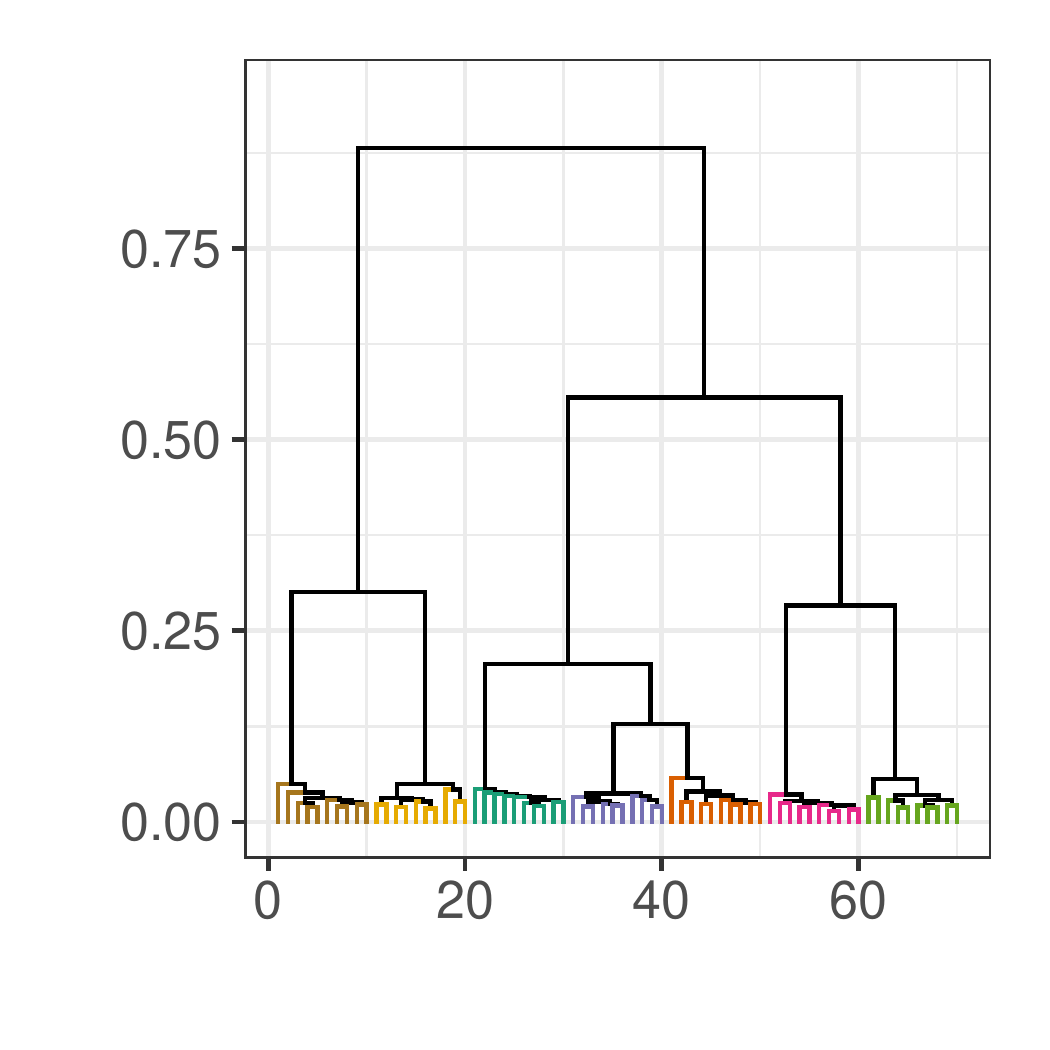}
		\end{minipage}
		\begin{minipage}[8]{0.22\textwidth}
			\centering
			\includegraphics[width = \textwidth,height=\textwidth]{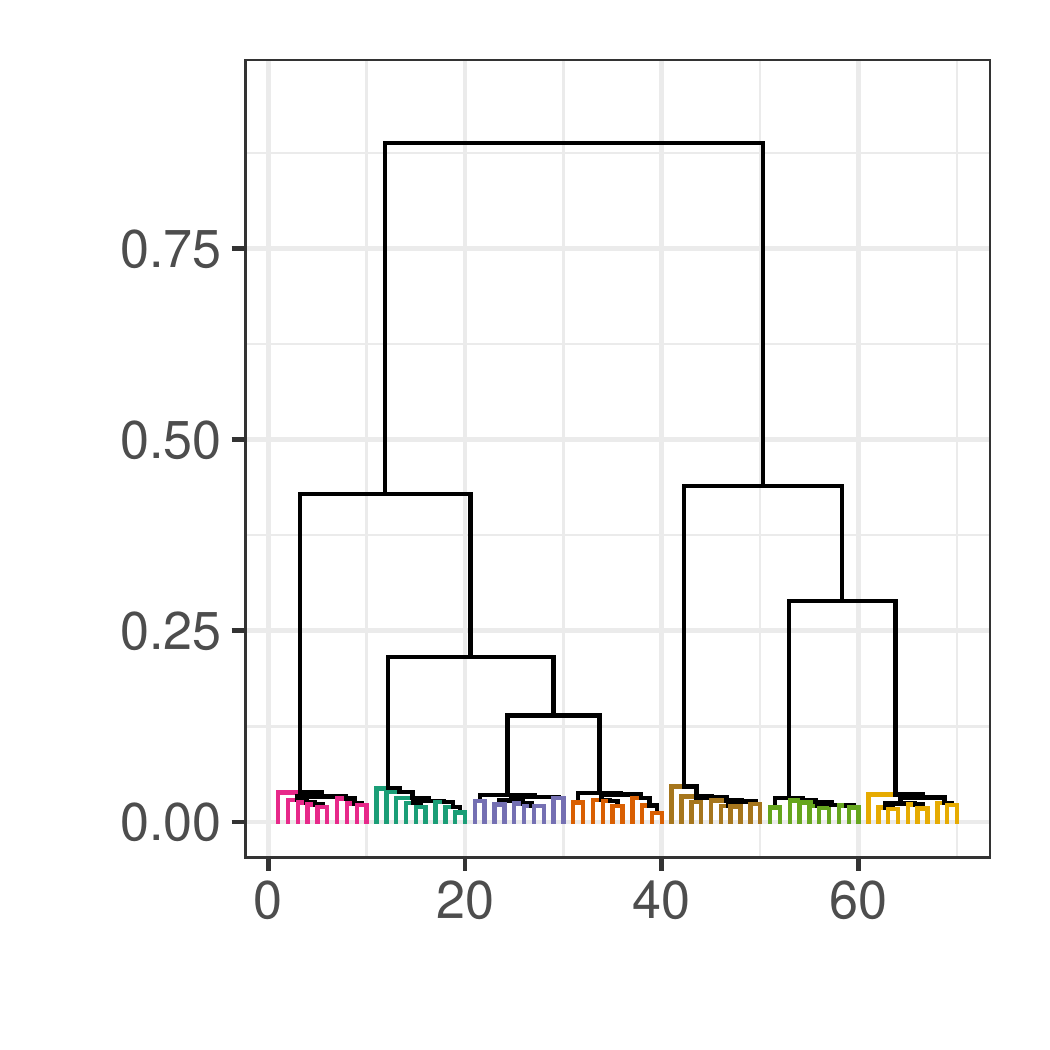}
		\end{minipage}
		\begin{minipage}[c]{0.04\textwidth}
			\hfill
		\end{minipage}
		\caption{\textbf{The influence of $m$:} Upper row: Ten realizations of the kernel density estimators $\kdeyvar{1}{m}$ (blue-green), $\kdeyvar{2}{m}$ (orange), $\kdeyvar{3}{m}$ (blue), $\kdeyvar{4}{m}$ (pink), $\kdeyvar{5}{m}$ (green), $\kdeyvar{6}{m}$ (yellow) and $\kdeyvar{7}{m}$ (brown) for $n=10000$ and $m=0.2,0.4,0.6,0.8$ (from left to right). Lower row: The results of an average linkage clustering of the considered kernel density estimators based on the $L^1$-distance (same coloring). 
		}\label{fig:influence of m}
	\end{figure}
	
	\section{Chromatin Loop Analysis}\label{sec:CLA}
	
	In this section, we will highlight how to use the DTM-density-transformation for chromantin loop analysis. First, we briefly recall some important facts about chromatin fibers, state the goal of this analysis and precisely describe the data used here. \\
	
	For human beings, chromosomes are essential parts of cell nuclei. They carry the genetic information important for heredity transmission and consist of chromatin fibers. Learning the topological 3D structure of the chromatin fiber in cell nuclei is important for a better understanding of the human genome. As discussed in Section \ref{subsec:application}, TADs are self-interacting genomic regions, which
	are often associated with loops in the chromatin fibers. 
	These domains have been estimated to the range of 100-300 nm \citep{NueblerE6697}. Hi-C data \citep{HiC} allow to construct spatial proximity maps of the human genome and are often used to analyze genome-wide chromatin organization and to identify TADs. However, spatial size and form, and how frequently chromatin loops and domains
	exist in single cells, cannot directly be answered based on Hi-C data, whereas in 3D visualization of chromosomal regions via SMLM with a sufficiently high resolution, this information might be more easily accessible \citep{Paris}. Therefore, in the above reference, such an approach is considered, in which two groups of images of chromatin fibers were produced: Chromatin with supposedly fully intact loop structures and chromatin, which had been treated with auxin prior to imaging. Auxin is known to cause a degrading of the loops. Therefore, in the second set of images, the loops are expected to be mostly dissolved. The obtained resolution in these images was of the order of 150 nm, i.e., below the diffraction limit and comparable to the typical sizes of TADs. This means that the analysis of chromatin loops based on these images is tractable but difficult as we will not see detailed loops when zooming in.
	
	In this paper we analyse simulated SMLM data of chromatin fibers that mimic the chromatin structure with loops as local features and compare them to simulated data that mimic the progressive degradation of loop structures in five steps. 
	The simulated structures mimic the first chromosome (of 23 in total) of the human genome, which is the longest with approximately 249 megabases (Mb, corresponding to 249,000,000 nucleotides).
	Each step corresponds to a loop density with a different parameter, which we denote by $c$.
	The value of $c$ is the number of loops per megabase. 
	A value of $c=25$ corresponds to a high loop density with 2490 loops in total and corresponds to the setting without the application of auxin. Values of $c=10,6,4,2$ correspond to decreasing states of resolved loops (1494, 996 and 498 loops) and $c=0$ encodes the fully resolved state.
	These simulated images provide a controlled setting in which we can investigate the applicability of our methods and in which we can explore how small a difference in loop density our method can still pick up and when it starts to break down. Here, we only consider classification into the different conditions based on the estimated DTM density. While it is clear from the results described below that information on loop size and frequency is encoded in these densities, a quantification of these parameters requires a deeper study of the proposed methods and is beyond the scope of this paper.\\
	In our study, we consider 102 synthetic, noisy samples of size 49800 of 6 different loop densities each and denote the corresponding samples as $\X_{i,c}$, $c=25,10,6,4,2,0$, $1\leq i\leq 102$. These samples are created by first discretizing the chromatin structure such that the distance between two points along the chromatin structure corresponds to $45$ nm. Then, we add independent, centered Gaussian errors with covariance matrix
	$$\Sigma=\begin{pmatrix}
		45 & 0 & 0\\
		0 & 45 & 0\\
		0 & 0 & 90
	\end{pmatrix}$$
	to each point (see Figure \ref{fig:chromatinloopsamp} for an illustration of data obtained in this fashion). This high level of noise is chosen to match the experimental data obtained in \citet{Paris}. Throughout the following, we consider the data on a scale of 1:45. We stress once again that the goal of our analysis is to distinguish between the respective loop conditions and not between chromatin fibers from which the points are sampled (the overall form of the chromatin fibers within one condition can be quite different).
	\begin{figure}
		\centering
		\includegraphics[width=0.7\textwidth]{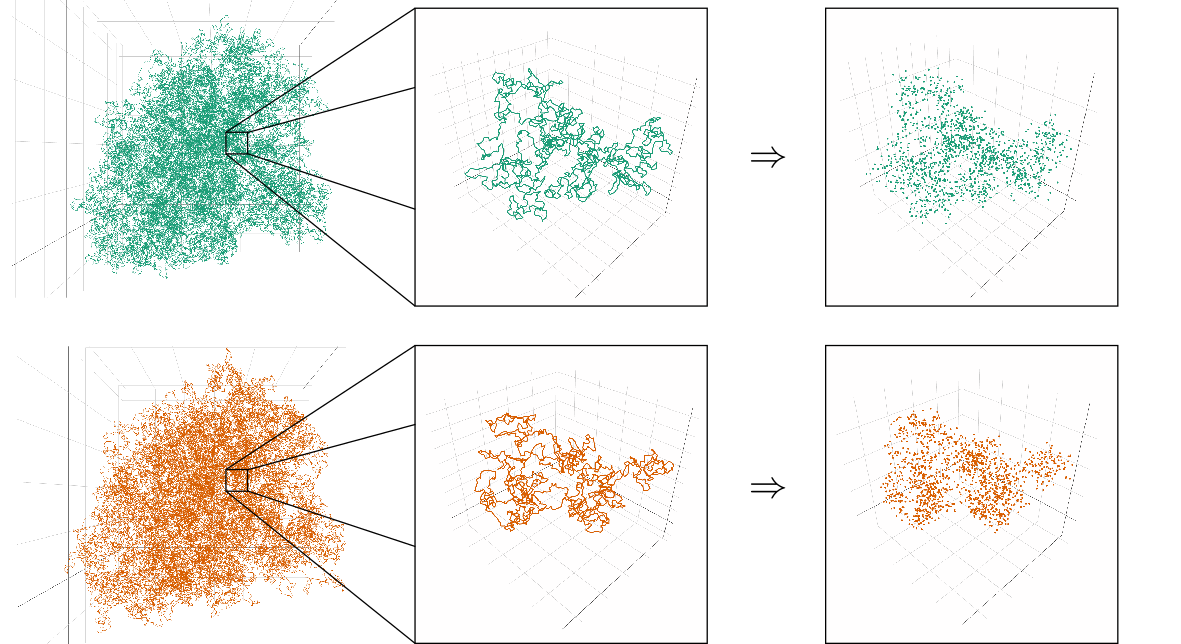}
		\caption{\textbf{Chromatin Loops:} Upper row: Illustration of a chromatin structure and the corresponding sample with loop density $c=25$. Lower row: Illustration of the same chromatin and the corresponding sample with loop density $c=10$.
		}\label{fig:chromatinloopsamp}
	\end{figure}
	We demonstrate in the following that the corresponding DTM-signatures, or more precisely the corresponding kernel density estimators $\kdexvar{i,c}{m}$, $1\leq i\leq 102$, ($m$ chosen suitably small) represent a useful transformation of the data that allows discrimination between the different loop  densities, while disregarding the overall shape of the chromatin fiber. To this end, we follow the strategy proposed in Section \ref{subsec:application} and calculate $\Delta_{i,c}=\{\empdtm(X_{j,i,c}):X_{j,i,c}\in \mathcal{X}_{i,c}\}$ for $1\leq i\leq 102$, $c\in\{25,10,6,4,2,0\}$ and $m\in\{{1}/{9960},{1}/{4980}, {1}/{1245}\}$. These particular choices of $m$ entail that in order to calculate $\empdtm(X_{j,i,c})$ we need to take the mean over the distances to the $k=5,10,40$ nearest neighbors of $X_{j,i,c}\in \mathcal{X}_{i,c}$ (recall the representation of $\empdtm$ in \eqref{eq:emp dtm v2}). 
	%
	\begin{remark}[Choice of the parameters $\bm{m}$ and $\bm{k}$]\label{rem:sampling}
		It is important to note that the practical choice of the parameter $m$ (resp. the parameter $k$) strongly depends on the mathematical model and the sampling mechanism used. In many mathematical models, increasing the sample size corresponds to an increase in sampling density, while the object under consideration remains fixed. In this case, DTM-densities with the same mass parameter $m$, which determines the proportion of points considered, are comparable, although the number of nearest neighbors considered varies with $n$. However, this is not necessarily the case in many applications. In our biological example, varying $n$ means changing the size of the structure under consideration since the distance between two observable points along the polymer remains fixed (up to random errors). In such a case, the corresponding DTM-densities are only comparable if the parameter $k$ is fixed (resulting in variations of $m$). Note that our synthetic data sets are all of the same size and therefore both viewpoints are equivalent in this particular case.
	\end{remark}
	We determine $\kdexvar{i,c}{m}$ (Biweight kernel, $h=1.06\min \{s(\Delta_n),\text{IQR}(\Delta_n)/1.34)\}n^{-1/5})$ based on each of the samples $\Delta_{i,c}$ . The resulting kernel density estimators are displayed in Figure \ref{fig:chromatinanalysisI}. Generally, the kernel density estimators based on the different samples with the same loop density strongly resemble each other and it is possible to roughly distinguish between the different values of  $c$. For all values of $m$ considered, the DTM-density estimators based on $\X_{i,25}$, $1\leq i\leq 102$, (here the respective chromatin fibers form many loops) are well separated from the other kernel density estimators and the estimators based on the samples $\Delta_{i,2}$ and $\Delta_{i,0}$ (which correspond to the lowest loop densities considered) are the most similar when comparing the different loop densities. 
	\begin{figure}
		\centering
		\begin{minipage}[c]{0.24\textwidth}
			\centering
			\small{\quad$m={1}/{9960}$}
			\includegraphics[width = \textwidth,height=\textwidth]{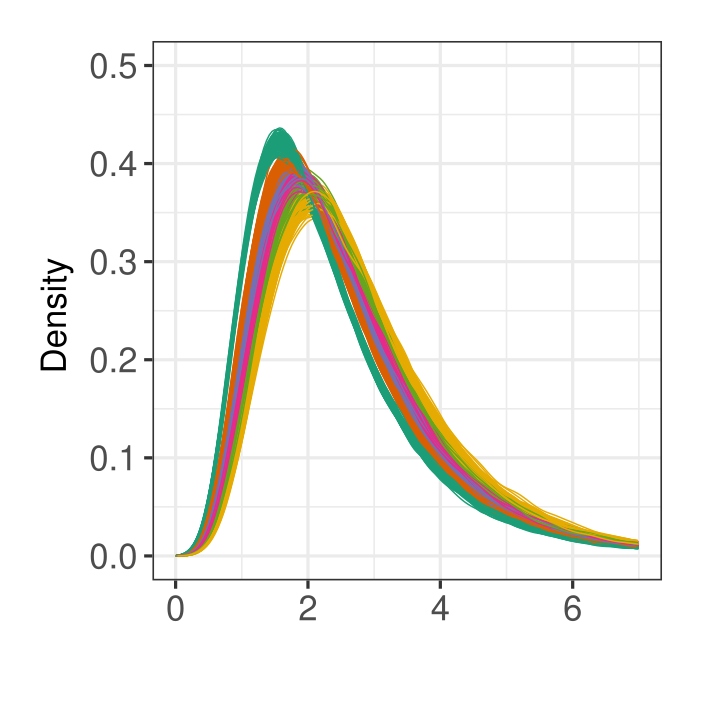}\\
		\end{minipage}
		\begin{minipage}[c]{0.24\textwidth}
			\centering
			\small{\quad$m={1}/{4980}$}
			\includegraphics[width = \textwidth,height=\textwidth]{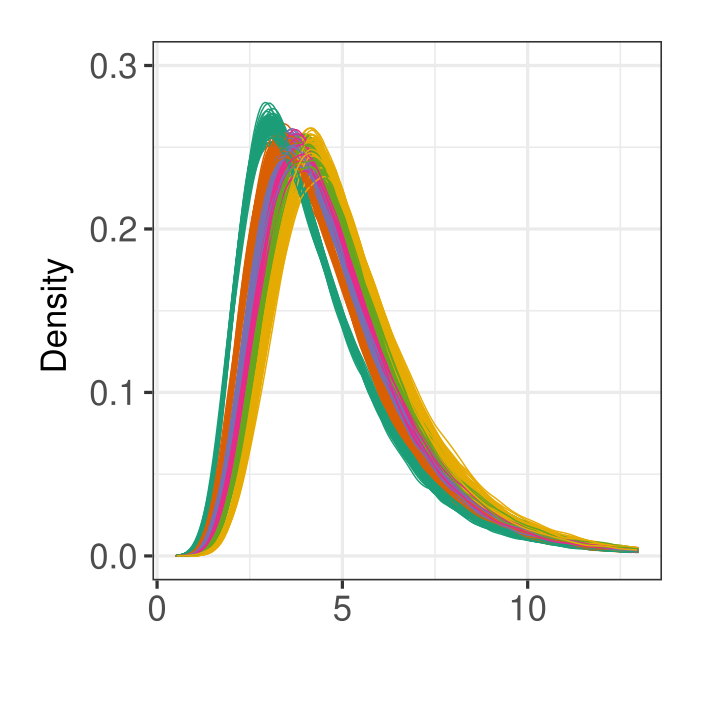}\\
		\end{minipage}
		\begin{minipage}[c]{0.24\textwidth}
			\centering
			\small{\quad$m={1}/{1245}$}
			\includegraphics[width = \textwidth,height=\textwidth]{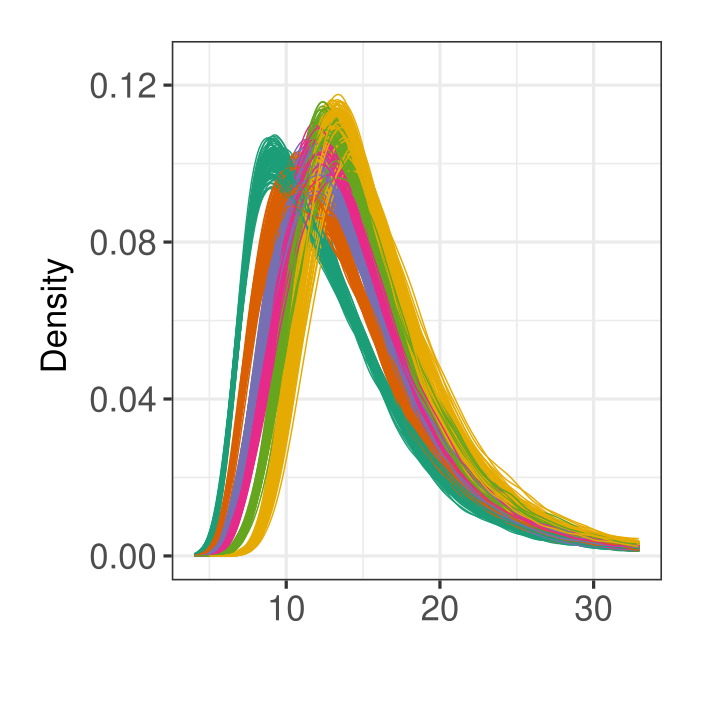}\\
		\end{minipage}
		\begin{minipage}[l]{0.1\textwidth}
			\hspace{0.3cm}
			\includegraphics[height=0.14\textheight]{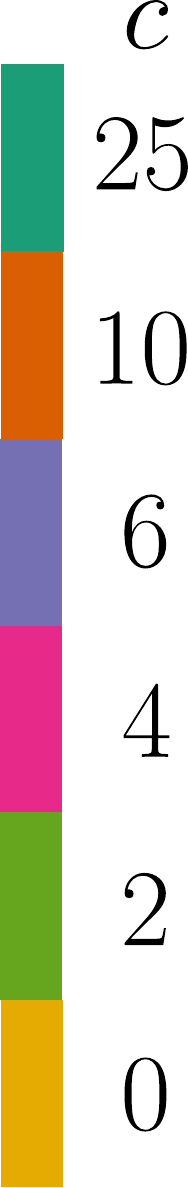}
			\vspace{0.2cm}
		\end{minipage}
		
		\caption{\textbf{Chromatin Loop Analysis I:} Illustration of the DTM-density estimators $\kdexvar{i,c}{m}$, $1\leq i\leq 102$, for $c=25$ (blue-green), $c=10$ (orange), $c=6$ (blue), $c=4$ (pink), $c=2$ (green) and $c=0$ (yellow) and $m\in\{{1}/{9960},{1}/{4980}, {1}/{1245}\}$ (from left to right).
		}\label{fig:chromatinanalysisI}
	\end{figure}
	In order to make a more qualitative comparison between the estimators $\kdexvar{i,c}{m}$, we use the strategy developed in Section \ref{subsec:disc props} and perform an average linkage clustering based on the $L_1$-distance between the estimated densities. For clarity, we restrict ourselves to the comparison of the loop density $c=25$ against $c=10$ as well as $c=2$ against $c=0$ and point out that the comparison between the setting $c=2$ against $c=0$ is very difficult as the loop frequencies are very low. The dendrograms in the upper row of Figure \ref{fig:chromatinloopdendro} illustrate the comparison of $c=25$ and $c=10$. It is remarkable that for each $m$ the correct clusters are obtained. The lower row of Figure \ref{fig:chromatinloopdendro} showcases the dendrograms for the comparison of the estimators $\kdexvar{i,2}{m}$ and $\kdexvar{i,0}{m}$, $1\leq i\leq 102$ and $m\in\{{1}/{9960},{1}/{4980}, {1}/{1245}\}$. For $m\in\{{1}/{9960},{1}/{4980}\}$, we obtain (up to one exception) the correct clusters, although they are much closer (w.r.t. the $L^1$-distance) than the clusters for the previous comparisons. However, for $m={1}/{1245}$, it is no longer possible to reliably identify two clusters that correspond to $c=2$ and $c=0$. It seems that in this case $m$ is too large to yield a perfect discrimination.\\
	\begin{figure}
		\centering
		\begin{minipage}[c]{0.24\textwidth}
			\centering
			\small{\quad$m={1}/{9960}$}
			\includegraphics[width=\textwidth]{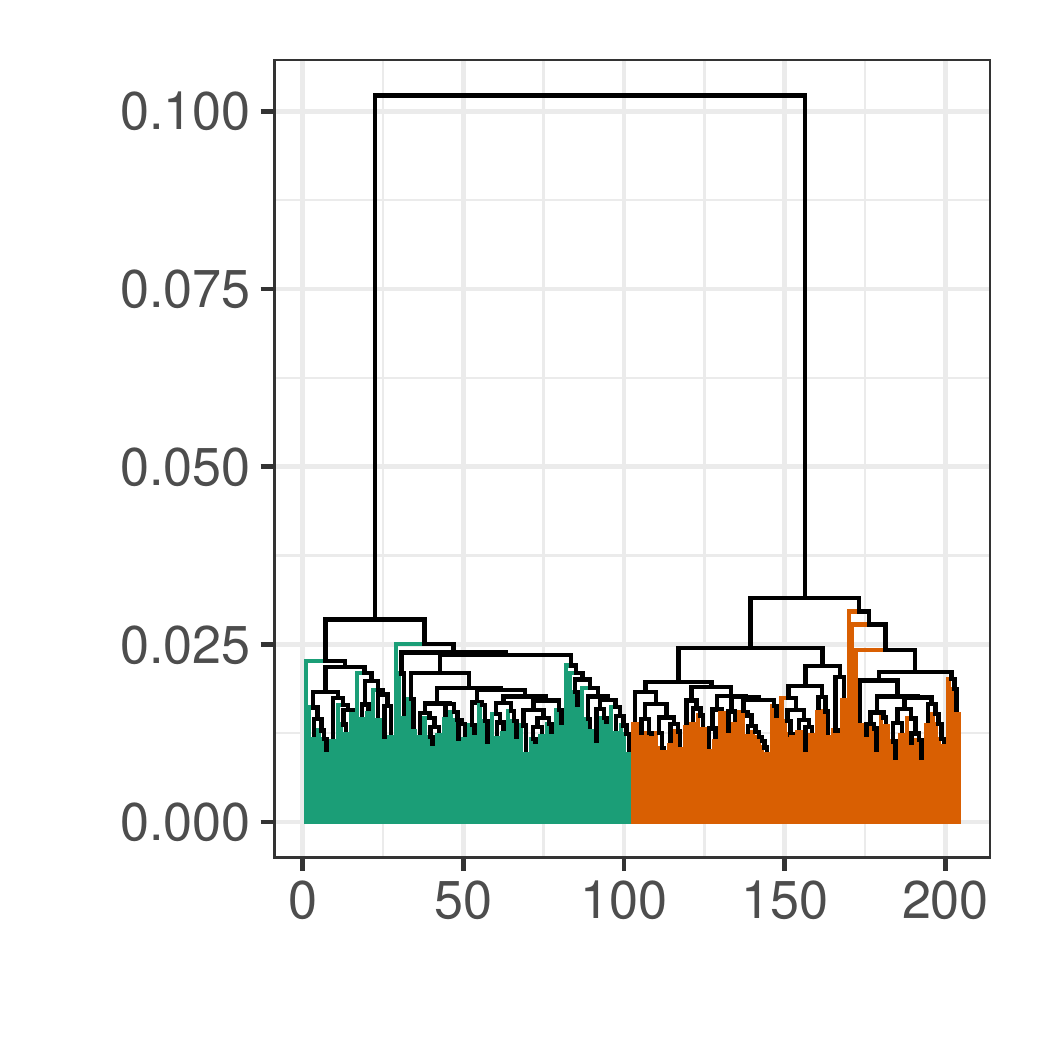}
		\end{minipage}
		\begin{minipage}[c]{0.24\textwidth}
			\centering
			\small{\quad$m={1}/{4980}$}
			\includegraphics[width=\textwidth]{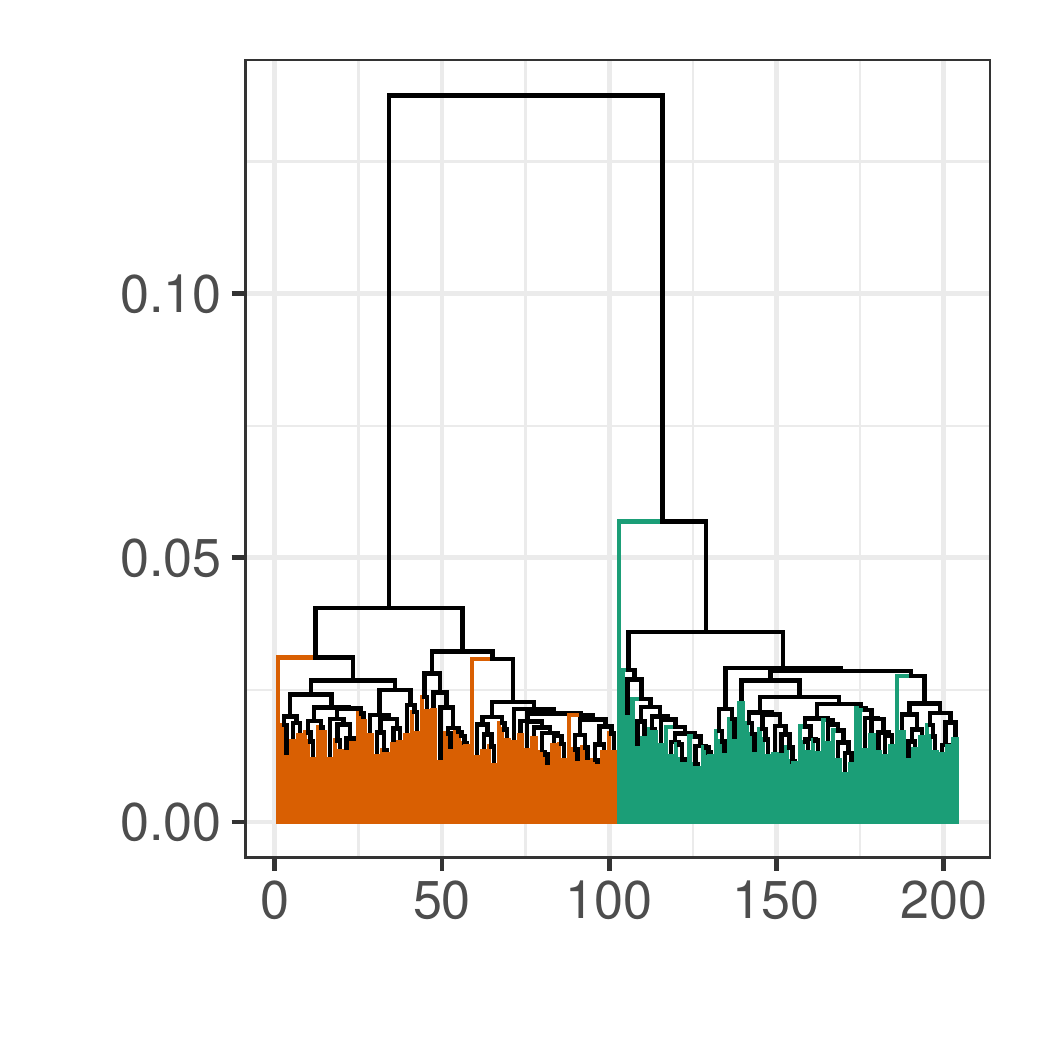}
		\end{minipage}
		\begin{minipage}[c]{0.24\textwidth}
			\centering
			\small{\quad$m={1}/{1245}$}
			\includegraphics[width=\textwidth]{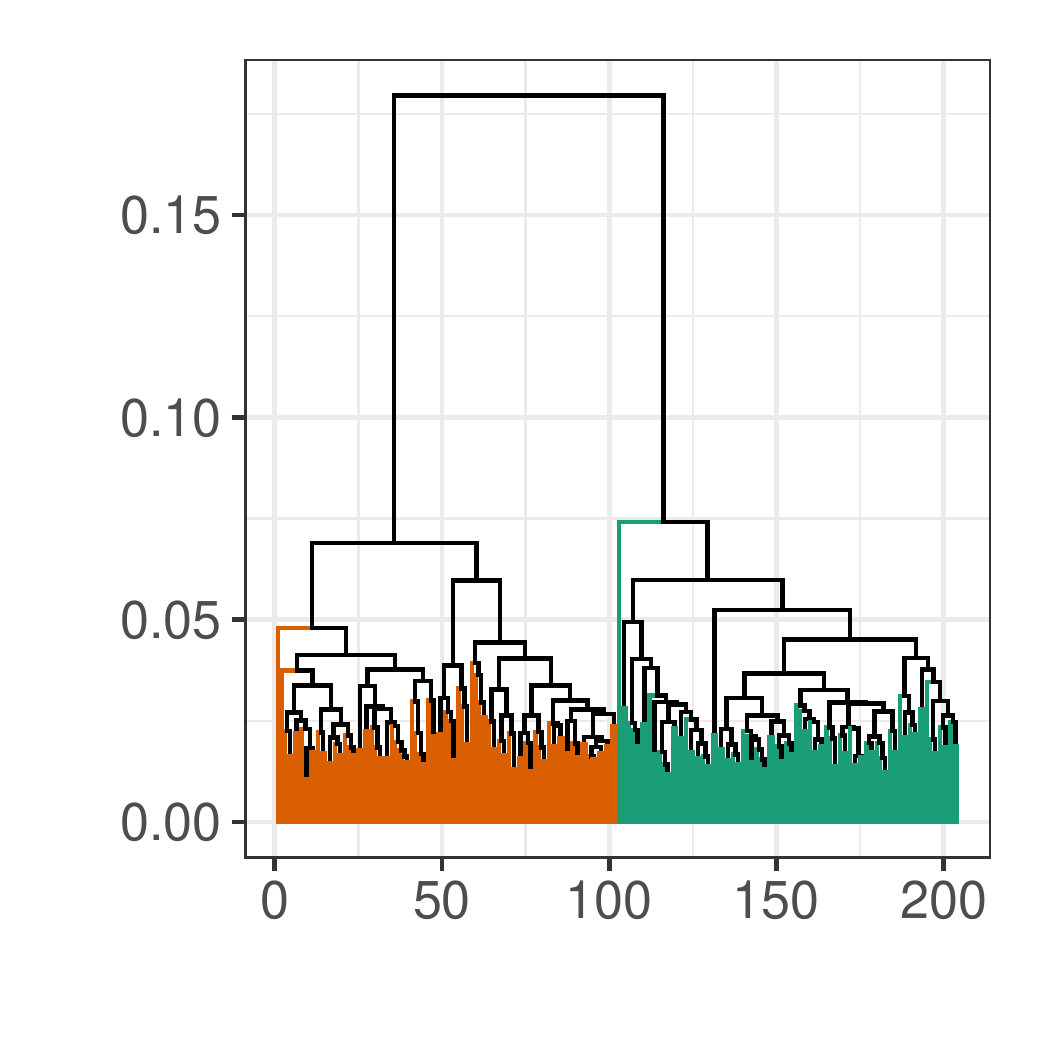}
		\end{minipage}
		\begin{minipage}[l]{0.1\textwidth}
			
			\includegraphics[height=0.12\textheight]{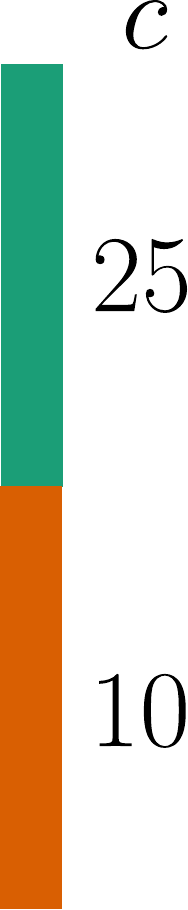}
			\vspace{0.4cm}
		\end{minipage}\\
		\begin{minipage}[c]{0.24\textwidth}
			\centering
			\includegraphics[width=\textwidth]{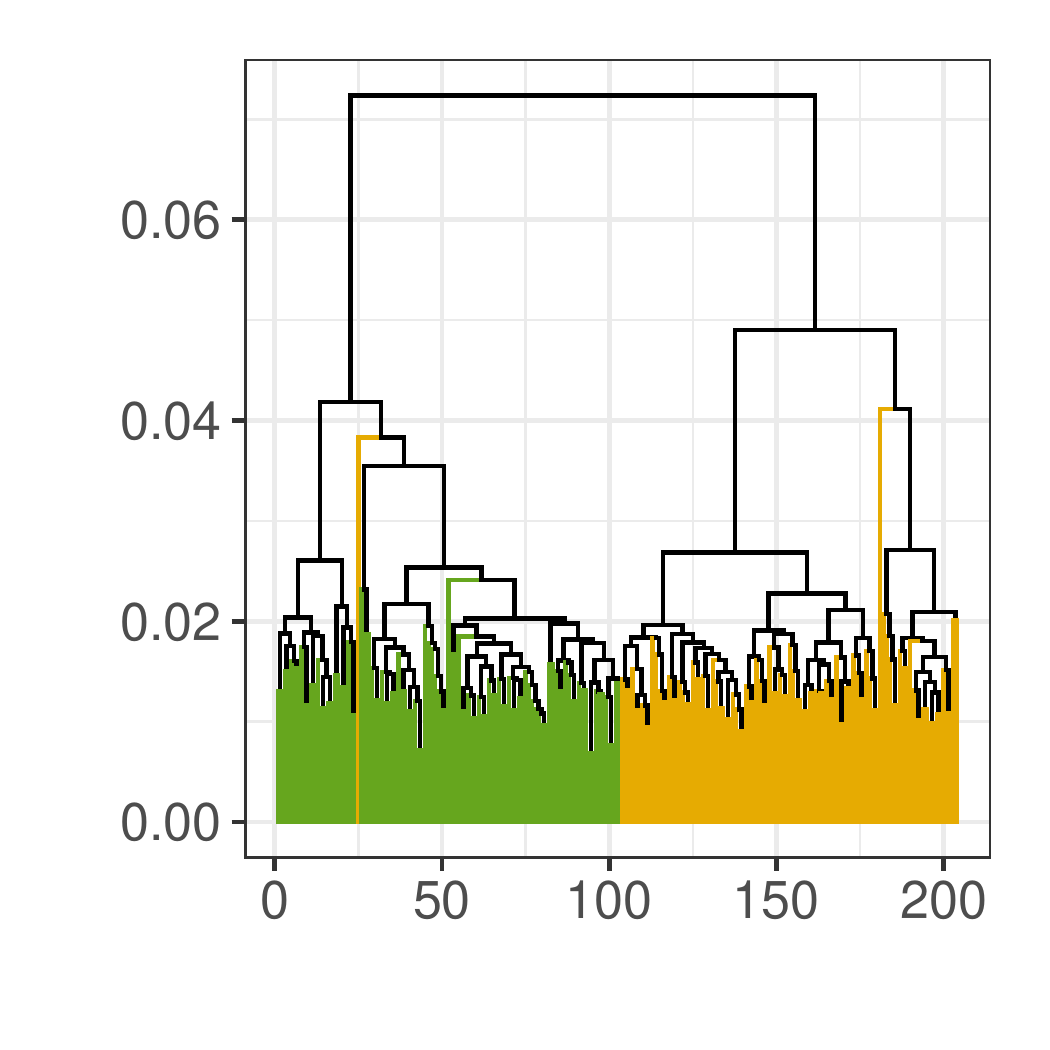}
		\end{minipage}	\begin{minipage}[c]{0.24\textwidth}
			\centering
			\includegraphics[width=\textwidth]{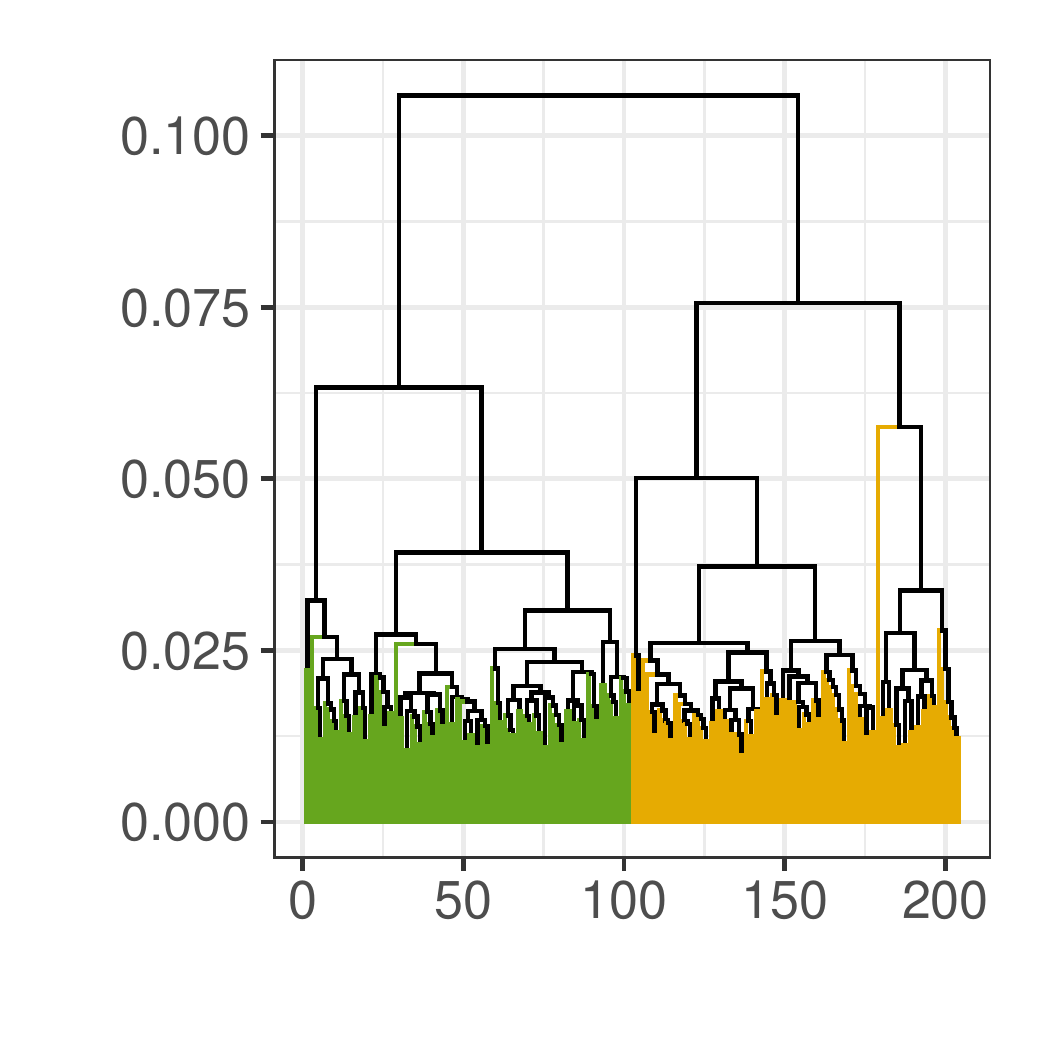}
		\end{minipage}	\begin{minipage}[c]{0.24\textwidth}
			\centering
			\includegraphics[width=\textwidth]{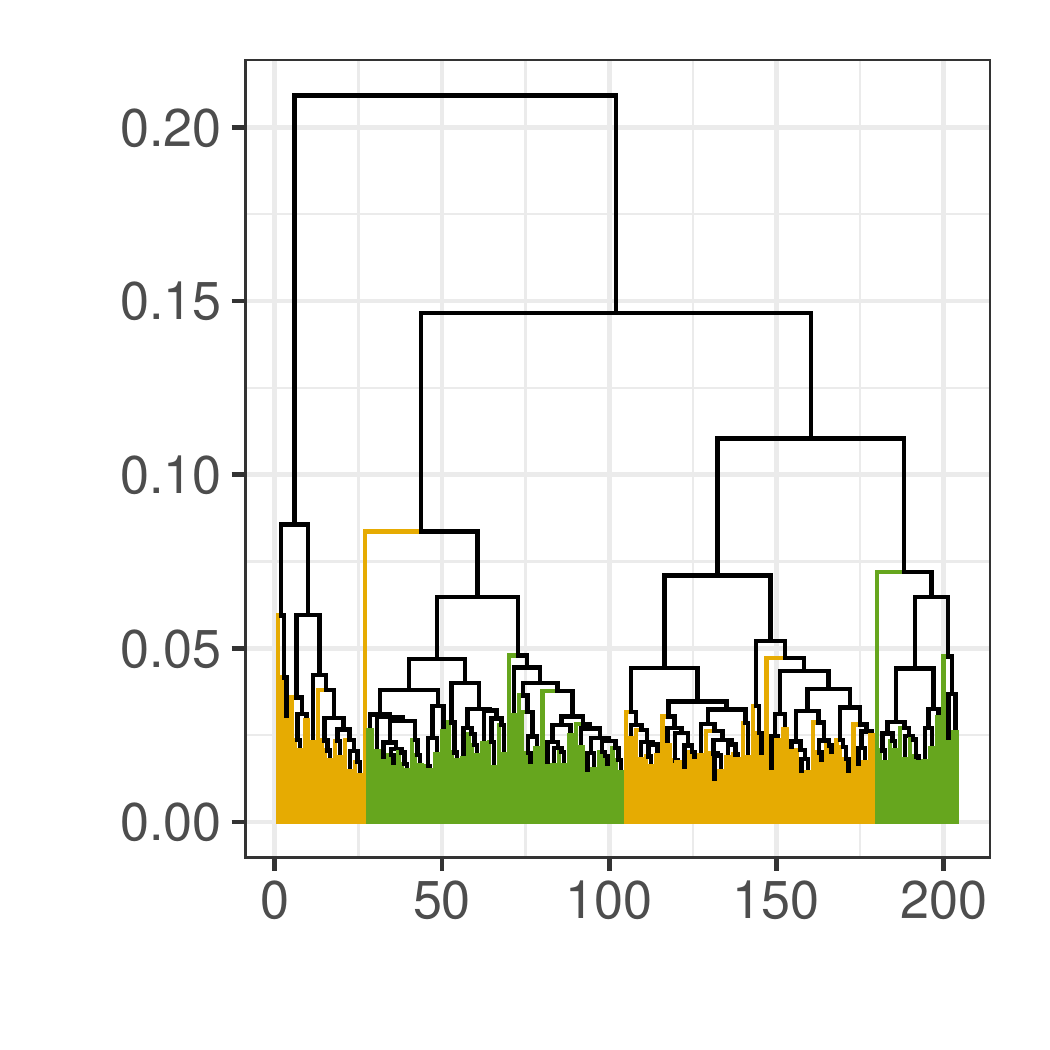}
		\end{minipage}
		\begin{minipage}[l]{0.1\textwidth}
			
			\includegraphics[height=0.12\textheight]{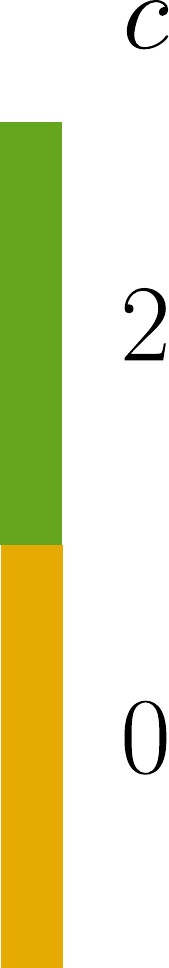}
			\vspace{0.9cm}
		\end{minipage}
		\caption{\textbf{Chromatin Analysis II:} Upper row: The results of an average linkage clustering of the kernel density estimators $\kdexvar{i,25}{m}$ (blue-green) and $\kdexvar{i,10}{m}$ (orange), $1\leq i\leq 102$, for $m\in\{{1}/{9960},{1}/{4980}, {1}/{1245}\}$ (from left to right) based on the $L^1$-distance. Lower row: The results of an average linkage clustering of the kernel density estimators $\kdexvar{i,2}{m}$ (green) and $\kdexvar{i,0}{m}$ (yellow), $1\leq i\leq 102$, for $m\in\{{1}/{9960},{1}/{4980}, {1}/{1245}\}$ (from left to right) based on the $L^1$-distance.
		}\label{fig:chromatinloopdendro}
	\end{figure}
	
	Furthermore, we investigate whether classification based on the DTM-density estimates $\kdexvar{i,c}{m}$ is possible. Here, we restrict ourselves once again to the comparison of $c=25$ with $c=10$ as well as of $c=2$ with $c=0$. For each comparison, we randomly select $5\%/10\%$ (rounded up) of the density estimates for each considered loop density and classify the remaining ones according to the  majority of the labels of their $k=1,3,5$ nearest neighbors in the randomly selected sample. We repeat this procedure for both comparisons 10,000 times and report the relative number of misclassifications in Table \ref{tab:chromatinloopKNN}. The upper row of said table highlights that in the comparison of $c=25$ and $c=10$ the DTM-density estimates are always classified correctly. Things change in the comparison of $c=2$ with $c=0$. While for all $m$ at least $90\%$ of the classifications are correct, there is a noticeable difference between the individual values of $m$. We observe that $m={1}/{4980}$ yields by far the best performance in this setting. It is clear that the loop distributions of the respective chromatin fibers for these two values of $c$ are extremely similar (the chromatin admits few to no loops). Hence, choosing $m$ too large incorporates too much global (non-loop) structure and makes it difficult to discriminate between these two loop densities. On the other hand, choosing $m$ too small seems to incorporate too little structure.\\
	
	To conclude this section, we find that it is possible for a suitable choice of $m$ to clearly distinguish between the different loop densities based on the DTM-density estimators $\kdexvar{i,c}{m}$. We have illustrated that these estimators yield a good summary of the data and can be used to approach the (already quite difficult) problem of chromatin loop analysis for noisy synthetic data. 
	
	\begin{table}[]
		\centering
		\begin{minipage}[c]{0.31\textwidth}
			\begin{center}
				\footnotesize{\begin{tabular}{ |c|c|c|c| } 
						\hline
						& $k=1$ & $k=3$ & $k=5$ \\ \hline
						5\%&0.000 & 0.000 & 0.000 \\ 
						10\% &0.000 & 0.000 & 0.000 \\ 
						\hline
					\end{tabular}
				}
			\end{center}
		\end{minipage}
		\begin{minipage}[c]{0.31\textwidth}
			\begin{center}
				\footnotesize{\begin{tabular}{ |c|c|c|c| } 
						\hline
						& $k=1$ & $k=3$ & $k=5 $\\ \hline
						5\%&0.000 & 0.000 & 0.000 \\ 
						10\% &0.000 & 0.000 & 0.000 \\ 
						\hline
					\end{tabular}
				}
			\end{center}
		\end{minipage}
		\begin{minipage}[c]{0.31\textwidth}
			\begin{center}
				\footnotesize{\begin{tabular}{ |c|c|c|c| } 
						\hline
						& $k=1$ & $k=3$ & $k=5 $\\ \hline
						5\%&0.000 & 0.000 & 0.000 \\ 
						10\% &0.000 & 0.000 & 0.000 \\ 
						\hline
					\end{tabular}
				}
			\end{center}
		\end{minipage}
		\begin{minipage}[c]{0.31\textwidth}
			\begin{center}
				\footnotesize{\begin{tabular}{ |c|c|c|c| } 
						\hline
						& $k=1$ & $k=3$ & $k=5$ \\ \hline
						5\%&0.029 & 0.049& 0.069 \\ 
						10\% &0.020 &0.033& 0.043 \\ 
						\hline
					\end{tabular}
				}
			\end{center}
		\end{minipage}	\begin{minipage}[c]{0.31\textwidth}
			\begin{center}
				\footnotesize{\begin{tabular}{ |c|c|c|c| } 
						\hline
						& $k=1$ & $k=3$ & $k=5 $\\ \hline
						5\%&0.012 &0.019& 0.039 \\ 
						10\% &0.001 &0.007 &0.012 \\ 
						\hline
					\end{tabular}
				}
			\end{center}
		\end{minipage}
		\begin{minipage}[c]{0.31\textwidth}
			\begin{center}
				\footnotesize{
					\begin{tabular}{ |c|c|c|c| } 
						\hline
						& $k=1$ & $k=3$ & $k=5$ \\ \hline
						5\%&0.029& 0.064& 0.100 \\ 
						10\% &0.008& 0.021& 0.025 \\ 
						\hline
					\end{tabular}
				}
			\end{center}
		\end{minipage}
		
		\caption{\textbf{Chromatin Analysis III:} Upper row: The relative number of missclassifications of a $k$-nearest neighbor classification (w.r.t. the $L^1$-distance) based on the kernel density estimators $\kdexvar{i,25}{m}$ and $\kdexvar{i,10}{m}$, $1\leq i\leq 102$, for $m\in\{{1}/{9960},{1}/{4980}, {1}/{1245}\}$ (from left to right). Lower row: The relative number of missclassifications of a $k$-nearest neighbor classification (w.r.t. the $L^1$-distance) based on the kernel density estimators $\kdexvar{i,2}{m}$ and $\kdexvar{i,0}{m}$, $1\leq i\leq 102$, for $m\in\{{1}/{9960},{1}/{4980}, {1}/{1245}\}$ (from left to right).
		}\label{tab:chromatinloopKNN}
	\end{table}

	\subsection{Sensitivity Regarding the Choice of \textit{m}}
	The parameter $m$ determining the percentage of nearest neighbors, which are included in the analysis, is set by the experimenter and can therefore be seen as tuning parameter. Naturally, the question arises in which sense and to which extend the proposed methodology is robust regarding the choice of $m$. In this paper, we have considered an example in which spaces differ regarding their large scale characteristics (see Section \ref{subsec:disc props}, spaces $\mathcal{Y}_1\ldots,\mathcal{Y}_7$ in Figure \ref{fig:geodesic}) and the example of chromatin loops in Section \ref{sec:CLA}, where structures differ locally and globally, but only local features are of interest. General information of this sort is typically known to the experimenter prior to data analysis.\\ We have seen in Section \ref{subsec:disc props} that large scale analyses require large values of $m$. There, it is shown that $m=1$ provides very good classification results already for small sample sizes (Figure \ref{fig:dtm dsicrimination}). However, as sample sizes increase, values of $m$ as small as $m=0.2$ also permit perfect classification between the spaces (Figure \ref{fig:influence of m}). \\
	Considering small scale characteristics on the other hand, the parameter $m$ should be chosen small, as indicated by the results presented in Section \ref{sec:CLA}. In the example considered there, robustness with respect to the choice of $m$ is also given, albeit in a smaller range (see Figures \ref{fig:chromatinanalysisI} and \ref{fig:chromatinloopdendro}). The global behaviour of distances ($m=1$) is not discriminative in this case at all as can be seen in Figure \ref{fig:loopsmlarge} (right hand side), where the two extreme conditions $c=0$ and $c=25$ cannot be separated with $m=1$. Yet, the two plots on the left of Figure \ref{fig:loopsmlarge} for $m=0.5$ show that the conditions $c=0$ and $c=25$ can still be decently separated, which is quite impressive given the difficulty of the problem. 
	
	\begin{figure}
		\centering
		\begin{minipage}[c]{0.215\textwidth}
			\centering
			\small\qquad$m=0.5$
			\includegraphics[width=\textwidth]{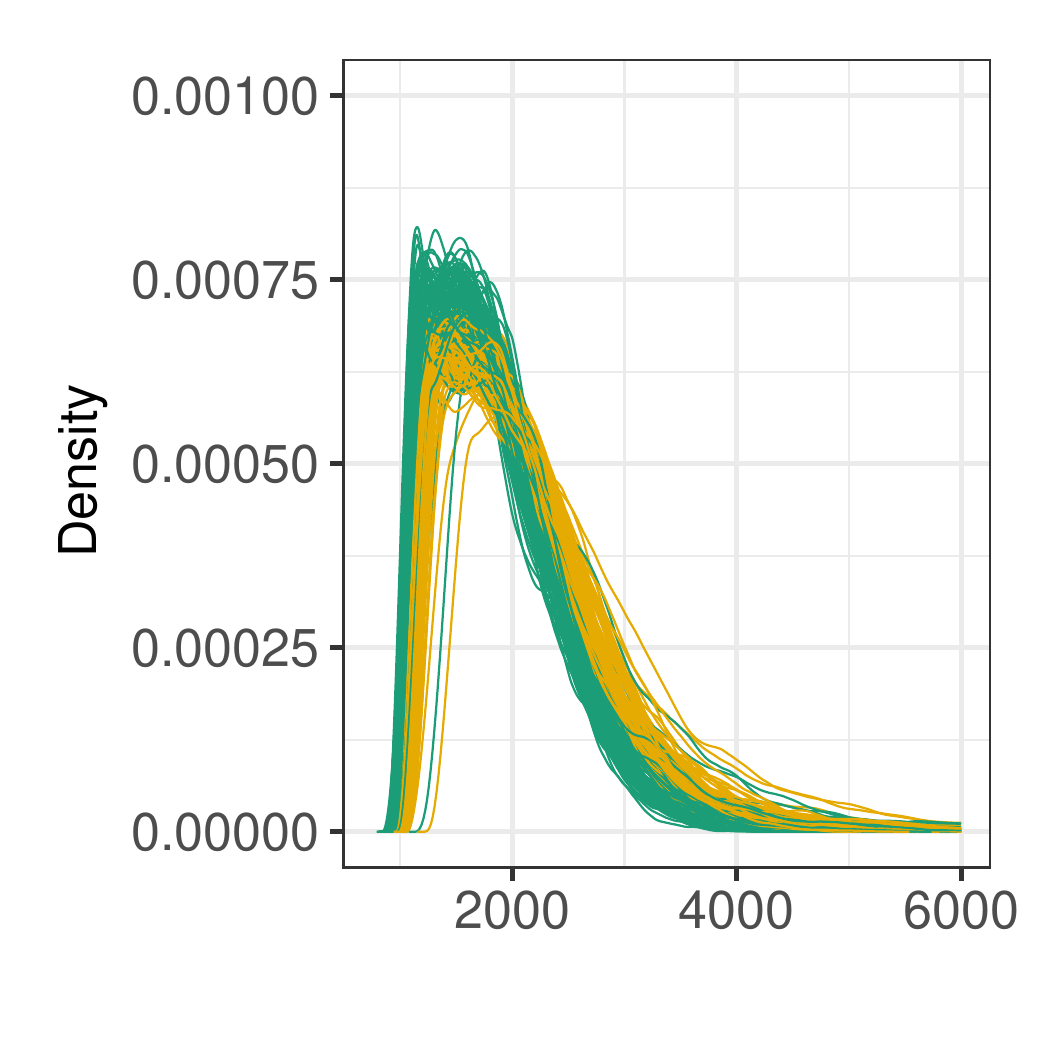}
		\end{minipage}
		\begin{minipage}[c]{0.215\textwidth}
			\centering
			\small\qquad$m=0.5$
			\includegraphics[width=\textwidth]{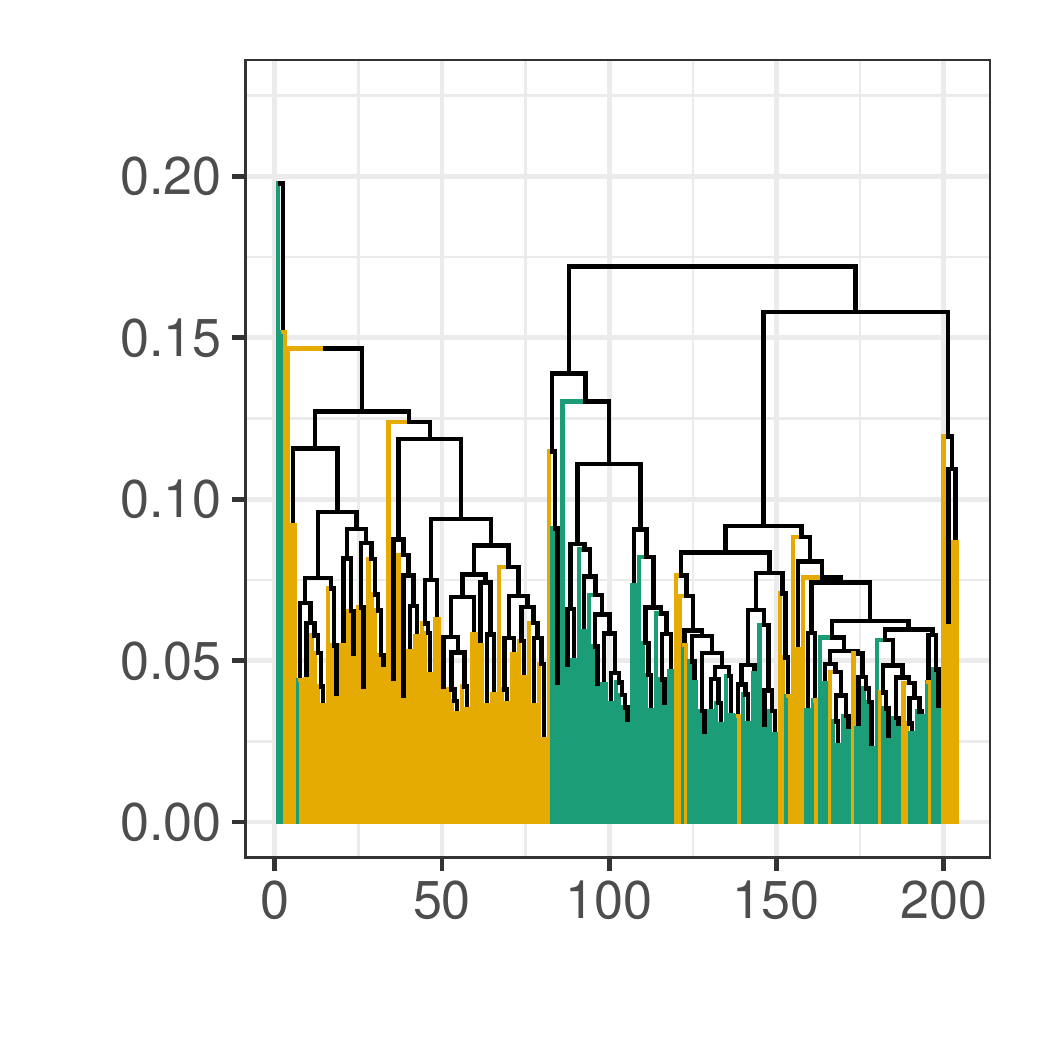}
		\end{minipage}
		\begin{minipage}[c]{0.215\textwidth}
			\centering
			\small\qquad$m=1$
			\includegraphics[width=\textwidth]{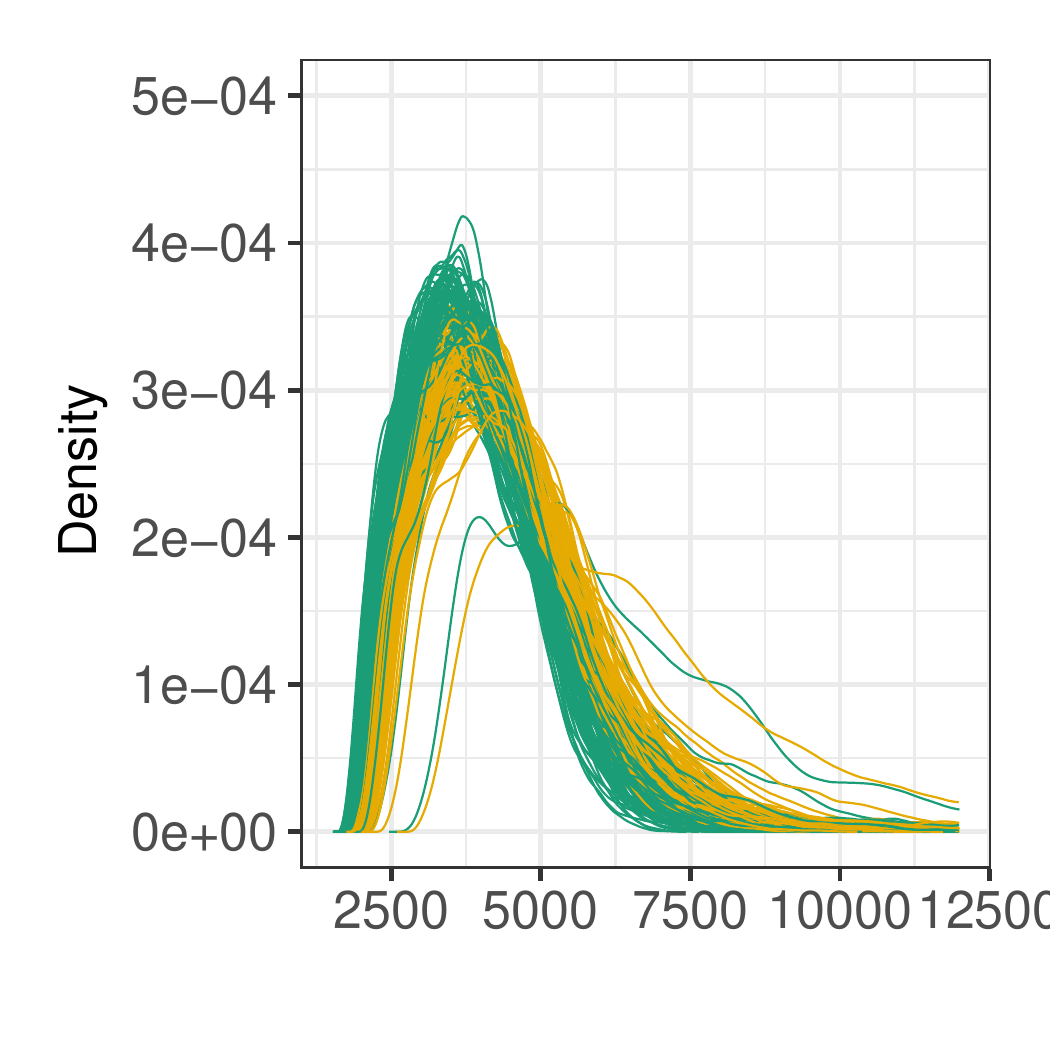}
		\end{minipage}
		\begin{minipage}[c]{0.215\textwidth}
			\centering
			\small\qquad$m=1$
			\includegraphics[width=\textwidth]{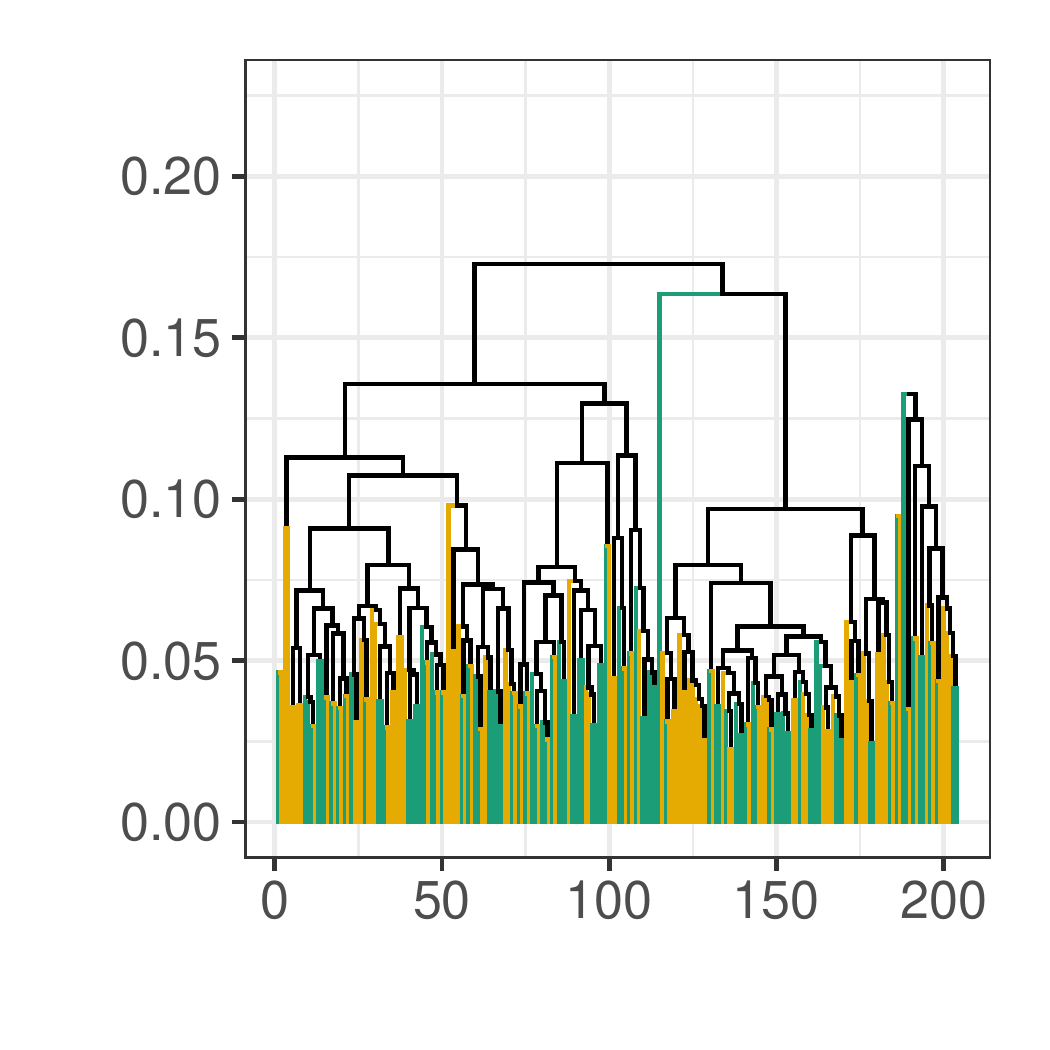}
		\end{minipage}
		\begin{minipage}[l]{0.1\textwidth}
			\hspace{0.2cm}
			\includegraphics[height=0.14\textheight]{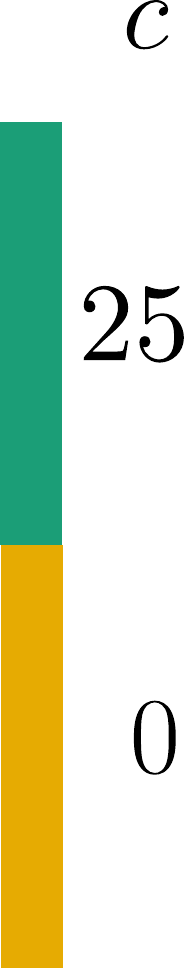}
			\vspace{0.5cm}
		\end{minipage}
		
		\caption{\textbf{Chromatin Analysis IV:}  The results of an average linkage clustering of the kernel density estimators $\kdexvar{i,0}{m}$ (yellow) and $\kdexvar{i,25}{m}$ (blue green), $1\leq i\leq 102$, for $m\in\{0.5,1\}$ (from left to right).
		}\label{fig:loopsmlarge}
	\end{figure}
	Overall, we establish that the methodology provided in this paper does depend on the choice of $m$, but not critically so, as stability over suitable parameter ranges seems to be well maintained in general.

	
	\acks{We thank the Action Editor and two anonymous referees for their careful reading of our manuscript and their many insightful and constructive comments and suggestions.
		C. Weitkamp gratefully acknowledges support by the DFG Research Training Group 2088.}
	
	
	\appendix
	\numberwithin{theorem}{section}
	\section{Proof of Lemma \ref{lemma:dtm m=1}}\label{sec:proof of lemma:dtm m=1}
	In this section, we state the full proof of Lemma \ref{lemma:dtm m=1}.\\
	
	\begin{proof}[Lemma \ref{lemma:dtm m=1}]
		Let $X=(X_1,\dots,X_d)\sim\muX$ and $x=(x_1,\dots,x_d)\in\Rd$.
		\begin{enumerate}
			\item  We observe that 
			\begin{align*}
				\dtmone(x)=\E{||X-x||^2}=&\sum_{i=1}^d \left(\E{X_i^2}-2x_i\E{X_i}+x_i^2\right)\\
				=&\sum_{i=1}^d \left(\left(x_i-\E{X_i}\right)^2+\E{X_i^2}-(\E{X_i})^2\right).
			\end{align*}
			Setting $c_i= \E{X_i}$ and $\zeta =\sum_{i=1}^d\left(\E{X_i^2}-(\E{X_i})^2\right)$ yields the claim.
			
			\item  This follows directly from the fist statement.
			
			\item  The fist statement implies that $\nabla\dtmone(x)=2(x-\E{X}).$ Clearly, this is zero if and only if $x=\E{X}$. 
			
			\item By the second and third statement $\dtmone$ is three times continuously differentiable and $\nabla\dtmone>0$ on ${\dtmone}^{-1}([y-2h_0,y+2h_0])$. In consequence, there exists an open set  $U\supset{\dtmone}^{-1}([y-h_0,y+h_0])$ such that the function
			\[\varphi:U\subset\R^d\to\R^d; ~x \mapsto \frac{\nabla\dtmone(x)}{||\nabla\dtmone(x)||^2}\]
			is $C^{2}(U, \R^d)$. By Theorem 2 in Chapter 15 of \citet{hirsch1974differential} there is a unique flow $\Phi^*:[-h_0,h_0]\times W\to \R^d$ with 
			\begin{equation}\label{eq:ODE dtmone}\begin{cases}
					\frac{\partial}{\partial v}\Phi^*(v,x)=\frac{\nabla\dtmone(\Phi^*(v,x))}{||\nabla\dtmone(\Phi^*(v,x))||^2}\\
					\Phi^*(0,x)=x,
			\end{cases}\end{equation}
			where $W\subset \R^d$ is an open set that contains ${\dtmone}^{-1}([y-h_0,y+h_0])$. Differentiating the function $v\mapsto \dtmone(\Phi^*(v,x))$ immediately shows that $\dtmone\left(\Phi^*(v,x)\right)=\dtmone(x)+v$. This implies that $\Phi^*(v,{\dtmone}^{-1}(\{y\}))={\dtmone}^{-1}(\{y+v\})$. In consequence, it only remains to prove that $\Phi$ defined in the statement is a solution of the ordinary differential equation \eqref{eq:ODE dtmone}, which is straightforward. 
			
		\end{enumerate}
	\end{proof}
	\section{Additional Details on Example \ref{ex:set level set ex}-\ref{ex:2d nonunif}}\label{sec:detail for examples}
	In this section, we will provide additional details on the examples considered in Section \ref{sec:examples}. 
	\subsection{Example \ref{ex:first ex}}\label{subsec:details ex 1}
	Let $\X=[0,1]$ and let $\muX$ denote the uniform distribution on $\X$. 
	First, we derive $\dtmone$ and $\fdtmone$. To this end, we observe that for $x\in\X$ and $X\sim\muX$
	\[\dtmone(x)=\int_0^1\!F_x^{-1}(t)\,dt = \E{(X-x)^2}=\frac{1}{3}-x+x^2.\]
	Based on this, we derive that
	\[\fdtmone(t)=\begin{cases}
		\frac{2\sqrt{3}}{\sqrt{12t-1}}  & \frac{1}{12}< t\leq \frac{1}{3} ,\\
		0 & \, \text{else.} \end{cases}\]
	
	Next, we come to general $m$. In order to calculate $\dtmvar{m}$ (and potentially $\fdtmvar{m}$), it is necessary to derive the family $\left(F_x^{-1}\right)_{x\in\X}$ explicitly. This is straightforward in this setting. 
	We find that
	\[\dtmvar{m}(x)=\frac{1}{m}\int_0^{m}\!F_x^{-1}(u)\,du=\begin{cases}
		x^2 -mx+ \frac{m^2}{3}&\text{for }\, 0\leq x<\frac{m}{2},\\
		\frac{m^2}{12}  &\text{for }\, \frac{m}{2}\leq x \leq 1-\frac{m}{2},\\
		(1-x)^2-m(1-x)+\frac{m^2}{3}   &\text{for }\, 1-\frac{m}{2}\leq x \leq 1.\\
	\end{cases}\]
	Since $\dtmvar{m}$ is constant for $x\in[m/2,1-m/2]$ it is immediately clear that the corresponding distribution function $\Fdtmvar{0.1}$ is not continuous. Indeed, we find that
	\[\Fdtmvar{m}(y)=\Prob\left(\dtmvar{m}(X)\leq y\right)=\begin{cases}
		0 &\text{for }\, y<\frac{m^2}{12},\\
		1-m+2\sqrt{y-\frac{m^2}{12}}  & \text{for }\,\frac{m^2}{12}\leq y \leq \frac{m^2}{3},\\
		1   & \text{for }\,y > \frac{m^2}{3}.\\
	\end{cases}\]
	\subsection{Example \ref{ex:example 2}}\label{subsec:details ex 2}
	Let $\X=[0,1]$ and let $\muX$ denote the probability distribution on $[0,1]$ with density $f(x)=2x$. Let $m=0.1$. As previously, we have to explicitly calculate the family $\left(F_x^{-1}\right)_{x\in\X}$. This is again fairly simple.
	We find that
	\[\dtmvar{0.1}(x)=\frac{1}{0.1}\int_0^{0.1}\!F_x^{-1}(u)\,du=\begin{cases}
		x^2-\frac{2}{3}\sqrt{\frac{2}{5}}x+\frac{1}{20} & 0\leq x\leq \frac{\sqrt{0.1}}{2},\\
		\frac{1}{4800x^2}  & \frac{\sqrt{0.1}}{2}\leq x \leq \frac{1}{2}+\frac{3}{2\sqrt{10}},\\
		x^2+\left(18 \sqrt{\frac{2}{5}}-\frac{40}{3}\right)x+\frac{19}{20}  & \frac{1}{2}+\frac{3}{2\sqrt{10}}< x \leq 1.\\
	\end{cases}\]
	It is obvious that in this case $\dtmvar{0.1}$ is almost nowhere constant. Furthermore, 
	this allows us to derive that
	\[\fdtmvar{0.1}(y) =\begin{cases}
		\frac{80 \sqrt{5}-108 \sqrt{2}}{\sqrt{13661 - 4320 \sqrt{10} + 180 y}}& -\frac{13661}{180}+24\sqrt{10}<y\leq \frac{19-6\sqrt{10}}{120} ,\vspace{2mm}\\
		1 + \frac{40 \sqrt{5}-54 \sqrt{2}}{\sqrt{
				13661 - 4320 \sqrt{10} + 180 y}}+\frac{1}{4800y^2} & \frac{19-6\sqrt{10}}{120}<y\leq \frac{-683}{60}+18\sqrt{\frac{2}{5}} ,\vspace{2mm}\\
		\frac{1}{4800y^2} & \, \frac{-683}{60}+18\sqrt{\frac{2}{5}}<y\leq \frac{1}{120}\vspace{2mm}\\
		-1 + \frac{2}{\sqrt{-\frac{1}{2} + 90 y}} & \, \frac{1}{120}<y\leq \frac{1}{20}\vspace{2mm}\\
		0& \text{else}. \end{cases}\]
	
	\subsection{Example \ref{ex:constant part}}\label{subsec:details ex 3}
	Let $\X=[0,1]^2$ and let $\muX$ stand for the uniform distribution on $\X$. Choose $m = 1$ and let $\X\sim\muX$. Then, it is possible to derive that for $x=(x_1,x_2)\in\X$
	\[\dtmone(x)=\E{||X-x||^2} =x_1^2 + x_2^2-x_1-x_2+\frac{2}{3}.\]
	Based on this, we derive that
	\[\fdtmone(y) =\begin{cases}
		\pi  & \frac{1}{6}\leq y\leq \frac{5}{12} ,\\
		2\arccot{2\sqrt{y-\frac{5}{12} }} - 2\arctan\left(\sqrt{4y-\frac{5}{3} }\right) & \frac{5}{12}<y\leq \frac{2}{3} ,\\
		0 & \, \text{else.} \end{cases}\]
	
	\subsection{Example \ref{ex:2d nonunif}}\label{subsec:details ex 4}
	Let $\X$ denote a disk in $\R^2$ centered at $(0,0)$ with radius 1 and let $\muX$ denote probability measure with density $f$ defined in \eqref{eq:2d nonunif density}.
	In this case, it is for $m=1$ straightforward to derive that for $x=(x_1,x_2)\in\X$
	\begin{equation}\label{eq:ex4dtm}
		\dtmone(x)= x_1^2 + x_2^2+\frac{1}{3}.
	\end{equation}
	The corresponding DTM-density is given as
	\begin{equation}\label{eq:ex4den}
		\fdtmone(y) =\begin{cases}
			-2y+\frac{8}{3} & \frac{1}{3}<y\leq \frac{4}{3} ,\\
			0 & \, \text{else.} \end{cases}
	\end{equation}
	\section{Proof of Theorem \ref{thm:kde limit}}\label{sec:proof of thm:kde limit}
	In this section, we give the full proof of Theorem \ref{thm:kde limit}. The proof is composed of four steps, each of which formulated as an independent lemma (see Section \ref{sec:technical lemmas}). 
	\begin{itemize}
		\item[]\underline{\textbf{Step 1:}} Replacement of $\empdtm(X_i)$ by $ \dtm(X_i)$ (Lemma \ref{lemma:stat rewrite}).\\
		We provide a decomposition of $\kde$ in a sum of two leading terms in which $\empdtm(X_i)$ is replaced by $\dtm(X_i)$ in the argument of the kernel $K$ and we show that the remainder terms are negligible.
		\item[]\underline{\textbf{Step 2:}} Introducing $U$-statistics (Lemma \ref{lemma:ustat reform}).\\
		It is shown that the leading terms obtained in Step 1 can be written as a (sum of two) $U$-statistic(s) asymptotically.
		\item[]\underline{\textbf{Step 3:}} Hoeffding decomposition (Lemma \ref{lemma:Hoeffding decomposition}).\\
		Applying a Hoeffding decomposition allows to derive a representation of the (sum of two) $U$-statistic(s) of step 2 as a sum of a deterministic term (expectation), a stochastic leading term consisting of a sum of independent random variables and a remainder term.
		\item[]\underline{\textbf{Step 4:}} CLT for the leading term of Step 3 (Lemma \ref{lemma:limit of Un}). \\
		Since the leading term of Step 3 is a sum of centered independent random variables, we can apply a standard CLT to show its asymptotic normality.
	\end{itemize}
	
	\subsection{Auxiliary lemmas representing Step 1 - Step 4}\label{sec:technical lemmas} Before we come to the proof of Theorem \ref{thm:kde limit}, we will establish several auxiliary results.
	In order to highlight the overall proof strategy, the corresponding proofs are deferred to Section \ref{subsec:proof of aux}. We begin this section, by addressing Step 1.
	\begin{lemma}[Step 1]\label{lemma:stat rewrite} Assume that Setting \ref{setting} holds and let Condition \ref{condition} be met. Then, it follows that
		\begin{align*}
			\kde (y) =\frac{1}{n}\sum_{i=1}^{n}\left[\frac{1}{h}K\left(\frac{\dtm(X_i)-y}{h}\right)+\frac{1}{h^2}K'\left(\frac{\dtm(X_i)-y}{h}\right)A_n(X_i)\right]\\ +\mathcal{O}_P\left(\frac{1}{nh^3}\right)+ o_P\left(\frac{\log(n)^{1/(2b)}}{n^{1/2+1/(2b)}h}\right),
		\end{align*}
		where
		\[A_n(x)\coloneqq\frac{1}{m}\int_0^{F_x^{-1}(m)}\!F_x(t)-\widehat{F}_{x,n}(t)\,dt.\]
	\end{lemma}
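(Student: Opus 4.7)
The plan is to replace $\empdtm(X_i)$ by $\dtm(X_i)$ inside the kernel $K$ via Taylor expansion and to identify $A_n(X_i)$ as the dominant linear component of the perturbation $\empdtm(X_i)-\dtm(X_i)$. A preliminary integration by parts, valid for any CDF $F$ on $[0,\infty)$,
\[
\int_0^m F^{-1}(u)\,du = m F^{-1}(m) - \int_0^{F^{-1}(m)}\!F(t)\,dt,
\]
applied to $F_x$ and $\widehat F_{x,n}$ and subtracted, yields the clean decomposition
\[
\empdtm(x) - \dtm(x) = A_n(x) + R_n(x), \qquad R_n(x) := -\frac{1}{m}\!\int_{F_x^{-1}(m)}^{\widehat F_{x,n}^{-1}(m)}\!\bigl(\widehat F_{x,n}(t) - m\bigr)\,dt,
\]
in which $A_n(x)$ is a centred empirical linearisation and $R_n(x)$ a pure quantile remainder.

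Next I would apply a second-order Taylor expansion of $K$ around $(\dtm(X_i)-y)/h$, divide by $h$, and average over $i$. Substituting the decomposition above produces precisely the two main terms of the lemma together with (a) a quadratic Taylor remainder $\frac{1}{2nh^3}\sum_i K''(\xi_i)(\empdtm(X_i)-\dtm(X_i))^2$ and (b) a linear contribution $\frac{1}{nh^2}\sum_i K'\bigl((\dtm(X_i)-y)/h\bigr) R_n(X_i)$ that have to be shown to be of the advertised order.

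To control (a), I would invoke the uniform bound $\sup_{x\in\X}|\empdtm(x)-\dtm(x)|=\mathcal{O}_P(\sqrt{\log n/n})$ available from \citet{chazal2016rates,chazal2017robust} and use $\|K''\|_\infty<\infty$. Under the bandwidth regime $nh^2\to\infty$ this uniform bound is $o_P(h)$, so $K''(\xi_i)\ne 0$ can occur only for at most $\mathcal{O}_P(nh)$ indices, producing an overall contribution of order $\mathcal{O}_P(\log n/(nh^2))$, which is $\mathcal{O}_P(1/(nh^3))$ under $h=o(n^{-1/5})$. For (b) I would estimate
\[
|R_n(x)|\le m^{-1}\,\bigl|\widehat F_{x,n}^{-1}(m)-F_x^{-1}(m)\bigr|\cdot\!\!\sup_{t\in[F_x^{-1}(m),\widehat F_{x,n}^{-1}(m)]}\!\!|\widehat F_{x,n}(t)-m|.
\]
The second factor is $\mathcal{O}_P(\sqrt{\log n/n})$ by a uniform Glivenko--Cantelli-type bound for the VC class $\{\mathbf 1_{\|\cdot-x\|^2\le t}:x\in\X,\,t\ge 0\}$, and the first factor is $\mathcal{O}_P((\log n/n)^{1/(2b)})$ by combining this bound with the modulus-of-continuity hypothesis $\omega_\X(u)\le\kappa u^{1/b}$ from Condition \ref{condition}. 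Localising again to the $\mathcal{O}_P(nh)$ indices where $K'\ne 0$ yields the announced $o_P\bigl((\log n)^{1/(2b)}/(h\,n^{1/2+1/(2b)})\bigr)$ remainder.

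The principal technical obstacle is the \emph{uniformity in $x\in\X$} of the bounds on $\widehat F_{x,n}-F_x$ and on the quantile discrepancy, since $R_n$ has to be evaluated at the random data points $X_i$; pointwise control would not suffice to propagate through the sum. Once this uniformity is secured by a VC-type argument for the parametric family $\{F_x\}_{x\in\X}$, the remainder of the proof becomes a bookkeeping exercise balancing the support constraints on $K'$ and $K''$ against the derived pointwise rates.
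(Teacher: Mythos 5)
Your route is essentially the paper's: a second-order Taylor expansion of $K$, the exact decomposition $\empdtm(x)-\dtm(x)=A_n(x)+R_n(x)$ with the same quantile remainder $R_n$ (the paper imports it from the proof of Theorem 5 in \citet{chazal2017robust}; your rederivation via the integration-by-parts identity is fine), the bound $|R_n(x)|\le m^{-1}\bigl|\widehat{F}_{x,n}^{-1}(m)-F_x^{-1}(m)\bigr|\sup_t\bigl|\widehat{F}_{x,n}(t)-F_x(t)\bigr|$, uniformity in $x$ via a VC-type argument, and the modulus-of-continuity hypothesis \eqref{eq:cond mod of cont} to control the quantile discrepancy. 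Your localisation of the $K'$-term to the $\mathcal{O}_P(nh)$ indices with $|\dtm(X_i)-y|\le h$ is legitimate (it only needs $\fdtm$ bounded near $y$, which Condition \ref{condition} supplies) and is in fact the step that produces the factor $1/h$ rather than $1/h^2$ in the second remainder — a point the paper's own write-up glosses over.

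There are, however, two concrete slips. First, for the Taylor remainder you localise the indices with $K''(\xi_i)\neq 0$, which requires $\sup_x|\empdtm(x)-\dtm(x)|=o_P(h)$, i.e.\ roughly $nh^2/\log n\to\infty$; this is \emph{not} among the hypotheses (only $nh\to\infty$ and $h=o(n^{-1/5})$ are assumed, which allow $nh^2\to 0$). The localisation is also unnecessary: the cited results (Theorem 9 of \citet{chazal2017robust} for $m<1$, and Lemma \ref{lem:chazalm1} for $m=1$) give $\sup_x|\empdtm-\dtm|=\mathcal{O}_P(n^{-1/2})$ \emph{without} a log factor, and then the crude bound $\tfrac{1}{2h^3}\norm{K''}_\infty\sup_x|\empdtm-\dtm|^2=\mathcal{O}_P\bigl(1/(nh^3)\bigr)$ already matches the statement. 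Second, your log-inflated rates do not reproduce the stated remainder: $\mathcal{O}_P\bigl((\log n/n)^{1/(2b)}\bigr)\cdot\mathcal{O}_P\bigl(\sqrt{\log n/n}\bigr)$ yields only $\mathcal{O}_P\bigl(\log(n)^{1/2+1/(2b)}n^{-1/2-1/(2b)}h^{-1}\bigr)$, which exceeds the claimed $o_P\bigl(\log(n)^{1/(2b)}n^{-1/2-1/(2b)}h^{-1}\bigr)$ by $\sqrt{\log n}$ and is only a big-$O$. The VC argument in fact gives the log-free $\mathcal{O}_P(\sqrt{d/n})$ for the CDF deviation, and the paper obtains the small-$o$ for the quantile deviation by writing $\widehat{F}_{x,n}^{-1}(m)\overset{\mathcal{D}}{=}F_x^{-1}\bigl(H_n^{-1}(m)\bigr)$ via uniform order statistics and applying an exponential inequality to $|m-H_n^{-1}(m)|$ before invoking $\omega_\X(u)\le\kappa u^{1/b}$; your sketch of ``combining the GC bound with the modulus'' also needs a word about flat parts of $F_x$ (where $F_x^{-1}(F_x(s))<s$ is possible), which the order-statistics route sidesteps. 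Finally, note the case $m=1$: your unified decomposition gives $R_n\equiv 0$ there, so \eqref{eq:cond mod of cont} (assumed only for $m<1$) is never needed, but the uniform $\mathcal{O}_P(n^{-1/2})$ bound must then come from Lemma \ref{lem:chazalm1} rather than from the $m<1$ results you cite.
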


	As a direct consequence of Lemma \ref{lemma:stat rewrite}, we find that for $h\in o\left(1/n^{1/5}\right)$ the statistic
	\begin{align}
		V_n(y)= \frac{1}{n}\sum_{i=1}^n\frac{1}{h} K\left(\frac{\dtm(X_i)-y}{h}\right) +\frac{1}{n}\sum_{i=1}^n\frac{1}{h^2}K'\left(\frac{\dtm(X_i)-y}{h}\right)A_n(X_i)\nonumber\\
		\eqqcolon V_n^{(1)}(y)+V_n^{(2)}(y)\label{eq:def of Vn}
	\end{align}
	drives the limit behavior of $\sqrt{nh}\left(\kde(y)-\fdtm(y)\right)$. Next, we will establish that the statistic $V_n(y)$ can, up to asymptotically negligible terms, be written as a $U$-statistic (see e.g.\ \citet{van2000asymptotic} for more information on $U$-statistics). 
	\begin{lemma}[Introduction of $U$-statistics, Step 2]\label{lemma:ustat reform}
		Assume Setting \ref{setting} and let $V_n^{(1)}$ and $V_n^{(2)}$ be as defined in \eqref{eq:def of Vn}. Then, we have
		\[V_n^{(1)}(y)=\frac{2}{n(n-1)}\sum_{1\leq i<j\leq n}g^{(1)}_{y,h}(X_i,X_j),\]
		where
		\[g^{(1)}_{y,h}(x_1,x_2)=\frac{1}{2h}\left(K\left(\frac{\dtm(x_1)-y}{h}\right)+K\left(\frac{\dtm(x_2)-y}{h}\right)\right).\]
		Furthermore,
		\[V_n^{(2)}(y)=\frac{2}{n(n-1)}\sum_{1\leq i<j\leq n}g^{(2)}_{y,h}(X_i,X_j)+\mathcal{O}_P\left(\frac{1}{nh^2}\right),\]
		where 
		\begin{align*}
			g^{(2)}_{y,h}(x_1,x_2)=&\frac{1}{2mh^2}\Bigg[K'\left(\frac{\dtm(x_1)-y}{h}\right)\int_0^{F_{x_1}^{-1}(m)}\!F_{x_1}(t)-\mathds{1}_{\{||x_1-x_2||^2\leq t\}}\,dt\\&+K'\left(\frac{\dtm(x_2)-y}{h}\right)\int_0^{F_{x_2}^{-1}(m)}\!F_{x_2}(t)-\mathds{1}_{\{||x_1-x_2||^2\leq t\}}\,dt\Bigg].\end{align*}
	\end{lemma}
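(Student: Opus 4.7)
The representation for $V_n^{(1)}$ is obtained immediately from the elementary identity $\sum_{1 \le i<j\le n}(a(X_i)+a(X_j)) = (n-1)\sum_{i=1}^n a(X_i)$. Applying it with $a(x) = \frac{1}{h} K\bigl(\frac{\dtm(x)-y}{h}\bigr)$ converts $V_n^{(1)}$ exactly into the stated U-statistic with kernel $g^{(1)}_{y,h}$; no remainder arises for the first claim.

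For $V_n^{(2)}$ the plan is to unfold $A_n(X_i)$ by substituting $\widehat{F}_{X_i,n}(t) = \frac{1}{n}\sum_{k=1}^n \indifunc{||X_i - X_k||^2 \le t}$, which rewrites $V_n^{(2)}(y)$ as the ordered double sum
\[V_n^{(2)}(y) = \frac{1}{n^2 m h^2}\sum_{i=1}^n \sum_{k=1}^n K'\!\Bigl(\frac{\dtm(X_i)-y}{h}\Bigr) \int_0^{F_{X_i}^{-1}(m)}\!\Bigl(F_{X_i}(t) - \indifunc{||X_i - X_k||^2 \le t}\Bigr)dt.\]
I would then split this into the diagonal contribution $i=k$ and the off-diagonal contribution $i \ne k$. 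The off-diagonal part is symmetrized via $\sum_{i \ne k}f(X_i,X_k) = \sum_{i<k}[f(X_i,X_k)+f(X_k,X_i)]$, and using $||X_i-X_k|| = ||X_k-X_i||$ the resulting bracket equals exactly $2 m h^2\, g^{(2)}_{y,h}(X_i,X_k)$, the kernel given in the statement. Because the outer factor is $1/n^2$ while a U-statistic normalisation requires $1/(n(n-1))$, the identification yields $V_n^{(2)}(y) = U_n - U_n/n + R_{\mathrm{diag}}$, where $U_n := \frac{2}{n(n-1)}\sum_{i<k}g^{(2)}_{y,h}(X_i,X_k)$ and $R_{\mathrm{diag}}$ is the diagonal sum.

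Controlling the two corrections $U_n/n$ and $R_{\mathrm{diag}}$ is the final step, and both admit the same crude deterministic bound. Since $K'$ is bounded, compactness of $\X$ gives $F_{X_i}^{-1}(m) \le \diam{\X}^2$, and $|F_{X_i}(t) - \indifunc{\cdot}| \le 1$, one obtains $|g^{(2)}_{y,h}(x_1,x_2)| \le C/(m h^2)$ pointwise. Hence $|U_n/n| = \mathcal{O}(1/(nh^2))$ almost surely, and the same bounds applied to $R_{\mathrm{diag}} = \frac{1}{n^2 m h^2}\sum_i K'\bigl(\frac{\dtm(X_i)-y}{h}\bigr)\int_0^{F_{X_i}^{-1}(m)}(F_{X_i}(t)-1)\,dt$ give $R_{\mathrm{diag}} = \mathcal{O}(1/(nh^2))$ almost surely, hence $\mathcal{O}_P(1/(nh^2))$ as required. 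This step is essentially algebraic bookkeeping together with uniform boundedness from compactness; the genuine analytic difficulties — handling the nonlocal dependence, controlling the degenerate U-statistic remainder, and invoking geometric measure theory — are deferred to the Hoeffding decomposition of Step 3 and the subsequent CLT of Step 4.
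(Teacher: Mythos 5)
Your proposal is correct and matches the paper's own proof essentially step for step: the identity $\sum_{i<j}(a(X_i)+a(X_j))=(n-1)\sum_i a(X_i)$ for $V_n^{(1)}$, then for $V_n^{(2)}$ the substitution of the empirical distribution function, the split into symmetrized off-diagonal terms and the diagonal ($i=k$) sum, and the absorption of both the diagonal sum and the normalization mismatch ($1/n^2$ versus $1/(n(n-1))$, i.e.\ the factor $(n-1)/n$) into the $\mathcal{O}_P\left(1/(nh^2)\right)$ remainder via the deterministic bound $|g^{(2)}_{y,h}|\leq C/(mh^2)$ coming from boundedness of $K'$ and compactness of $\X$. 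No gaps.
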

	\begin{remark}
		It is important to note that $g^{(1)}_{y,h}$ and $g^{(2)}_{y,h}$ are symmetric by definition, i.e., $V_n^{(1)}(x)$ is a $U$-statistic and $V_n^{(2)}(x)$ can be decomposed into a $U$-statistic and an asymptotically negligible remainder term.
	\end{remark}
	
	Combining Lemma \ref{lemma:stat rewrite} and Lemma \ref{lemma:ustat reform}, we see that
	\begin{equation}\label{eq:kde to Un}
		\kde(y)=U_n+\mathcal{O}_P\left(\frac{1}{nh^3}\right)+ o_P\left(\frac{\log(n)^{1/(2b)}}{n^{1/2+1/(2b)}h}\right),
	\end{equation}
	where $U_n=U_n(y)$ denotes the $U$-statistic with kernel function $g_{y,h}(x_1,x_2)\coloneqq g^{(1)}_{y,h}(x_1,x_2)+g^{(2)}_{y,h}(x_1,x_2)$. Before we use \eqref{eq:kde to Un} to finalize the proof of Theorem \ref{thm:kde limit}, we establish two further auxiliary results. Next, we rewrite $U_n$ using the Hoeffding decomposition (see \citet[Sec. 11.4]{van2000asymptotic}), which is the key ingredient to handling the stochastic dependencies introduced by the terms $A_n(X_i)$.
	\begin{lemma}[Hoeffding decomposition, Step 3]\label{lemma:Hoeffding decomposition}
		Assume Setting \ref{setting}. Let $U_n$ be the $U$-statistic with kernel function $g_{y,h}(x_1,x_2)= g^{(1)}_{y,h}(x_1,x_2)+g^{(2)}_{y,h}(x_1,x_2)$. Then, it follows that
		\[U_n=\Theta_{y,h}+\frac{2}{n}\sum_{i=1}^ng_{y,h,1}(X_i)+\frac{2}{n(n-1)}\sum_{1\leq i<j\leq n}g_{y,h,2}(X_i,X_j).\]
		Here, we have that
		\[\Theta_{y,h}=\int\!K\left(v\right)\fdtm(x+vh)\,dv.\]
		Furthermore, let $Z_1,Z_2\iid\muX$. Then, it holds that
		\begin{align*}
			g_{y,h,1}(x_1)=& \frac{1}{2h}K\left(\frac{\dtm(x_1)-y}{h}\right)-\frac{1}{2}\Theta_{y,h} +\Esplit{Z_1}{\frac{1}{2mh^2}K'\left(\frac{\dtm(Z_1)-y}{h}\right)\Psi(x_1,Z_1)}, 
		\end{align*}
		where
		\begin{equation}\label{eq:def of psi}
			\Psi(x_1,x_2)\coloneqq||x_1-x_2||^2\wedge{F_{x_2}^{-1}(m)}-\Esplit{Z_2}{||Z_2-x_2||^2\wedge{F_{x_2}^{-1}(m)}},\end{equation}
		and 
		\begin{equation}
			g_{y,h,2}(x_1,x_2)=g_{y,h}(x_1,x_2)-g_{y,h,1}(x_1)-g_{y,h,1}(x_2)-\Theta_{y,h}. \label{eq:Ustat uncorrelated} \end{equation}
	\end{lemma}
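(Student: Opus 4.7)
The plan is to apply the standard Hoeffding decomposition for a symmetric U-statistic and then match the resulting projections against the claimed expressions by explicit calculation. For any symmetric kernel $g$ with $\theta=\E{g(X_1,X_2)}$ one has (cf.\ \citet[Sec.~11.4]{van2000asymptotic})
\[
U_n=\theta+\frac{2}{n}\sum_{i=1}^n\tilde g_1(X_i)+\frac{2}{n(n-1)}\sum_{1\le i<j\le n}\tilde g_2(X_i,X_j),
\]
with $\tilde g_1(x_1)=\E{g(x_1,X_2)}-\theta$ and $\tilde g_2(x_1,x_2)=g(x_1,x_2)-\tilde g_1(x_1)-\tilde g_1(x_2)-\theta$. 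It therefore suffices to verify $\theta=\Theta_{y,h}$ and $\tilde g_1=g_{y,h,1}$; the formula for $g_{y,h,2}$ is then forced by the defining relation.

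First I would compute $\theta$. Since $\dtm(X_1)\sim\fdtm$, the substitution $v=(z-y)/h$ gives $\E{h^{-1}K((\dtm(X_1)-y)/h)}=\int K(v)\fdtm(y+vh)\,dv=\Theta_{y,h}$, so the $g^{(1)}_{y,h}$-part already produces $\Theta_{y,h}$. For the $g^{(2)}_{y,h}$-contribution, condition on $X_1$ and use that, by the very definition of $F_{X_1}$, we have $\Prob(||X_1-X_2||^2\le t\mid X_1)=F_{X_1}(t)$; hence the inner integral has conditional mean zero pointwise. By symmetry the same holds for the other summand of $g^{(2)}_{y,h}$, so $g^{(2)}_{y,h}$ contributes nothing to $\theta$.

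The main step is computing $\tilde g_1(x_1)=\E{g_{y,h}(x_1,X_2)}-\Theta_{y,h}$. The part coming from $g^{(1)}_{y,h}$ is immediate and yields the terms $\tfrac{1}{2h}K((\dtm(x_1)-y)/h)-\tfrac{1}{2}\Theta_{y,h}$. For $g^{(2)}_{y,h}$, one of the two summands (the one whose $K'$-factor depends on $x_1$) vanishes under $X_2$-expectation by the same conditional argument as above. For the remaining summand, the crux is the Fubini-type identity
\[
\int_0^{a}F_{x_2}(t)\,dt=a-\E{||Z_2-x_2||^2\wedge a\mid x_2},\qquad a=F_{x_2}^{-1}(m),
\]
which follows by writing $F_{x_2}(t)=\Prob(||Z_2-x_2||^2\le t\mid x_2)$ and swapping integrals, together with the elementary identity $\int_0^a\indifunc{s\le t}\,dt=a-s\wedge a$ for $s\ge 0$. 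Applying both identities to the inner integral of $g^{(2)}_{y,h}$ rewrites it precisely as $\Psi(x_1,X_2)$ from \eqref{eq:def of psi}, and taking the $X_2$-expectation produces exactly the $\Psi$-term in the claimed $g_{y,h,1}$.

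The only mild obstacle is bookkeeping: tracking the factor $1/2$ that comes from the symmetric form of $g^{(1)}_{y,h}$ and $g^{(2)}_{y,h}$, and identifying the $\Psi$-structure via the Fubini step above. Once $\tilde g_1=g_{y,h,1}$ has been established, the relation $\tilde g_2(x_1,x_2)=g_{y,h}(x_1,x_2)-g_{y,h,1}(x_1)-g_{y,h,1}(x_2)-\Theta_{y,h}$ is the defining identity of the Hoeffding decomposition and matches the stated formula for $g_{y,h,2}$.
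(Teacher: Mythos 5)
Your proposal is correct and follows essentially the same route as the paper: apply the standard Hoeffding decomposition and verify $\E{U_n}=\Theta_{y,h}$ and $\E{g_{y,h}(x_1,Z_1)}-\Theta_{y,h}=g_{y,h,1}(x_1)$ by conditioning/Fubini, with the $g^{(2)}$-terms handled via the $\min$-identity. Your identity $\int_0^{a}\indifunc{s\leq t}\,dt=a-s\wedge a$ combined with Fubini is just a repackaging of the paper's Lemma \ref{lemma:integrated indi differnce}, so the arguments coincide in substance.
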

	\begin{remark}\label{rem:correlation}
		It is well known that the mean zero random variables $\left(g_{y,h,2}(X_i,X_j)\right)_{1\leq i<j\leq n}$ are uncorrelated (see \citet[Sec. 11.4]{van2000asymptotic}).
	\end{remark}
	
	For our later considerations it is important to derive a certain regularity for the function $\Psi$ defined in \eqref{eq:def of psi}.
	\begin{lemma}\label{lemma:lippsy} 
		Assume Setting \ref{setting} and let $\Psi$ be the function defined in \eqref{eq:def of psi}. Then, the function $z\mapsto\Psi(x_1,z)$ is Lipschitz continuous for all $x_1\in\X$ and the corresponding Lipschitz constant does not depend on the choice of $x_1$, i.e., it holds
		\[\left|\Psi(x_1,z_1)-\Psi(x_1,z_2)\right|\leq C||z_1-z_2||,\]
		for all $z_1,z_2\in \X$, where the constant $0<C<\infty$ does not depend on $x_1$.
	\end{lemma}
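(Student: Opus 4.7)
The plan is to decompose
\[
\Psi(x_1, z) = T_1(x_1, z) - T_2(z),
\]
where $T_1(x_1, z) = \|x_1-z\|^2 \wedge F_z^{-1}(m)$ and $T_2(z) = \Esplit{Z_2}{\|Z_2-z\|^2 \wedge F_z^{-1}(m)}$, and to establish Lipschitz continuity in $z$ for each piece separately with a constant that does not depend on $x_1$.

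First I would bound the two quantities $\|x_1-z\|^2$ and $F_z^{-1}(m)$ as $z$ varies. Since $\mathcal{X}$ is compact, a direct expansion of squares gives the uniform estimate
\[
\bigl|\|u-z_1\|^2 - \|u-z_2\|^2\bigr| \leq 2\,\diam{\mathcal{X}}\,\|z_1-z_2\|, \qquad u \in \mathcal{X},
\]
with the right-hand side independent of $u$. Applying this with $u = X \sim \muX$ yields an almost-sure pointwise bound on the random variables whose distributions define $F_{z_1}$ and $F_{z_2}$; setting $\epsilon = 2\diam{\mathcal{X}}\|z_1-z_2\|$, that pointwise bound translates into the CDF shift $F_{z_2}(t+\epsilon) \geq F_{z_1}(t)$ for all $t$, and symmetrically. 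Combining this with $F^{-1}(m) = \inf\{t : F(t) \geq m\}$ and right-continuity of CDFs gives $|F_{z_1}^{-1}(m) - F_{z_2}^{-1}(m)| \leq \epsilon$.

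Next, using the elementary inequality $|a \wedge b - c \wedge d| \leq |a-c| + |b-d|$, I would combine the two previous bounds to conclude that $T_1(x_1, \cdot)$ is Lipschitz with constant $4\,\diam{\mathcal{X}}$, uniformly in $x_1 \in \mathcal{X}$. For $T_2$, pulling the modulus inside the expectation and invoking the same pointwise estimate with $u = Z_2$ yields the same Lipschitz constant. A triangle inequality then proves the lemma with $C = 8\,\diam{\mathcal{X}}$, which is finite and independent of $x_1$.

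The main obstacle is the quantile bound: Condition \ref{condition} only controls the modulus of continuity of $F_x^{-1}$ in the probability level $u$ for fixed $x$, so the Lipschitz dependence on the base point $z$ has to be derived from scratch through the pointwise comparison of $\|X-z\|^2$ above rather than imported as a ready-made estimate. Everything else reduces to elementary manipulations of minima, expectations, and squared norms on the compact set~$\mathcal{X}$.
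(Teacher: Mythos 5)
Your proof is correct and follows essentially the same route as the paper: split $\Psi(x_1,\cdot)$ into the minimum term and its expectation, bound the squared-norm difference by $2\,\diam{\X}\,\|z_1-z_2\|$ uniformly in the base point, control the quantile term $|F_{z_1}^{-1}(m)-F_{z_2}^{-1}(m)|$, and combine via the elementary minimum inequality and by pulling the absolute value inside the expectation. The only (minor) deviation is that the paper imports the quantile stability bound from Lemma 8 of \citet{chazal2017robust}, whereas you re-derive it directly from the pointwise comparison of $\|X-z_1\|^2$ and $\|X-z_2\|^2$ and the resulting CDF shift, which is a valid, self-contained substitute for that step.
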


	The next step in the proof of Theorem \ref{thm:kde limit} is to derive for $n\to\infty$ and $h\to 0$ the limit distribution of $\sqrt{nh}\big(\frac{2}{{n}}\sum_{i=1}^ng_{y,h,1}(X_i)\big)$.
	\begin{lemma}\label{lemma:limit of Un} Assume Setting \ref{setting} and let Condition \ref{condition} be met. Let $n\to\infty,h\to0$ such that $nh\to\infty$ and recall that
		\begin{align*}
			g_{y,h,1}(x_1)=& \frac{1}{2h}K\left(\frac{\dtm(x_1)-y}{h}\right)-\frac{1}{2}\Theta_{y,h}+\Esplit{Z_1}{\frac{1}{2mh^2}K'\left(\frac{\dtm(Z_1)-y}{h}\right)\Psi(x_1,Z_1)},
		\end{align*}
		It holds  
		\begin{equation}\label{eq:limit of Un}
			\frac{2\sqrt{h}}{\sqrt{n}}\sum_{i=1}^ng_{y,h,1}(X_i)\Rightarrow N\left(0,\fdtm(y)\int\!K^2(u)\,du\right).
		\end{equation}
	\end{lemma}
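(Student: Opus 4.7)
The plan is to apply a Lindeberg--Feller central limit theorem to the i.i.d.\ triangular array $(g_{y,h,1}(X_i))_{i=1}^n$, whose distribution depends on $n$ only through the bandwidth $h=h_n$. I first verify that each summand is centred. A change of variable gives
\[
\Esplit{}{\tfrac{1}{2h}K((\dtm(X_1)-y)/h)} = \tfrac{1}{2}\int K(v)\fdtm(y+vh)\,dv = \tfrac{1}{2}\Theta_{y,h},
\]
which cancels the constant $-\Theta_{y,h}/2$ in $g_{y,h,1}$. For the third summand, Fubini together with independence of $X_1$ and $Z_1$ gives $\Esplit{X_1}{\Psi(X_1,Z_1)} = \Esplit{X_1}{\|X_1-Z_1\|^2\wedge F_{Z_1}^{-1}(m)} - \Esplit{Z_2}{\|Z_2-Z_1\|^2\wedge F_{Z_1}^{-1}(m)} = 0$, so its expectation also vanishes and $\E{g_{y,h,1}(X_1)} = 0$.

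I then decompose $g_{y,h,1} = A + B + C$, where $A(x_1) = \tfrac{1}{2h}K((\dtm(x_1)-y)/h)$, $B = -\tfrac{1}{2}\Theta_{y,h}$ is constant, and $C(x_1) = \Esplit{Z_1}{\tfrac{1}{2mh^2}K'((\dtm(Z_1)-y)/h)\Psi(x_1,Z_1)}$. A standard kernel calculation using the density $\fdtm$ of $\dtm(X_1)$ yields $\Var{A(X_1)} = \tfrac{1}{4h}\fdtm(y)\int K^2(v)\,dv + o(h^{-1})$. The main obstacle is controlling $C$: the $h^{-2}$ factor together with a naive $\|K'\|_\infty$ bound only gives $\|C\|_\infty = \mathcal{O}(h^{-1})$, which would be of the same order as $A$. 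To obtain a sharper estimate, I apply the coarea formula to rewrite, after the substitution $v=(t-y)/h$,
\[
C(x_1) = \frac{1}{2mh}\int K'(v)\,\phi_{x_1}(y+vh)\,dv,\qquad \phi_{x_1}(t) = \int_{\Gamma_t\cap\X}\Psi(x_1,z)\,\frac{g_{\muX}(z)}{\|\nabla\dtm(z)\|}\,d\sigma^{d-1}(z),
\]
and use $\int K'(v)\,dv = K(1)-K(-1) = 0$ to replace $\phi_{x_1}(y+vh)$ by $\phi_{x_1}(y+vh)-\phi_{x_1}(y)$. A uniform Lipschitz bound $|\phi_{x_1}(t)-\phi_{x_1}(y)|\leq L|t-y|$ with $L$ independent of $x_1$ then gives $\|C\|_\infty \leq \tfrac{L}{2m}\int|vK'(v)|\,dv = \mathcal{O}(1)$.

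The uniformity of $L$ in $x_1$ is exactly what Lemma \ref{lemma:lippsy} furnishes for the map $z\mapsto\Psi(x_1,z)$; the smoothness of the geometric weight $g_{\muX}/\|\nabla\dtm\|$ along the canonical level-set flow of $\Gamma_y$, and the control of how $\Gamma_t\cap\X$ deforms with $t$, come from the $C^{2,1}$-regularity assumed in Condition \ref{condition} and the boundary estimate \eqref{eq:set level set cond}. Once $\|C\|_\infty = \mathcal{O}(1)$, Cauchy--Schwarz gives $|\Cov{A(X_1),C(X_1)}|\leq \sqrt{\Var{A(X_1)}\Var{C(X_1)}} = \mathcal{O}(h^{-1/2})$, which is of smaller order than $\Var{A(X_1)} = \mathcal{O}(h^{-1})$. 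Consequently $\Var{g_{y,h,1}(X_1)} = \tfrac{1}{4h}\fdtm(y)\int K^2(v)\,dv\,(1+o(1))$, and multiplying by $4h$ yields
\[
\Var{\tfrac{2\sqrt{h}}{\sqrt{n}}\sum_{i=1}^n g_{y,h,1}(X_i)} = 4h\,\Var{g_{y,h,1}(X_1)} \to \fdtm(y)\int K^2(v)\,dv,
\]
which matches the target variance.

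Finally, the Lindeberg condition is trivial: since $|g_{y,h,1}(X_1)| \leq \|K\|_\infty/(2h) + \|C\|_\infty = \mathcal{O}(h^{-1})$ and the standardiser satisfies $s_n = \sqrt{n\Var{g_{y,h,1}(X_1)}}\asymp\sqrt{n/h}$, we have $|g_{y,h,1}(X_1)|/s_n = \mathcal{O}((nh)^{-1/2})\to 0$, so the event $\{|g_{y,h,1}(X_1)|>\varepsilon s_n\}$ is empty for all $n$ large enough. The Lindeberg--Feller CLT for i.i.d.\ triangular arrays then delivers \eqref{eq:limit of Un}.
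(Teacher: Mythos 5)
Your proposal is correct and follows essentially the same route as the paper: you establish the same variance asymptotics by showing the conditional-expectation term is uniformly $\mathcal{O}(1)$ via the coarea formula, the vanishing integral of $K'$, and Lipschitz control of the level-set integrals (which is exactly the content of the paper's Theorem \ref{thm:h2bound} and Lemma \ref{lem:local Lipschitz}, invoked with Lemma \ref{lemma:lippsy} and Condition \ref{condition} just as you indicate), and then apply a triangular-array CLT. The only cosmetic difference is that you verify the Lindeberg condition through the deterministic $\mathcal{O}(h^{-1})$ bound on the summands, whereas the paper checks Lyapunov's condition with third moments; both are routine once the variance computation is in place.
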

	
	With all auxiliary results required established, we can finally come to the proof of Theorem \ref{thm:kde limit}. The proof strategy is to demonstrate that the limit of $\sqrt{nh}\left(\kde(y)-\fdtm(y)\right)$ coincides with the limit of $ \frac{2\sqrt{h}}{\sqrt{n}}\sum_{i=1}^ng_{y,h,1}(X_i)$.\\
	\subsection{Proof of Theorem \ref{thm:kde limit}}
	The proof of Theorem \ref{thm:kde limit} is now a consequence of the lemmas provided in the previous subsection. 
	\begin{proof}[ Theorem \ref{thm:kde limit}]
		We find that
		\begin{align}
			&\left|\sqrt{nh}\left(\kde(y)-\fdtm(y)\right)-\sqrt{nh}\left(\frac{2}{n}\sum_{i=1}^ng_{y,h,1}(X_i)\right)\right|\nonumber\\
			\leq \sqrt{nh}&\left|\kde(y)-\left(\frac{2}{n}\sum_{i=1}^ng_{y,h,1}(X_i)+\Theta_{y,h}\right)\right|+\sqrt{nh}\left|\fdtm(y)-\Theta_{y,h}\right|\label{eq:final proof two summands}.
		\end{align}
		In the following, we consider both summands separately.\\
		
		\textit{First summand:} For the first summand, we obtain that
		\begin{align*}
			\mathcal{S}_n(y)=\sqrt{nh}&\left|\kde(y)-\left(\frac{2}{n}\sum_{i=1}^ng_{y,h,1}(X_i)+\Theta_{y,h}\right)\right|
			\\\leq\sqrt{nh}\Bigg(\!\!&\left|\kde(y)-V_n(y)\right|\!+\!\left|V_n(y)-U_n(y)\right|+\left|U_n(y)-\left(\frac{2}{n}\sum_{i=1}^ng_{y,h,1}(X_i)+\Theta_{y,h}\!\right)\right|\!\Bigg),
		\end{align*}
		where $V_n(y)$ and $U_n(y)$ are defined in\eqref{eq:def of Vn} and \eqref{eq:kde to Un} respectively. By Lemma \ref{lemma:stat rewrite} and $h=o\left(n^{-1/5}\right)$ we obtain that 
		\[\sqrt{nh}\left|\kde(y)-V_n(y)\right|=\mathcal{O}_P\left(\frac{\sqrt{nh}}{nh^3}\right)+ o_P\left(\frac{\sqrt{nh}\log(n)^{1/(2b)}}{n^{1/2+1/(2b)}h}\right)=o_P(1).\]
		Similarly, we get by Lemma \ref{lemma:ustat reform} and $h=o\left(n^{-1/5}\right)$ that 
		\[\sqrt{nh}\left|V_n(y)-U_n(y)\right|=\mathcal{O}_P\left(\frac{\sqrt{nh}}{nh^2}\right)=o_P(1).\]
		Hence, it remains to consider
		\[\sqrt{nh}\left|U_n(y)-\left(\frac{2}{n}\sum_{i=1}^ng_{y,h,1}(X_i)+\Theta_{y,h}\right)\right|=\sqrt{nh}\left|\frac{2}{n(n-1)}\sum_{1\leq i<j\leq n}g_{y,h,2}(X_i,X_j)\right|,\]
		where the last equality follows by Lemma \ref{lemma:Hoeffding decomposition}. Considering the definition of $g_{y,h,2}(x_1,x_2)$ in \eqref{eq:Ustat uncorrelated}, we recognize that $g_{y,h,2}(x_1,x_2)\in\mathcal{O}\left(\frac{1}{h^2}\right)$, as $h\to 0$. Let now $g^*_{y,2}(x_1,x_2)=h^2g_{y,h,2}(x_1,x_2)$. Then, $g^*_{y,h,2}(x_1,x_2)=\mathcal{O}(1)$, as $h\to 0$. Furthermore, we have by Remark \ref{rem:correlation} that the random variables $\{g_{y,h,2}(X_i,X_j)\}_{1\leq i<j\leq n}$ are uncorrelated, whence the same holds for the random variables $\{g^*_{y,h,2}(X_i,X_j)\}_{1\leq i<j\leq n}$. In consequence, we obtain that
		\[\Var{\frac{2}{n(n-1)}\sum_{1\leq i<j\leq n} g^*_{y,2}(X_i,X_j)}=\mathcal{O}\left(n^{-2}\right).\]
		This in turn implies by Chebyshev's inequality that
		\[\frac{2}{n(n-1)}\sum_{1\leq i<j\leq n} g^*_{y,2}(X_i,X_j)=\mathcal{O}_P(n^{-1}).\]
		Therefore, we obtain with $h=o\left(n^{-1/5}\right)$ that
		\begin{align*}\frac{2\sqrt{h}}{\sqrt{n}(n-1)}\sum_{1\leq i<j\leq n} g_{y,h,2}(X_i,X_j)=&\frac{\sqrt{nh}}{h^2}\left(\frac{2}{n(n-1)}\sum_{1\leq i<j\leq n}{g^*_{y,2}}(X_i,X_j)\right)\\
			=&\mathcal{O}_P\left(\frac{1}{n^{1/2}h^{1.5}}\right)=o_P(1).\end{align*}
		Thus, we have shown that $\mathcal{S}(y)=o_P(1)$.\\
		\textit{Second summand:} Finally, we come to the second summand in \eqref{eq:final proof two summands}. First of all, we observe that
		\[\Theta_{y,h}=\int\!K\left(v\right)\fdtm(x+vh)\,dv=\E{\frac{1}{nh}\sum_{i=1}^{n}K\left(\frac{\dtm(X_i)-y}{h}\right)},\]
		where $\{\dtm(X_i)\}_{i=1}^n$ is a collection of i.i.d. random variables with density $\fdtm$. Since $\fdtm$ is assumed to be twice differentiable on $(y-\epsilon,y+\epsilon)$ and $K$ is symmetric, i.e., $\int\!u K(u)\,du=0$, it follows by a straightforward adaptation of Proposition 1.2 of \citet{tsybakov2008introduction} that
		\[\left|\Theta_{y,h}-\fdtm(y)\right|\leq Ch^2.\]
		Here, $C$ denotes a constant independent of $n$ and $h$. We get that
		\[\sqrt{nh}\left|\Theta_{y,h}-\fdtm(y)\right|=\mathcal{O}_P(\sqrt{nh^5})=o_P(1),\]
		as $h=o\left(n^{-1/5}\right)$.\\
		
		In conclusion, we have shown that
		\[\left|\sqrt{nh}\left(\kde(y)-\fdtm(y)\right)-\sqrt{nh}\left(\frac{2}{n}\sum_{i=1}^ng_{y,h,1}(X_i)\right)\right|=o_P(1),\]
		which yields that \[\sqrt{nh}\left(\kde(y)-\fdtm(y)\right)\Rightarrow N\left(0,\fdtm(y)\int_\R\!K^2(u)\,du\right)\]
		as claimed.
	\end{proof}
	\subsection{Proofs of the Auxiliary Lemmas from Section \ref{sec:technical lemmas}}\label{subsec:proof of aux}
	In this section, we gather the full proofs of Lemma \ref{lemma:stat rewrite}-\ref{lemma:limit of Un}.
	\subsubsection{Proof of Lemma \ref{lemma:stat rewrite}}
	In the course of this proof we have to differentiate between the cases $0<m<1$ and $m=1$. \\
	
	\textit{The case $0<m<1$:} By assumption the kernel $K$ is twice continuously differentiable. Using a Taylor series approximation, we find that
	\begin{align*}
		K\!\left(\frac{\empdtm(X_i)-y}{h}\right)= K\!\left(\frac{\dtm(X_i)-y}{h}\right)+\frac{1}{h}K'\!\left(\frac{\dtm(X_i)-y}{h}\right)\!\!\left(\empdtm(X_i)-\dtm(X_i)\right)\\+\frac{1}{2h^2}K''\!\left(\frac{\zeta_i-y}{h}\right)\!\!\left(\empdtm(X_i)-\dtm(X_i)\right)^2,
	\end{align*}
	for some $\zeta_i$ between $\dtm(X_i)$ and $\empdtm(X_i)$. By Theorem 9 in \citet{chazal2017robust} (whose conditions are met by assumption) it holds
	\begin{equation}\label{eq:dtm limit}
		\sup_{x\in \X}\left|\empdtm(x)-\dtm(x)\right|=\mathcal{O}_P(1/\sqrt{n}).\end{equation}
	In consequence, we obtain that
	\begin{align*}
		\kde(y)
		=&\frac{1}{nh}\sum_{i=1}^n\left[K\left(\frac{\dtm(X_i)-y}{h}\right)+\frac{1}{h}K'\left(\frac{\dtm(X_i)-y}{h}\right)\left(\empdtm(X_i)-\dtm(X_i)\right)\right]\\+&\mathcal{O}_P(1/(nh^3)).
	\end{align*}
	Furthermore, it has been shown (see the proof of Theorem 5 in \citet{chazal2017robust}) that for any $x\in\X$
	\begin{align}\empdtm(x)-\dtm(x)=\frac{1}{m}\int_0^{F_x^{-1}(m)}\!F_x(t)-\widehat{F}_{x,n}(t)\,dt + \frac{1}{m}\int_{F_x^{-1}(m)}^{\widehat{F}_{x,n}^{-1}(m)}\!m-\widehat{F}_{x,n}(t)\,dt\nonumber\\\eqqcolon A_n(x)+R_n(x)\label{eq:def of Rn and An}.
	\end{align}
	In consequence, it remains to estimate $\sup_{x\in\X}|R_n(x)|$. Clearly, we have that
	\begin{equation}\label{eq:R_n leq SnTn}
		\left|R_n(x)\right|\leq \frac{1}{m}\left|S_n(x)\right|\left|T_n(x)\right|,\end{equation}
	where
	\begin{equation*}
		S_n(x)=\left|F_x^{-1}(m)-\widehat{F}_{x,n}^{-1}(m)\right|
		\text{ and }T_n(x)=\sup_t\left|F_x(t)-\widehat{F}_{x,n}(t)\right|. \end{equation*}
	\begin{claim}\label{claim:Sn and Tn}
		It holds that 
		\[\sup_{x\in\X}|S_n(x)|=o_P\left(\left(\frac{\log(n)}{n}\right)^{1/(2b)}\right) \text{ as well as } \sup_{x\in\X}|T_n(x)|=\mathcal{O}_P\left(\sqrt{\frac{d}{n}}\right), \]
		where $1\leq b<5$.\vspace{3mm}\\
	\end{claim}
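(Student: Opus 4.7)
The plan is to prove the two bounds separately, deducing the bound on $S_n$ from the bound on $T_n$ via the uniform modulus-of-continuity assumption \eqref{eq:cond mod of cont}. The crucial structural observation is that the class of indicator functions $\{\mathds{1}_{\{y \in \Rd:\,\|x-y\|^2 \leq t\}} : x \in \X,\,t \geq 0\}$ consists precisely of indicators of closed Euclidean balls in $\Rd$, which is well known to form a VC class with VC dimension at most $d+1$.

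For the uniform rate on $T_n$, the VC structure combined with a standard empirical process inequality for uniformly bounded VC classes (e.g., Theorem 2.14.9 in van der Vaart and Wellner, or Talagrand's inequality together with Dudley's entropy integral) yields
\[
\mathbb{E}\sup_{x\in\X}\sup_{t\geq 0}|F_x(t) - \widehat{F}_{x,n}(t)| \leq C\sqrt{d/n}
\]
for some universal constant $C$. Markov's inequality then delivers $\sup_{x\in\X}T_n(x) = \mathcal{O}_P(\sqrt{d/n})$. For the bound on $S_n$, fix $x\in\X$ and set $q_n(x) = \widehat{F}_{x,n}^{-1}(m)$. Since $\muX$ admits a Lipschitz density (Setting \ref{setting}), $F_x$ is continuous, and the Hölder bound $\omega_\X(u)\leq \kappa u^{1/b}$ rules out flat pieces of $F_x$ on its support (any such piece would produce a jump in $F_x^{-1}$ incompatible with the Hölder modulus), so $F_x$ is strictly increasing there. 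The definition of the empirical quantile together with the a.s. absence of ties yields $|\widehat{F}_{x,n}(q_n(x)) - m| \leq 1/n$, so adding and subtracting gives $|F_x(q_n(x)) - m| \leq T_n(x) + 1/n$. Applying $F_x^{-1}$ and invoking \eqref{eq:cond mod of cont},
\[
S_n(x) = \bigl|F_x^{-1}(F_x(q_n(x))) - F_x^{-1}(m)\bigr| \leq \omega_\X\bigl(T_n(x) + 1/n\bigr) \leq \kappa\bigl(T_n(x)+1/n\bigr)^{1/b}.
\]
Taking $\sup_x$ and combining with the bound on $T_n$ yields $\sup_{x\in\X}S_n(x) = \mathcal{O}_P(n^{-1/(2b)})$, which is $o_P\bigl((\log n/n)^{1/(2b)}\bigr)$ because $(\log n)^{1/(2b)}\to\infty$.

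The only genuinely nontrivial ingredient is the empirical process bound for the VC class of balls in $\Rd$; all other steps are routine deterministic manipulations tying distribution function deviations to quantile function deviations through the Hölder modulus provided by \eqref{eq:cond mod of cont}. The extra $1/n$ coming from the jump of $\widehat{F}_{x,n}$ is harmless since $n^{-1/b} \leq n^{-1/(2b)}$ for $n\geq 1$, so it is absorbed into the $T_n(x)^{1/b}$ term.
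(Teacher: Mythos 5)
Your proposal is correct in substance, and while the $T_n$ part coincides with the paper (which simply cites the proof of Theorem 9 of Chazal et al.\ (2017), itself resting on the VC-class-of-balls bound you invoke), your treatment of $S_n$ takes a genuinely different route. The paper never passes through $T_n$: it uses the distributional identity $\widehat{F}_{x,n}^{-1}(m)\overset{\mathcal{D}}{=}F_x^{-1}(\xi_{(k)})$ for a uniform sample $\xi_1,\dots,\xi_n$, applies a Bernstein-type exponential inequality for the uniform empirical quantile $|m-H_n^{-1}(m)|$ from Shorack and Wellner, and only then converts through the modulus $\omega_\X$, choosing $\epsilon\asymp(\log n/n)^{1/(2b)}$. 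You instead reduce $S_n(x)$ deterministically to $T_n(x)$ via the inversion $F_x^{-1}(F_x(\widehat{F}_{x,n}^{-1}(m)))=\widehat{F}_{x,n}^{-1}(m)$ plus the H\"older modulus, which yields the slightly stronger rate $\mathcal{O}_P(n^{-1/(2b)})$ and channels all uniformity in $x$ through the single VC-controlled quantity $\sup_{x}T_n(x)$ --- arguably cleaner than the paper's per-$x$ distributional identity; what the paper's route buys is an explicit exponential tail and no need to discuss invertibility of $F_x$. Two details in your argument deserve a patch, though neither is fatal: the ``a.s.\ no ties'' step is not uniform in $x\in\X$ (for $x$ on bisector sets several squared distances coincide; a.s.\ at most $d+1$ sample points are equidistant from any $x$, so your $1/n$ becomes $(d+1)/n$, asymptotically irrelevant), and applying $\omega_\X$, which is defined only for levels in $(0,1)$, requires $F_x(\widehat{F}_{x,n}^{-1}(m))\in(0,1)$ for all $x$, which holds on the event $\sup_x T_n(x)\le c\min(m,1-m)$ whose probability tends to one. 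Your observation that \eqref{eq:cond mod of cont} rules out flat pieces of $F_x$ at interior levels, so that the inversion identity holds at points of the support, is correct and is the key to making this alternative reduction work.
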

	\noindent Combining Claim \ref{claim:Sn and Tn} with \eqref{eq:R_n leq SnTn} yields $\sup_{x\in\X}|R_n(x)|={o}_P\left(\frac{\log(n)^{1/(2b)}}{n^{1/2+1/(2b)}}\right)$, which gives the statement for $0<m<1$.\vspace{3mm}\\
	\noindent\textit{Proof of Claim \ref{claim:Sn and Tn}:
	} It has already been established in the proof of Theorem 9 in \citet{chazal2017robust} that under the assumptions made \[\sup_{x\in\X}|T_n(x)|=\mathcal{O}_P\left(\sqrt{\frac{d}{n}}\right).\]
	Hence, it only remains to demonstrate the first equality. To this end, let $\xi_i\iid \text{Uniform(0,1)}$, $1\leq i\leq n$, and denote by $H_n$ their empirical distribution function. Define $k=mn$. Then, it holds that $\widehat{F}_{x,n}^{-1}(m)\overset{\mathcal{D}}{=}{F}_{x}^{-1}(\xi_{(k)})={F}_{x}^{-1}\left(H_n^{-1}(m)\right)$. Here, $\xi_{(k)}$ is the $k$-th order statistic and $\overset{\mathcal{D}}{=}$ denotes equality in distribution. Hence, we have for any $m>0$ and $x\in\X$ that
	\begin{align*}
		\Prob\left(|S_n(x)|>\epsilon\right)=&\Prob\left(\left|F_x^{-1}\left(H_n^{-1}(m)\right)-F_x^{-1}(m)\right|>\epsilon\right)\leq\Prob\left(\omega_x\left(\left|m-H_n^{-1}(m)\right|\right)>\epsilon\right),
	\end{align*}
	where $\omega_x$ denote the modulus of continuity for $F^{-1}_x$. This means that for $u\in(0,1)$
	\[\omega_x(u)\coloneqq\sup_{t,t'\in(0,1)^2,|t-t'|<u}\left|F_x^{-1}(t)-F_x^{-1}(t')\right|.\]
	By assumption, there exists a constant $\kappa\in\RR$ such that $\omega_\X(u)=\sup_{x\in\X}{\omega}_x(u)\leq \kappa u^{1/b}$ for all $u\in(0,1)$. Hence, we find that
	\begin{align}
		\Prob\left(\omega_x\left(\left|m-H_n^{-1}(m)\right|\right)>\epsilon\right)\leq& \Prob\left(\left|m-H_n^{-1}(m)\right|>\left(\frac{\epsilon}{\kappa}\right)^b\right)\nonumber\\\leq& 2\exp\left(-\frac{n\left(\frac{\epsilon}{\kappa}\right)^{2b}}{m}\frac{1}{1+\frac{2\left(\frac{\epsilon}{\kappa}\right)^{b}}{3m}}\right),\label{eq:Sn(x)}
	\end{align}
	where the last line follows from \citet{shorack2009empirical} (Inequality 1 on Page 453 and Proposition 1, page 455).
	Next, we observe that
	\begin{align*}
		\Prob\left(\sup_{x\in\X}|S_n(x)|>\epsilon\right)\leq& \Prob\left(\sup_{x\in \X}\omega_x\left(|m-H_n^{-1}(m)|\right)>\epsilon\right)
		\leq \Prob\left(\kappa|m-H_n^{-1}(m)|^{1/b}>\epsilon\right).
	\end{align*}
	Using \eqref{eq:Sn(x)}, we find that
	\begin{align*}
		\Prob\left(\sup_{x\in\X}|S_n(x)|>\epsilon\right)\leq& 2 \exp\left(-\frac{n\left(\frac{\epsilon}{\kappa}\right)^{2b}}{m}\frac{1}{1+\frac{2\left(\frac{\epsilon}{\kappa}\right)^{b}}{3m}}\right)\leq 2\exp\left(-\frac{3n}{5}\left(\frac{\epsilon} {\kappa}\right)^{2b}\right),
	\end{align*} 
	where we used that  $m\leq 1$ and $\epsilon/\kappa<1$ for epsilon small enough. 
	Let $\epsilon =\tau\left(\frac{\log(n)}{n}\right)^{1/(2b)}\!.$ It follows that
	\begin{align*}
		\Prob\left(\sup_{x\in\X}|S_n(x)|>\epsilon\right)\leq2\exp\left(-\frac{3}{5}\left(\frac{\tau} {\kappa}\right)^{2b}\log(n)\right),
	\end{align*}
	and thus
	\[\sup_{x\in\X}|S_n(x)|=o_P\left(\frac{\log(n)}{n}\right)^{1/(2b)},\]
	which yields Claim \ref{claim:Sn and Tn}.\vspace{6mm}\\
	\textit{The case $m=1$:}
	Similar as for $0<m<1$, we find that
	\begin{align*}
		K\left(\frac{\empdtmone(X_i)-y}{h}\right)= K\left(\frac{\dtmone(X_i)-y}{h}\right)+\frac{1}{h}K'\left(\frac{\dtmone(X_i)-y}{h}\right)\left(\empdtmone(X_i)-\dtmone(X_i)\right)\\+\frac{1}{2h^2}K''\left(\frac{\zeta_i-y}{h}\right)\left(\empdtmone(X_i)-\dtmone(X_i)\right)^2,
	\end{align*}
	for some $\zeta_i$ between $\dtmone(X_i)$ and $\empdtmone(X_i)$. While Theorem 9 in \cite{chazal2017robust} is only stated for $m<1$, it is straightforward to adapt its proof to the case $m =1$. Hence, we obtain
	\begin{equation}\label{eq:dtm limit2}
		\sup_{x\in \X}\left|\empdtmone(x)-\dtmone(x)\right|=\mathcal{O}_P(1/\sqrt{n}).\end{equation}
	Furthermore, we note that for $x\in\X$
	\begin{equation}\label{eq:dtm1}
		\empdtmone(x)-\dtmone(x)=\int_0^{D_x}\!F_x(t)-\widehat{F}_{x,n}(t)\,dt,\end{equation}
	where $[0,D_x]$ denotes the support of $F_x$. In combination with our previous considerations, we find that\begin{align*}
		\kdeone(y)=\frac{1}{nh}\sum_{i=1}^n\left[K\left(\frac{\dtmone(X_i)-y}{h}\right)+\frac{1}{h}K'\left(\frac{\dtmone(X_i)-y}{h}\right)A_n(x)\right]+&\mathcal{O}_P\left(\frac{1}{nh^3}\right),
	\end{align*}
	which yields the claim.\\ \QEDA
	\subsubsection{Proof of Lemma \ref{lemma:ustat reform}}
	First, we consider $V_n^{(1)}(x)$. Clearly, we have 
	\begin{align*}
		V_n^{(1)}(y)=&
		\frac{2}{n(n-1)}\sum_{1\leq i<j\leq n}\frac{1}{2h}\left(K\left(\frac{\dtm(X_i)-y}{h}\right)+K\left(\frac{\dtm(X_j)-y}{h}\right)\right)\\
		=& \frac{2}{n(n-1)}\sum_{1\leq i<j\leq n} g^{(1)}_{y,h}(X_i,X_j).
	\end{align*}
	Next, we come to $V_n^{(2)}(y)$. We have that
	\begin{align*}
		V_n^{(2)}(y)=&
		\frac{1}{n}\sum_{i=1}^n\frac{1}{h^2}K'\left(\frac{\dtm(X_i)-y}{h}\right)\frac{1}{m}\int_0^{F_{X_i}^{-1}(m)}\!F_{X_i}(t)-\widehat{F}_{X_i,n}(t)\,dt \\
		=& \frac{1}{n^2}\sum_{i=1}^n\sum_{j=1}^n\frac{1}{mh^2}K'\left(\frac{\dtm(X_i)-y}{h}\right)\int_0^{F_{X_i}^{-1}(m)}\!F_{X_i}(t)-\mathds{1}_{\{||X_i-X_j||^2\leq t\}}\,dt.
	\end{align*}
	Further, we obtain that
	\begin{align*}
		V_n^{(2)}(y)
		=&\frac{1}{n^2}\!\!\sum_{1\leq i<j\leq n} \!g^{(2)}_{y,h}(X_i,X_j)
		+\frac{1}{n^2}\!\sum_{i=1}^n\frac{1}{mh^2}\!\!\left[K'\!\left(\frac{\dtm(X_i)-y}{h}\right)\!\!\int_0^{F_{X_i}^{-1}(m)}\!F_{X_i}(t)-1\,dt\right]\!.
	\end{align*}
	We note that $K$ is twice differentiable and $\X$ is compact, i.e.,
	\[\left|\int_0^{F_{x_1}^{-1}(m)}\!F_{x_2}(t)-\mathds{1}_{\{||x_1-x_2||^2\leq t\}}\,dt\right|\leq \diam{\X}<\infty\]\\
	for all $x_1,x_2\in \X$. This yields that
	\begin{align*}
		V_n^{(2)}(y){=}&\frac{2}{n^2}\sum_{1\leq i<j\leq n}g^{(2)}_{y,h}(X_i,X_j)+\mathcal{O}_P\left(\frac{1}{nh^2}\right)\\
		=&\frac{n-1}{n}\left(\frac{2}{n(n-1)}\sum_{1\leq i<j\leq n}g^{(2)}_{y,h}(X_i,X_j)\right)+\mathcal{O}_P\left(\frac{1}{nh^2}\right)\\
		{=}&\frac{2}{n(n-1)}\sum_{1\leq i<j\leq n}g^{(2)}_{y,h}(X_i,X_j)+\mathcal{O}_P\left(\frac{1}{nh^2}\right).
	\end{align*}
	\QEDA
	\subsubsection{Proof of Lemma \ref{lemma:Hoeffding decomposition}}
	Since $X_1,\dots X_n\iid\muX$, the claim follows by the Hoeffding decomposition \citep[Lemma 11.11]{van2000asymptotic} once we have shown that
	\begin{enumerate}
		\item $\Theta_{y,h}=\E{U_n}$
		\item $g_{y,h,1}(x_1) =\E{ g_{y,h}(x_1,Z_1)}-\Theta_{y,h},$ where $Z_1\sim\muX$.
	\end{enumerate}
	\textit{First equality:} We start by verifying the first equality. Clearly,
	\[\E{U_n}=\E{g^{(1)}_{y,h}(X_1,X_2)}+\E{g^{(2)}_{y,h}(X_1,X_2)}.\]
	Since $X_1,X_2\iid \muX$, we obtain that
	\begin{align*}
		\E{g^{(1)}_{y,h}(X_1,X_2)} =& \E{\frac{1}{2h}\left(K\left(\frac{\dtm(X_1)-y}{h}\right)+K\left(\frac{\dtm(X_2)-y}{h}\right)\right)}\\ 
		=& \int\!\frac{1}{h}K\left(\frac{u-y}{h}\right)\,d(\dtm\#\muX)(u).
	\end{align*}
	Here, the last equality follows by the change-of-variables formula ($\dtm\#\muX$ denotes the pushforward measure of $\muX$ with respect to $\dtm$). By assumption the measure $\dtm\#\muX$ possesses a density $\fdtm$ with respect to the Lebesgue measure. Hence,
	\begin{align*}
		\E{g^{(1)}_{y,h}(X_1,X_2)}=& \int\!\frac{1}{h}K\left(\frac{u-y}{h}\right)\fdtm(u)\,du=\int\!K\left(v\right)\fdtm(y+vh)\,dv.
	\end{align*}
	
	As $X_1,X_2\iid\muX$, we obtain for the second summand that 
	\begin{align*}
		\E{g^{(2)}_{y,h}(X_1,X_2)}
		\!\!=& \E{\frac{1}{mh^2}K'\!\left(\frac{\dtm(X_1)-y}{h}\right)\!\!\int_0^{F_{X_1}^{-1}(m)}\!\!F_{X_1}(t)-\mathds{1}_{\{||X_1-X_2||^2\leq t\}}\,dt}
	\end{align*}
	Since $X_1$ and $X_2$ are independent, the Theorem of Tonelli/Fubini \citep[Thm. 18]{BillingsleyConvergenceProbabilityMeasures2013} yields that
	\begin{align*}
		\E{g^{(2)}_{y,h}(X_1,X_2)}
		\!\!=&\Esplit{X_1}{\frac{1}{mh^2}K'\!\left(\frac{\dtm(X_1)-y}{h}\right)\!\!\int_0^{F_{X_1}^{-1}(m)}\!\!F_{X_1}(t)-\Esplit{X_2}{\mathds{1}_{\{||X_1-X_2||^2\leq t\}}}dt}\\
		=&0.
	\end{align*}
	Combining our results, we find that $\E{U_n}=\Theta_{y,h}$.\vspace{3mm}\\
	\textit{Second equality:} Recall that $Z_1\sim\muX$. We demonstrate that 
	\[g_{y,h,1}(x_1)=\E{g_{y,h}(x_1,Z_1)}-\Theta_{y,h}=\E{g^{(1)}_{y,h}(x_1,Z_1)}+\E{g^{(2)}_{y,h}(x_1,Z_1)}-\Theta_{y,h}.\]
	Once again, we consider the two summands separately. We observe that
	\begin{align*}\E{g^{(1)}_{y,h}(x_1,Z_1)}
		=&\frac{1}{2h}K\left(\frac{\dtm(x_1)-y}{h}\right) +\frac{1}{2}\Theta_{y,h}.\end{align*}
	Here, the last equality follows by our previous considerations for $\E{U_n}$. For the second summand, it follows that
	\begin{align*}
		\E{g^{(2)}_{y,h}(x_1,Z_1)}\!\!=&\E{ \frac{1}{2mh^2}K'\!\left(\frac{\dtm(x_1)-y}{h}\right)\!\!\int_0^{F_{x_1}^{-1}(m)}\!F_{x_1}(t)\!-\!\mathds{1}_{\{||x_1-Z_1||^2\leq t\}}\,dt}\\+&\E{\frac{1}{2mh^2}K'\!\left(\frac{\dtm(Z_1)-y}{h}\right)\!\!\int_0^{F_{Z_1}^{-1}(m)}\!F_{Z_1}(t)\!-\!\mathds{1}_{\{||Z_1-x_1||^2\leq t\}}\,dt}\!\eqqcolon \!T_1\! +\!T_2.
	\end{align*}
	The Theorem of Tonelli/Fubini \citep[Thm. 18]{BillingsleyConvergenceProbabilityMeasures2013} shows that
	\begin{align*}
		T_1=\frac{1}{2mh^2}K'\left(\frac{\dtm(x_1)-y}{h}\right) \int_0^{F_{x_1}^{-1}(m)}\!F_{x_1}(t)-\E{\mathds{1}_{\{||x_1-Z_1||^2\leq t\}}}\,dt
		=0.
	\end{align*}
	
	Furthermore, we obtain for $Z_2\sim\muX$ independent of $Z_1$ that
	\begin{align*}
		T_2=&\Esplit{Z_1}{\frac{1}{2mh^2}K'\left(\frac{\dtm(Z_1)-y}{h}\right)\int_0^{F_{Z_1}^{-1}(m)}\!\Esplit{Z_2}{\mathds{1}_{\{||Z_1-Z_2||^2\leq t\}}}-\mathds{1}_{\{||Z_1-x_1||^2\leq t\}}\,dt}\\=&\Esplit{Z_1}{\frac{1}{2mh^2}K'\left(\frac{\dtm(Z_1)-y}{h}\right)\Esplit{Z_2}{\int_0^{F_{Z_1}^{-1}(m)}\!\mathds{1}_{\{||Z_1-Z_2||^2\leq t\}}-\mathds{1}_{\{||Z_1-x_1||^2\leq t\}\,dt}}},
	\end{align*}
	where the last step follows by the theorem of Tonelli/Fubini \citep[Thm. 18]{BillingsleyConvergenceProbabilityMeasures2013}. Moreover, the determination of the integral in the above expression yields that
	\begin{align*}
		T_2\!
		=&\Esplit{Z_1}{\frac{1}{2mh^2}K'\!\left(\frac{\dtm(Z_1)-y}{h}\right)\!\!\left(||x_1-Z_1||^2\!\wedge\!{F_{Z_1}^{-1}(m)}-\Esplit{Z_2}{||Z_1-Z_2||^2\!\wedge\!{F_{Z_1}^{-1}(m)}}\right)\!}\!\!.
	\end{align*}
	Combining all of our results, we finally get that
	\begin{align*}
		g_{y,h,1}(x_1)=& \frac{1}{2h}K\!\left(\frac{\dtm(x_1)-y}{h}\right)-\frac{1}{2}\Theta_{y,h}+\Esplit{Z_1}{\frac{1}{2mh^2}K'\left(\frac{\dtm(Z_1)-y}{h}\right)\Psi(x_1,Z_1)},
	\end{align*}
	as claimed.\\\QEDA
	
	\subsubsection{Proof of Lemma \ref{lemma:lippsy}}
	Let $x_1\in\X$ be arbitrary. We observe that for any $z_1,z_2\in\X$
	\begin{align*}
		|\Psi(x_1,z_1)-\Psi(x_1,z_2)|
		\leq &\big|||x_1-z_1||^2\wedge{F_{z_1}^{-1}(m)}-||x_1-z_2||^2\wedge{F_{z_2}^{-1}(m)}\big|\\+&\big|\Esplit{Z_2}{||Z_2-z_1||^2\wedge{F_{z_1}^{-1}(m)}}-\Esplit{Z_2}{||Z_2-z_2||^2\wedge{F_{z_2}^{-1}(m)}}\big|\\
		\eqqcolon&\Psi_1(z_1,z_2)-\Psi_2(z_1,z_2).
	\end{align*}
	In the following, we consider $\Psi_1$ and $\Psi_2$ separately. We have that for $z_1,z_2\in\X$
	\begin{align*}
		\Psi_1(z_1,z_2)
		\leq& \big| ||x_1-z_1||^2-||x_1-z_2||^2\big|+\big|F_{z_1}^{-1}(m)-F_{z_2}^{-1}(m) \big|\\
		\leq& D \big|||x_1-z_1||-||x_1-z_2||\big|+2\sqrt{D}||z_1-z_2||,
	\end{align*}
	where the last inequality follows with $D=\diam{\X}<\infty$ and Lemma 8 in \citet{chazal2017robust}. In particular, note that in the current setting we have that
	\[\sup_{t\in(0,1)}\sup_{x\in\X}F_x^{-1}(t)\leq D<\infty.\]
	In consequence, we obtain that for $z_1,z_2\in\X$
	\begin{align*}
		\Psi_1(z_1,z_2)\leq D\big|||z_1-z_2||\big|+2\sqrt{D}||z_1-z_2||
		\leq C||z_1-z_2||,
	\end{align*}
	where $C$ denotes a constant that only depends on $\X$.\\
	
	Next, we observe
	\begin{align*}
		\Psi_2(z_1,z_2)
		\leq& \Esplit{Z_2}{\big|||Z_2-z_1||^2\wedge{F_{z_1}^{-1}(m)}-||Z_2-z_2||^2\wedge{F_{z_2}^{-1}(m)}\big|}.
	\end{align*}
	Considering our previous calculation, we immediately obtain that
	\begin{align*}
		\Psi_2(z_1,z_2)\leq&\Esplit{Z_2}{C||z_1-z_2||}=C||z_1-z_2||,
	\end{align*}
	where $C$ denotes the same constant as previously. Combining our results, we find that
	\[|\Psi(x_1,z_1)-\Psi(x_1,z_2)|\leq C||z_1-z_2||,\]
	where the constant $C$ only depends on $\X$ and not on $x_1$. This yields the claim.\\\QEDA
	
	\subsubsection{Proof of Lemma \ref{lemma:limit of Un}}
	Next, we derive \eqref{eq:limit of Un} using Lyapunov's Central Limit Theorem for triangular arrays \citep[Sec. 27]{billingsley2008probability}. To this end, we define $z_{in}\coloneqq 2g_{y,h,1}(X_i)$, $\Bar{z}_n=\frac{1}{n}\sum_{i=1}^nz_{in}$ and $\sigma_{in}^2\coloneqq\Var{z_{in}}$. Clearly, for $n$ fixed the $z_{in}$'s are independent and identically distributed. In order to check the assumptions of Lyapunov's Central Limit Theorem, it remains to find $r>2$ such that
	\begin{equation}\label{eq:clt cond 1}
		\rho_{1n}\coloneqq \E{|z_{1n}-\E{z_{1n}}|^r}<\infty\end{equation}
	and
	\begin{equation}\label{eq:clt cond 2}
		\frac{n\rho_{1n}}{(n\sigma^2_{1n})^{r/2}}\to 0,
	\end{equation}
	as $n\to\infty$. \\
	\underline{Calculation of $\sigma_{1n}^2$:} The next step is to consider $\sigma_{1n}^2$. As $\E{z_{1n}}=0$ by construction, we find that
	\begin{align*}
		\sigma_{1n}=& \E{|z_{1n}|^2}=\E{\left|2g_{y,h,1}(X_1)\right|^2}=\E{\left|T_3+T_4\right|^2},
	\end{align*}
	where
	\begin{align}\label{eq:def of T3}
		T_3\coloneqq \frac{1}{h}K\left(\frac{\dtm(X_1)-y}{h}\right)-\Theta_{y,h}
	\end{align}
	and
	\begin{equation}\label{eq:def of T4}
		T_4\coloneqq \Esplit{Z_1}{\frac{1}{mh^2}K'\left(\frac{\dtm(Z_1)-y}{h}\right)\Psi(X_1,Z_1)}.
	\end{equation}
	Here, $\Psi(x_1,x_2)$ is the function defined in \eqref{eq:def of psi}.
	Obviously, we obtain that
	\begin{align*}
		\sigma_{1n}^2=\E{T_3^2}+\E{T_4^2}+2\E{T_3T_4}.
	\end{align*}
	In the following, we treat each of these summands separately.\\
	
	\textit{First summand:} Considering the first term, we see that
	\[\E{T_3^2}=\E{\left|\frac{1}{h}K\left(\frac{\dtm(X_1)-y}{h}\right) -\Theta_{y,h}\right|^2}.\]
	which is essentially the variance of the kernel density estimator of the real valued random variable $\dtm(X_1)$. Hence, one can show using standard arguments (see e.g.\ \citet[Sec. 3.3]{silverman2018density}) that
	\[\E{|T_3|^2}= \frac{\fdtm(y)}{h}\int\!|K(u)|^2\,du+o\left(\frac{1}{h}\right).\]
	as $h\to0$.\\
	
	\textit{Second summand:} Next, we consider $\E{|T_4|^2}$. We have that 
	\[\E{|T_4|^2}=\E{\left|\Esplit{Z_1}{\frac{1}{mh^2}K'\left(\frac{\dtm(Z_1)-y}{h}\right)\Psi(X_1,Z_1)}\right|^2}.\]
	Recall that $Z_1\sim\muX$ and that $\muX$ has, by assumption, a Lipschitz continuous Lebesgue density. Denote this density by $g_{\muX}$. Then, it follows that
	\begin{align}
		&\Esplit{Z_1}{\frac{1}{mh^2}K'\left(\frac{\dtm(Z_1)-y}{h}\right)\Psi(X_1,Z_1)}\nonumber\\=&\frac{1}{mh^2}\int_\X\! K'\left(\frac{\dtm(z)-y}{h}\right)\Psi(X_1,z)g_{\muX}(z)\,d\lambda^d(z)\label{eq:T4 estimate}
	\end{align}
	Next, we realize that 
	\[\sup_{x_1,x_2}|\Psi(x_1,x_2)|\leq D\]
	and hence there is a constant $0<C<\infty$ such that
	\begin{equation}\label{eq:mult lip bound}
		\max\left\{\sup_{x\in\X}g_{\muX}(x),\sup_{x_1,x_2}|\Psi(x_1,x_2)|\right\}<C.
	\end{equation}
	Further, it follows by Lemma \ref{lemma:lippsy} that the function $\Psi^*_{x_1}:\X\to\R$, $z\mapsto\Psi(x_1,z)$ is Lipschitz continuous for all $x_1\in\X$ with a Lipschitz constant that does not depend on $x_1$. This in combination with the Lipschitz continuity of $g_{\muX}$ and \eqref{eq:mult lip bound} implies that the function
	\[\psi_{x_1}:\X\to\R,~~ z\mapsto\Psi(x_1,z)g_{\muX}(z)\]
	is Lipschitz continuous for all $x_1\in\X$ with a Lipschitz constant that does not depend on $x_1$. We have that the function $x\mapsto\dtm(x)$ is coercive, that $\dtm$ is $C^{2,1}$ on an open neighborhood of $\Gamma_y={\dtm}^{-1}(y)$ and that $\nabla\dtm\neq0$ on $\Gamma_y$ by assumption. By Condition \ref{condition}, there exists $h_0>0$ such that for all $-h_0<v<h_0$
	\begin{equation*}
		\int_{\Gamma_y}\!\left|\indifunc{\Phi(0,x)\in\X}-\indifunc{\Phi(v,x)\in\X}\right|\,d\Haus{d-1}(x)\leq C_y |v|,
	\end{equation*}
	where $\Phi$ denotes the canonical level set flow of $\Gamma_y$ and $C_y$ denotes a finite constant that depends on $y$ and $\dtm$. Furthermore, the kernel $K$ is twice continuously differentiable and $\supp(K)=[-1,1]$. Since $K$ is also even, by assumption, it follows that $K'$ is odd, i.e.\
	\[\int_{-1}^1\! K'(z)\,dz=0.\]
	Thus, we find by Theorem \ref{thm:h2bound} that there exists some constants $c_y>0$ and $h_0>0$ (depending on $\dtm$, $y$ and $\X$) such that for any $h<h_0$ we obtain that
	\begin{equation*}
		\left|\Esplit{Z_1}{K'\left(\frac{\dtm(Z_1)-y}{h}\right)\Psi(X_1,Z_1)}\right|\leq c_yh^2.
	\end{equation*}
	In consequence, we find that for $h$ small enough
	\[\E{|T_4|^2}\leq\E{\frac{1}{m^2h^4}\left|\Esplit{Z_1}{K'\left(\frac{\dtm(Z_1)-y}{h}\right)\Psi(X_1,Z_1)}\right|^2}\leq \frac{c_y}{m^2}.\]
	This in particular shows that $\E{|T_4|^2}= \mathcal{O}(1)$ as $h\to 0$.\\
	
	\textit{Third summand:}. By Hölder's inequality, we obtain that
	\begin{align*}
		\E{T_3T_4}\leq \E{|T_3T_4|}\leq \left(\E{|T_3|^2}\right)^{1/2}\left(\E{|T_4|^2}\right)^{1/2}.
	\end{align*}
	Plugging in our previous findings, we find that
	\[ \E{T_3T_4}=\mathcal{O}\left(\frac{1}{\sqrt{h}}\right)\cdot\mathcal{O}\left(1\right)=\mathcal{O}\left(\frac{1}{\sqrt{h}}\right).\]
	as $h \to 0$. In consequence, we find that
	\[\sigma_{1n}^2=\frac{\fdtm(x)}{h}\int\!|K(u)|^2\,du+o\left(\frac{1}{h}\right).\]
	This concludes our consideration of $\sigma_{1n}^2$.\vspace{6mm}\\
	\underline{Calculation of third moments:} We choose $r=3$ and consider
	$\rho_{1n}= \E{|z_{1n}-\E{z_{1n}}|^r}$. By construction $\E{z_{1n}}=0$. Thus, we obtain
	\begin{align*}
		\rho_{1n}=& \E{|z_{1n}|^3}=\E{\left|g_{y,h,1}(X_1)\right|^3}=\E{\left|T_3+T_4\right|^3},
	\end{align*}
	where $T_3$ and $T_4$ denote the terms defined in \eqref{eq:def of T3} and \eqref{eq:def of T4}, respectively. 
	Furthermore, it follows that
	\begin{align*}
		\rho_{1n}\leq \E{\left(|T_3|+|T_4|\right)^3}\leq 8\E{|T_3|^3}+8\E{|T_4|^3}.
	\end{align*}
	Considering the first summand, this yields that
	\begin{align*}
		\E{|T_3|^3}=\E{\left|\frac{1}{h}K\left(\frac{\dtm(X_1)-y}{h}\right) -\Theta_{y,h}\right|^3},
	\end{align*}
	which is the third moment of the kernel density estimator of the real valued random variable $\dtm(X_1)$. In particular, one can show using standard arguments that
	\[\E{|T_3|^3}\leq \frac{8\fdtm(x)}{h^2}\int\!|K(u)|^3\,du+o\left(\frac{1}{h^2}\right).\]
	It remains to consider $\E{|T_4|^3}$.
	We have already shown that for $h\to0$
	\[\left|\frac{1}{2mh^2}\Esplit{Z_1}{K'\left(\frac{\dtm(Z_1)-y}{h}\right)\Psi(X_1,Z_1)}\right|=\mathcal{O}(1).\]
	Consequently, this implies that $\E{|T_4|^3}=\mathcal{O}(1).$ Hence, we obtain that
	\[\rho_{1n}\leq \frac{8\fdtm(y)}{h^2}\int_{-1}^1\!|K(u)|^3\,du+o\left(\frac{1}{h^2}\right).\]
	\underline{Applying Lyapunov's CLT:} Now that we have calculated $\rho_{1n}$ and $\sigma^2_{1n}$, we can verify the remaining assumption of Lyapunov's Central Limit Theorem for triangular array's \cite[Sec. 26]{billingsley2008probability}.
	First of all, we observe that $\rho_{1n}<\infty$, since $K$, $K'$ and $\Psi$ are continuous and compactly supported. Furthermore, we obtain
	\begin{align*}
		\frac{n\rho_{1n}}{\left(n\sigma^2_{1n}\right)^{3/2}}\leq&  \frac{\frac{8n\fdtm(x)}{h^2}\int\!|K(u)|^3\,du+o\left(\frac{n}{h^2}\right)}{\left(\frac{n\fdtm(x)}{2h}\int\!|K(u)|^2\,du+o\left(\frac{n}{h}\right)\right)^{3/2}}
		=\mathcal{O}\left((nh)^{-1/2}\right)\to 0,
	\end{align*}
	if $nh\to\infty.$
	In consequence, Lyapunov's Central Limit Theorem for triangular arrays is applicable. It yields that
	\begin{equation*}
		\frac{\Bar{z}_n-\E{\Bar{z}_n}}{\sqrt{\Var{\Bar{z}_n}}}\overset{D}{\to}N(0,1).
	\end{equation*}
	This in turn implies that
	\begin{equation*}
		\sqrt{nh}\left(\frac{2}{n}\sum_{i=1}^n g_{y,h,1}(X_i)\right)\Rightarrow N\left(0,\fdtm(y)\int_{-1}^1\!|K(u)|^2\,du\right)
	\end{equation*}
	which gives the claim.\\
	
	\QEDA
	
	\section{Some Geometric Measure Theory}
	
	In the proof of Lemma \ref{lemma:limit of Un}, we need to bound the term \eqref{eq:T4 estimate}:
	
	\begin{align*}
		\mathcal{I}(y):=\int_\X\! K'\left(\frac{\dtm(z)-y}{h}\right)\Psi(X_1,z)g_{\muX}(z)\,d\lambda^d(z),
	\end{align*}
	where $g_{\muX}$ denotes the Lebesgue density of $\muX$ (which exists by assumption) and $X_1\sim\muX$. Since the kernel $K$ and thus also its derivative $K'$ are supported on $[-1,1]$ , we obtain
	\begin{align*}
		\mathcal{I}(y)=\int_{A_{h}(y)}\! K'\left(\frac{\dtm(z)-y}{h}\right)\Psi(X_1,z)g_{\muX}(z)\,d\lambda^d(z),
	\end{align*}
	where
	\begin{align}\label{eq:Ah}
		A_h(y):=\left\{z\in\X\,|\, y-h\leq\dtm(z)\leq y+h\right\}=\left(\dtm\right)^{-1}[y-h,y+h]\cap \X.
	\end{align}
	In the following, we will show how to control such integrals over thickened level sets such as $A_h(y)$ for small $h$. More precisely, we prove the subsequent theorem that has already been applied to bound the term $\mathcal{I}(y)$ in the proof of Lemma \ref{lemma:limit of Un}.
	
	\begin{theorem}\label{thm:h2bound}
		Let $\X\subset \R^d$ be a compact set. Let $g\colon\X \to [-\alpha, \alpha]$ be $\alpha$-Lipschitz continuous and suppose that $k\colon \RR \to [-\alpha, \alpha]$ for some $\alpha > 0$. Assume that $\supp(k) = [-1, 1]$ and $\int
		k(s)\,d s = 0$. Let $\dfct\colon \RR^d \to \RR$ be a coercive function, i.e., $\lim_{||x||\to\infty}\dfct(x)=\infty$, with level sets $\Gamma_y
		= \dfct^{-1}\{y\}$ for $y\in \RR$. Call $y\in \RR$ a $C^{2,1}$-\emph{regular bounded
			value} of $d$ with respect to $\X$ if
		\begin{enumerate}[label=\textbf{C.\arabic*},ref=C.\arabic*]
			\item $\Gamma_y$ has an open neighborhood on which $\dfct$ is $C^{2,1}$, 
			\item $\nabla\dfct \neq 0$ on $\Gamma_y$.
			\item There exists $h_0^*>0$ and such that for all $-h_0^*<v<h_0^*$
			\begin{equation*}
				\int_{\Gamma_y}\!\left|\indifunc{\Phi(0,x)\in\X}-\indifunc{\Phi(v,x)\in\X}\right|\,d\Haus{d-1}(x)\leq C_y |v|,
			\end{equation*}
			where $\Phi$ denotes the canonical level set flow of $\Gamma_y$ and $C_y$ denotes a constant that only depends on the function $\dfct,$ the variable $y$ and the underlying space $\X$.
		\end{enumerate}
		If $y$ is a $C^{2,1}$-regular bounded value of $d$ with respect to $\X$, then
		\begin{equation}
			\left|\int_\X k\left(\frac{\dfct(x) - y}{h}\right) g(x)\,d\lambda^d(x)\right| \le c_y h^2
		\end{equation}
		for some $c_y > 0$ and any $0<h < h_0$, where $c_y$ and $h_0>0$ only depend on $d$,
		$y$, $\alpha$ and $\X$ (and not on $k$ and $g$ explicitly).
	\end{theorem}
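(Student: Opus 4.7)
The overall plan is to exploit the cancellation provided by $\int k(v)\,\dif v = 0$ against the thin-slab support
$A_h(y) = \dfct^{-1}[y-h,y+h]\cap\X$
of the integrand, which has thickness of order $h$. Without cancellation one could only argue $|\mathcal{I}(y)|\lesssim h$; the zero-mean property of $k$, combined with a Lipschitz estimate for the integral of $g/\|\nabla\dfct\|$ along the family of level sets, will upgrade this to the desired $h^2$ bound. Concretely, I would first localize: by compactness of $\Gamma_y$ (using coercivity of $\dfct$) and C.1, there is a neighborhood $U\supset\Gamma_y$ on which $\dfct$ is $C^{2,1}$, and for $h$ small enough continuity forces $A_h(y)\subset U$. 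On this neighborhood the coarea formula applies, so after substituting $t = y+hv$,
\begin{equation*}
 \mathcal{I}(y) \;=\; h\int_{-1}^{1} k(v)\, G(y+hv)\,\dif v, \qquad G(s) \;:=\; \int_{\Gamma_s\cap\X} \frac{g(x)}{\|\nabla\dfct(x)\|}\,\dif\Haus{d-1}(x).
\end{equation*}
Since $\int k = 0$, this rewrites as $\mathcal{I}(y) = h\int_{-1}^1 k(v)\bigl[G(y+hv)-G(y)\bigr]\,\dif v$, and it suffices to produce a Lipschitz estimate $|G(y+s)-G(y)|\le L|s|$ for $|s|$ small, with $L$ depending only on $d,y,\alpha,\X$.

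For that Lipschitz estimate I would pull the integration over $\Gamma_{y+s}\cap\X$ back to $\Gamma_y$ via the canonical level set flow $\Phi(s,\cdot):\Gamma_y\to\Gamma_{y+s}$, which is a $C^{1,1}$-diffeomorphism because $\dfct\in C^{2,1}$. The area formula gives $G(y+s) = \int_{\Gamma_y} \tfrac{g(\Phi(s,x))}{\|\nabla\dfct(\Phi(s,x))\|} J^{\mathrm{tan}}_s(x)\,\indifunc{\Phi(s,x)\in\X}\,\dif\Haus{d-1}(x)$, and adding and subtracting $\tfrac{g(x)}{\|\nabla\dfct(x)\|}\indifunc{\Phi(s,x)\in\X}$ decomposes $G(y+s)-G(y)$ into two pieces. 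The first, \emph{smooth}, piece has integrand of pointwise size $\mathcal{O}(|s|)$: Lipschitzness of $g$ and the $C^{1,1}$-regularity of $\Phi$ give $\|\Phi(s,x)-x\|\lesssim|s|$ and $|J^{\mathrm{tan}}_s(x)-1|\lesssim|s|$, while C.1 together with C.2 makes $\|\nabla\dfct\|$ continuous and bounded away from zero on the compact set $\Gamma_y$ (hence on a neighborhood); integrating over the compact $(d-1)$-manifold $\Gamma_y$ produces a bound $C|s|$. The second, \emph{indicator-mismatch}, piece is $\int_{\Gamma_y} \tfrac{g(x)}{\|\nabla\dfct(x)\|}\bigl[\indifunc{\Phi(s,x)\in\X}-\indifunc{x\in\X}\bigr]\dif\Haus{d-1}(x)$, which is controlled in absolute value directly by C.3 as $C_y\|g/\|\nabla\dfct\|\|_{\infty,\Gamma_y}|s|$. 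Combining and plugging back, $|\mathcal{I}(y)| \le h\int_{-1}^1 \alpha\cdot L h|v|\,\dif v \le c_y h^2$, as required.

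The main obstacle is the indicator-mismatch part: a priori, as $s$ varies the level sets $\Gamma_{y+s}$ may move into or out of $\X$ in a square-root (or worse) fashion, which would only yield $|G(y+s)-G(y)|\lesssim\sqrt{|s|}$ and destroy the $h^2$ estimate. This is precisely the tangential scenario illustrated at $y=5/12$ in Example \ref{ex:set level set ex}, where $\mathcal{I}_\X(y;v)\gtrsim\sqrt{v}$. Assumption C.3 is the minimal linear hypothesis that rules this out and makes the pullback-by-$\Phi$ comparison absorb the boundary contribution at the same linear rate as the interior contribution. Beyond that, the argument is essentially bookkeeping: checking that for $h<h_0$ (chosen small enough that $A_h(y)\subset U$ and $h<h_0^*$) all constants depend only on the allowed quantities $d,y,\alpha,\X$, which follows since the smooth bound depends on $\|g\|_{\mathrm{Lip}}\le\alpha$, $\sup|k|\le\alpha$, the $C^{2,1}$-norm of $\dfct$ on $U$, and geometric quantities of $\Gamma_y$, while the boundary bound depends on $C_y$ and the same sup-norms.
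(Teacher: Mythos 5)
Your proposal is correct and follows essentially the same route as the paper: coarea formula to reduce to the level-set integral $G$, cancellation from $\int k=0$, and a Lipschitz estimate for $G$ obtained by pulling the integral over $\Gamma_{y+s}\cap\X$ back to $\Gamma_y$ along the canonical level set flow, with the smooth part controlled by the $C^{1,1}$-regularity of the flow and the indicator mismatch by C.3 (the paper isolates this Lipschitz estimate as Lemma \ref{lem:local Lipschitz}). The only bookkeeping detail the paper adds is a Kirszbraun extension of $g$ to all of $\R^d$, needed because points of $\Gamma_y$ or their flow images may lie outside $\X$.
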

	The proof of Theorem \ref{thm:h2bound} consists of two steps, each of which is formulated as an independent lemma (see Section \ref{sec:lemmasB1toB4}).
	\begin{itemize}
		\item[]\underline{\bf Step 1:} Splitting the integration (Lemma \ref{lem:coarea application}). \\We first note that integration over $A_h(y)$ can be split into integrating first over the surface $(\dtm)^{-1}(v)\cap \X$ with respect to the $(d-1)$-dimensional Hausdorff measure $\Haus{d-1}$ (see \citet{federer2014geometric, morgan2016geometric} for an introduction) and afterwards over $v\in[y-h,y+h]$.
		\item[]\underline{\bf Step 2:} Local Lipschitz continuity (Lemma \ref{lem:local Lipschitz}).\\
		We prove that the integral of a bounded, $\alpha$-Lipschitz function $g:\X\subset\R^d\to[-\alpha,\alpha]$ over the level set of a $C^{2,1}$-regular bounded value $\dfct$ with respect to $\X$, denoted as $y$, is locally Lipschitz continuous in $y$. More precisely, we prove that there exists $h_0>0$ such that for all $-h_0<v<h_0$ it holds that
		\[\left|\int_{\Gamma_{y+v}\cap\X}\!g(x)\,d\Haus{d-1}(x)-\int_{\Gamma_{y}\cap\X}\!g(x)\,d\Haus{d-1}(x)\right|\leq C_y|v|,\]
		where $C_y>0$ denotes a constant that only depends on $\dfct,y,\alpha$ and $\X$.
	\end{itemize}
	
	\subsection{Auxiliary Lemmas Representing Step 1 and Step 2}
	\label{sec:lemmasB1toB4}
	\begin{lemma}\label{lem:coarea application}
		Let $f:\R^d\to\R$ be a Lipschitz continuous function. Let $h>0$, $\X\subset\R^d$ a compact space and $g:\R^d\to \R$ such that the function
		\begin{equation}\label{eq:coarea condition}
			x\mapsto\frac{|g(x)|}{||\nabla f(x)||}\indifunc{x\in\X\,:\,|f(x)|\leq h}
		\end{equation}
		is integrable with respect to $\lambda^d$. Then, it follows that
		\[\int_{\{x\in \X\,:\, |f(x)|\leq h\}}g(x)\,d\lambda^d(x)=\int_{-h}^h\!\int_{f^{-1}(v)\cap\X}\!\frac{g(x)}{||\nabla f(x)||}\,d\Haus{d-1}(x)\,dv,\]
		where $\Haus{d-1}$ denotes the $(d-1)$-dimensional Hausdorff measure. 
	\end{lemma}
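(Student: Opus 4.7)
The plan is to derive the statement directly from the Federer coarea formula. Since $f:\R^d\to\R$ is Lipschitz continuous, Rademacher's theorem guarantees that $\nabla f$ exists $\lambda^d$-almost everywhere, and the classical coarea formula (see e.g. \citet{federer2014geometric, morgan2016geometric}) asserts that for every Borel measurable function $u:\R^d\to[0,\infty]$,
\[
 \int_{\R^d} u(x)\,\|\nabla f(x)\|\,d\lambda^d(x) = \int_{\R}\int_{f^{-1}(v)} u(x)\,d\Haus{d-1}(x)\,dv.
\]
Splitting $g$ into its positive and negative parts $g=g^+-g^-$, the plan is to apply this identity twice with the non-negative test function $u^{\pm}(x)=\tfrac{g^{\pm}(x)}{\|\nabla f(x)\|}\,\indifunc{x\in\X,\,|f(x)|\leq h}$, interpreted as zero wherever $\nabla f(x)=0$. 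The integrability assumption \eqref{eq:coarea condition} ensures that both applications produce finite values and can be combined by linearity.

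After the substitution, the left-hand side becomes $\int_{\R^d} g(x)\,\indifunc{x\in\X,\,|f(x)|\leq h}\,d\lambda^d(x)$ on the set $\{\nabla f\neq 0\}$, which coincides with $\int_{\{x\in\X:\,|f(x)|\leq h\}} g(x)\,d\lambda^d(x)$ up to the contribution of critical points. To handle this, I would invoke the well-known consequence of the coarea formula (applied with test function $\indifunc{\nabla f=0}$) that the set of critical values has $\lambda^1$-measure zero, so the critical set enters neither side of the identity. This step is the only genuinely subtle point of the argument.

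On the right-hand side, the indicator factors as $\indifunc{x\in\X,\,|f(x)|\leq h}=\indifunc{x\in\X}\,\indifunc{|f(x)|\leq h}$. For fixed $v$, the inner integral is taken over $f^{-1}(v)$, and on this level set the factor $\indifunc{|f(x)|\leq h}$ reduces to $\indifunc{|v|\leq h}$. Consequently, the outer integration over $v\in\R$ is effectively supported on $[-h,h]$, and the right-hand side simplifies to
\[
 \int_{-h}^{h}\int_{f^{-1}(v)\cap\X}\frac{g(x)}{\|\nabla f(x)\|}\,d\Haus{d-1}(x)\,dv,
\]
which is exactly the asserted identity. The main obstacle, as noted, is the rigorous treatment of critical points of $f$, but this is standard once the general coarea formula is available; no further geometric input is needed.
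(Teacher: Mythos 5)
Your route is the same as the paper's: rewrite the integrand as $\bigl(g/\|\nabla f\|\bigr)\cdot\|\nabla f\|$ on the region $\{x\in\X:|f(x)|\le h\}$, apply Federer's coarea formula (Thm.\ 3.2.12, with coarea factor $\|\nabla f\|$), and then use the indicator $\indifunc{|f(x)|\le h}$ to restrict the outer integral to $[-h,h]$; splitting $g$ into positive and negative parts is a harmless addition. So the body of the argument matches the paper's proof.

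The flaw is in the step you yourself single out as the subtle one. First, the coarea formula applied to the test function $\indifunc{\nabla f=0}$ does \emph{not} give that the set of critical values is $\lambda^1$-null; it gives only that for $\lambda^1$-a.e.\ $v$ the slice $f^{-1}(v)\cap\{\nabla f=0\}$ is $\Haus{d-1}$-null. The statement about critical values is Sard's theorem, which requires $C^d$ regularity and can fail for Lipschitz (even $C^1$) functions when $d\ge 2$ (Whitney's example). Second, and more importantly, even if critical values were null this would not dispose of the critical set's contribution to the \emph{left-hand side}: a critical set of positive $\lambda^d$-measure is mapped to a single value (take $f$ constant on a ball inside $\X$ with $g>0$ there), so "critical values are null" says nothing about $\int_{\{|f|\le h\}\cap\X\cap\{\nabla f=0\}}g\,d\lambda^d$. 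The correct repair is already contained in the hypothesis: since the function in \eqref{eq:coarea condition} equals $+\infty$ wherever $\nabla f=0$ and $g\neq 0$, its integrability forces $g=0$ $\lambda^d$-a.e.\ on $\{\nabla f=0\}\cap\{x\in\X:|f(x)|\le h\}$, so the critical set contributes nothing to the left-hand side; the a.e.-$v$ nullity of the critical part of the slices (which your coarea application does give) then makes the right-hand side well defined. With that substitution your argument closes and coincides with the paper's proof, which performs this cancellation implicitly in its first display.
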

	\begin{proof}
		First of all, we observe that
		\begin{align*}
			\int_{\{x\in \X\,:\, |f(x)|\leq h\}}g(x)\,d\lambda^d(x)=\int_{\R^d}\frac{g(x)}{||\nabla f(x)||}\indifunc{x\in \X\,:\, |f(x)|\leq h}||\nabla f(x)||\,d\lambda^d(x).
		\end{align*}
		Since the function defined in \eqref{eq:coarea condition} is integrable, it follows by the \textit{Co-Area Formula} (see \citet[Thm. 3.2.12]{federer2014geometric}, where the \textit{k-dimensional Jacobian} of $f$ is $||\nabla f||$ in this setting) that 
		\begin{align*}
			&\int_{\R^d}\frac{g(x)}{||\nabla f(x)||}\indifunc{x\in \X\,:\, |f(x)|\leq h}||\nabla f(x)||\,d\lambda^d(x)\\=&\int_{-\infty}^\infty\!\int_{f^{-1}(v)}\!\frac{g(x)}{||\nabla f(x)||}\indifunc{x\in\R^d\,:\,-h\leq f(x)\leq h}\indifunc{x\in\X}\,d\Haus{d-1}(x)\,dv\\=&
			\int_{-h}^h\!\int_{f^{-1}(v)\cap\X}\!\frac{g(x)}{||\nabla f(x)||}\,d\Haus{d-1}(x)\,dv.
		\end{align*}
		This yields the claim.
	\end{proof}
	\begin{lemma}\label{lem:local Lipschitz}
		Let $\X\subset\R^d$ be a compact set. Let $g:\X\to [-\alpha,\alpha]$ be an $\alpha$-Lipschitz function for some $\alpha>0$. Let $\dfct\colon \RR^d \to \RR$ be a coercive function and $y\in \RR$ a $C^{2,1}$-\emph{regular bounded
			value} of $\dfct$ with respect to $\X$. Let $\Gamma_y
		= \dfct^{-1}\{y\}$. Then, it holds that 
		\begin{equation}\label{eq:local Lipschitz goal}
			\left|\int_{\Gamma_{y+v}\cap\X}\!g(x)\,d\Haus{d-1}(x)-\int_{\Gamma_{y}\cap\X}\!g(x)\,d\Haus{d-1}(x)\right|\leq C_y |v|,
		\end{equation}
		for some $C_y > 0$ and any $-h_0<v < h_0$, where $C_y$ and $h_0>0$ only depend on $\dfct$,
		$y$, $\alpha$ and $\X$ (and not on $g$ explicitly).
	\end{lemma}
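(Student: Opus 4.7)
The strategy is to pull back the integral over $\Gamma_{y+v}\cap\X$ to the fixed surface $\Gamma_y$ using the canonical level set flow $\Phi$, and then to decompose the resulting difference into three pieces: one controlled by hypothesis \textbf{C.3}, one by the Lipschitz property of $g$, and one by the regularity of the tangential Jacobian of the flow.

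Because $\dfct$ is coercive, $\Gamma_y$ is compact, and by \textbf{C.1}--\textbf{C.2} there is an open neighborhood $W\supset\Gamma_y$ on which $\dfct\in C^{2,1}$ and $\nabla\dfct\neq 0$. Hence the vector field $\varphi(x)=\nabla\dfct(x)/\norm{\nabla\dfct(x)}^2$ is $C^{1,1}$ on $W$, and Lemma \ref{lem:c21flow} with $r=1$ produces a $C^{1,1}$ flow $\Phi:(-h_0,h_0)\times W'\to\R^d$ for some open $W'\supset\Gamma_y$ and some $h_0\in(0,h_0^*]$. Differentiating $v\mapsto\dfct(\Phi(v,x))$ gives $\Phi(v,\Gamma_y)=\Gamma_{y+v}$, so $\Phi(v,\cdot)|_{\Gamma_y}$ is a $C^{1,1}$-diffeomorphism onto $\Gamma_{y+v}$. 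Let $J_v(x)$ denote its tangential Jacobian. Since $\Phi\in C^{1,1}$, the map $(v,x)\mapsto D_x\Phi(v,x)$ is Lipschitz, and therefore so is $(v,x)\mapsto J_v(x)$. Combined with $J_0\equiv 1$, this yields finite constants $L_J,L_\Phi$ depending only on $\dfct,y,\X$ such that for all $x\in\Gamma_y$ and $|v|\le h_0$,
\[J_v(x)\le 1+L_Jh_0,\qquad |J_v(x)-1|\le L_J|v|,\qquad \norm{\Phi(v,x)-x}\le L_\Phi|v|.\]

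By the area formula, $\int_{\Gamma_{y+v}\cap\X}g\,d\Haus{d-1}=\int_{\Gamma_y}g(\Phi(v,x))\,\indifunc{\Phi(v,x)\in\X}\,J_v(x)\,d\Haus{d-1}(x)$, while $\int_{\Gamma_y\cap\X}g\,d\Haus{d-1}=\int_{\Gamma_y}g(x)\indifunc{x\in\X}\,d\Haus{d-1}(x)$. Inserting $\pm\indifunc{x\in\X}g(\Phi(v,x))J_v(x)\pm\indifunc{x\in\X}g(x)J_v(x)$ into the difference produces
\begin{align*}
\text{(I)}&=\int_{\Gamma_y}g(\Phi(v,x))J_v(x)\bigl(\indifunc{\Phi(v,x)\in\X}-\indifunc{x\in\X}\bigr)\,d\Haus{d-1}(x),\\
\text{(II)}&=\int_{\Gamma_y}\indifunc{x\in\X}\bigl(g(\Phi(v,x))-g(x)\bigr)J_v(x)\,d\Haus{d-1}(x),\\
\text{(III)}&=\int_{\Gamma_y}\indifunc{x\in\X}g(x)\bigl(J_v(x)-1\bigr)\,d\Haus{d-1}(x).
\end{align*}
Using $|g|\le\alpha$, the bound on $J_v$, and \textbf{C.3} gives $|\text{(I)}|\le\alpha(1+L_Jh_0)\,C_y|v|$; the $\alpha$-Lipschitz property of $g$ combined with the flow displacement bound yields $|\text{(II)}|\le\alpha L_\Phi(1+L_Jh_0)\,\Haus{d-1}(\Gamma_y)|v|$; and the Jacobian estimate gives $|\text{(III)}|\le\alpha L_J\,\Haus{d-1}(\Gamma_y)|v|$. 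Finiteness of $\Haus{d-1}(\Gamma_y)$ follows from $\Gamma_y$ being a compact $C^{1,1}$ hypersurface. Summing the three bounds produces \eqref{eq:local Lipschitz goal} with a constant $C_y$ that depends only on $\dfct,y,\alpha,\X$.

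The only delicate point is the joint Lipschitz regularity of the tangential Jacobian $J_v(x)$ in $v$, which is precisely why the flow regularity was upgraded to $C^{1,1}$ in Lemma \ref{lem:c21flow}; the remaining pieces are routine. Term (I) is where \textbf{C.3} is indispensable, since pure smoothness of the flow cannot control the symmetric difference between $\X$ and $\Phi(v,\cdot)^{-1}(\X)$ along $\Gamma_y$.
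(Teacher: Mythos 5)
Your proof follows essentially the same route as the paper's: pull the integral over $\Gamma_{y+v}\cap\X$ back to $\Gamma_y$ via the canonical level set flow generated by the $C^{1,1}$ vector field $\nabla\dfct/\norm{\nabla\dfct}^2$ (Lemma \ref{lem:c21flow}), then split the difference into an indicator-mismatch term handled by \textbf{C.3} and remainder terms handled by Lipschitz continuity of the transported integrand and of the (tangential) Jacobian; your three-term decomposition (I)--(III) is just a finer splitting of the paper's two terms. One technical point you gloss over: in (I) and (II) you evaluate $g(\Phi(v,x))$, but $g$ is only defined on $\X$ and $\Phi(v,x)$ need not lie in $\X$ --- this is precisely the situation the indicator mismatch quantifies --- so the decomposition is not well defined as written. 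The paper resolves this by first extending $g$ to an $\alpha$-Lipschitz function $\tilde g$ on all of $\R^d$ via Kirszbraun's theorem and then bounding $|\tilde g(\Phi(v,x))|\le\alpha+\alpha L_\Phi h_0$ on the relevant set; inserting such an extension into your argument repairs it with no other changes, so the gap is minor and easily closed.
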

	\begin{proof}
		Before we prove \eqref{eq:local Lipschitz goal}, we ensure that the statement is well defined and prove that under the assumptions made
		\[\int_{\Gamma_{y}\cap\X}\!|g(x)|\,d\Haus{d-1}(x)\leq\alpha\int_{\Gamma_{y}\cap\X}\!\,d\Haus{d-1}(x)<\infty.\]
		To this end, we observe that $\Haus{d-1}(\dfct^{-1}(\{y\}\cap \X)\leq\Haus{d-1}(\dfct^{-1}(\{y\}))$. As $\dfct$ is coercive it follows that the set $\dfct^{-1}([0,y])$ is bounded.  
		Hence, the same holds true for $\dfct^{-1}(\{y\})$. Furthermore, as $\dfct$ is $C^{2,1}$ in an open neighborhood of the level set $\Gamma_y$ and $\nabla \dfct\neq 0$ on $\Gamma_y$, it follows that $\Gamma_y$ is a compact  $C^1$-manifold of dimension $d-1$ \cite[Thm. 9]{villanacci2002differential}, which obviously has finite volume (and hence finite $(d-1)$-dimensional Hausdorff measure \citep{federer2014geometric,morgan2016geometric}).\\ 
		
		Now, we focus on proving the statement \eqref{eq:local Lipschitz goal}. By assumption, $\dfct$ is $C^{2,1}$ on an open neighborhood $U$ of $\Gamma_y$ with $||\nabla\dfct||>0$ on $\Gamma_y$. In consequence, there exists $h_0'>0$ such that $\dfct^{-1}([y-h_0',y+h_0'])\subset U$ and $||\nabla\dfct||>0$ on $\dfct^{-1}([y-h_0',y+h_0'])$. This means that the function
		\[\varphi(u):\R^d\to\R^d, ~u \mapsto \frac{\nabla\dfct(u)}{||\nabla\dfct(u)||^2}\]
		is $C^{1,1}(\dfct^{-1}((y-h_0',y+h_0')), \R^d)$. By Theorem A.6 in \citet{eldering2013normally}  (or more generally by Cauchy-Lipschitz's theory \citep{hirsch1974differential, amann2011ordinary}) there exists $0<h_0\leq h_0'$ such that one can construct a flow $\Phi:[-h_0,h_0]\times W\to \R^d$ with 
		\[\begin{cases}
			\frac{\partial}{\partial t}\Phi(t,x)=\frac{\nabla\dfct(\Phi(t,x))}{||\nabla\dfct(\Phi(t,x))||^2}\\
			\Phi(0,x)=x,
		\end{cases}\]
		where $W\subset \R^d$ is an open set that contains $\dfct^{-1}([y-h_0,y+h_0])$. Differentiating the function $t\mapsto \dfct(\Phi(t,x))$ immediately shows that $\dfct\left(\Phi(t,x)\right)=\dfct(x)+t$. This implies that $\Phi(t,\dfct^{-1}(\{y\}))=\dfct^{-1}(\{y+t\})$. In particular, \citet[Thm. A.6]{eldering2013normally} yields that $\Phi$ is in $C^{1,1}$. Consequently, we find that
		\begin{align*}
			&\left|\int_{\Gamma_{y+v}\cap\X}\!g(x)\,d\Haus{d-1}(x)-\int_{\Gamma_{y}\cap\X}\!g(x)\,d\Haus{d-1}(x)\right|\\=&\left|\int_{\dfct^{-1}(\{y\})}g(x)\indifunc{x\in\X}\!\,d\Haus{d-1}(x)-\int_{\Phi(v,\dfct^{-1}(\{y\}))}\!g(x)\indifunc{x\in\X}\,d\Haus{d-1}(x)\right|\\
			\leq&\int_{\dfct^{-1}(\{y\})}\! \left|g(\Phi(0,x))J_{\Phi(0,\cdot)}(x)\indifunc{\Phi(0,x)\in\X}-g(\Phi(v,x))J_{\Phi(v,\cdot)}(x)\indifunc{\Phi(v,x)\in\X}\right|\,d\Haus{d-1}(x),
		\end{align*}
		where $J_{\Phi(v,\cdot)}$ denotes the \textit{Jacobian determinant} of $\Phi(v,\cdot)$. The last line follows by a change of variables (see e.g.\ \citet[Thm. 56]{merigot2021optimal}) and the fact that $\Phi(0,\cdot)$ is the identity. By Kirszbraun's Theorem \citep[Thm. 2.10.43]{federer2014geometric} we can extend $g:\X\to[-\alpha,\alpha]$ to a Lipschitz continuous function $\tilde{g}:\R^d\to\R$, that has the same Lipschitz constant $\alpha$. Obviously, it holds that
		\begin{align*}
			&\int_{\dfct^{-1}(\{y\})}\! \left|g(\Phi(0,x))J_{\Phi(0,\cdot)}(x)\indifunc{\Phi(0,x)\in\X}-g(\Phi(v,x))J_{\Phi(v,\cdot)}(x)\indifunc{\Phi(v,x)\in\X}\right|\,d\Haus{d-1}(x)\\=&\int_{\dfct^{-1}(\{y\})}\! \left|\tilde{g}(\Phi(0,x))J_{\Phi(0,\cdot)}(x)\indifunc{\Phi(0,x)\in\X}-\tilde{g}(\Phi(v,x))J_{\Phi(v,\cdot)}(x)\indifunc{\Phi(v,x)\in\X}\right|\,d\Haus{d-1}(x).
		\end{align*}
		Therefore, we find that 
		\begin{align*}
			&\left|\int_{\Gamma_{y+v}\cap\X}\!g(x)\,d\Haus{d-1}(x)-\int_{\Gamma_{y}\cap\X}\!g(x)\,d\Haus{d-1}(x)\right|\\\leq &\int_{\dfct^{-1}(\{y\})}\! \left|\tilde{g}(\Phi(0,x))J_{\Phi(0,\cdot)}(x)-\tilde{g}(\Phi(v,x))J_{\Phi(v,\cdot)}(x)\right|\indifunc{\Phi(0,x)\in\X}\,d\Haus{d-1}(x)\\+&\int_{\dfct^{-1}(\{y\})}\!\left|\indifunc{\Phi(0,x)\in\X}-\indifunc{\Phi(v,x)\in\X}\right|\left|\tilde{g}(\Phi(v,x))J_{\Phi(v,\cdot)}(x)\right|\,d\Haus{d-1}(x).
		\end{align*}
		Since $\Phi$ is in $C^{1,1}([-h_0,h_0]\times W)$, it follows that $(v,x)\mapsto \tilde{g}(\Phi(v,x))$ and $(v,x)\mapsto J_{\Phi(v,\cdot)}(x)$ are Lipschitz continuous functions. We observe that for $(v,x)\in[-h_0,h_0]\times \X$ \begin{align*}
			\left|\tilde{g}(\Phi(v,x))\right|\leq \left|\tilde{g}(\Phi(0,x))\right|+\left|\tilde{g}(\Phi(v,x))-\tilde{g}(\Phi(0,x))\right|\leq \alpha+\alpha||\Phi(v,x)-\Phi(0,x) ||\\\leq \alpha+\alpha L_\Phi h_0,
		\end{align*}
		where $L_\Phi$ denotes the Lipschitz constant of $\Phi$. This implies immediately that the function $(v,x)\mapsto \tilde{g}(\Phi(v,x))J_{\Phi(v,\cdot)}(x)$ is Lipschitz continuous on $[-h_0,h_0]\times \X$ with a Lipschitz constant that only depends on $\dfct,y,\alpha$ and $\X$. Further, we realize that
		\begin{equation}\label{eq: diff of indis pos}
			\left|\indifunc{\Phi(0,x)\in\X}-\indifunc{\Phi(v,x)\in\X}\right|>0
		\end{equation}
		implies that either $x\in\X$ or $\Phi(v,x)\in\X$ (but not both). Given \eqref{eq: diff of indis pos}, our previous calculations show that
		\[\left|\tilde{g}(\Phi(v,x))J_{\Phi(v,\cdot)}(x)\right|\leq C_y,\]
		where $C_y$ denotes a finite constant that depends only on $\dfct,y,\alpha$ as well as $\X$. For the remainder of this proof, this constant may vary from line to line. We obtain that
		\begin{align*}
			&\left|\int_{\Gamma_{y+v}\cap\X}\!g(x)\,d\Haus{d-1}(x)-\int_{\Gamma_{y}\cap\X}\!g(x)\,d\Haus{d-1}(x)\right|\\
			\leq &\int_{\dfct^{-1}(\{y\})}\! C_y\left|v\right|\indifunc{x\in\X}\,d\Haus{d-1}(x)+C_y\int_{\dfct^{-1}(\{y\})}\!\left|\indifunc{\Phi(0,x)\in\X}-\indifunc{\Phi(v,x)\in\X}\right|\,d\Haus{d-1}(x).
		\end{align*}
		Since $y$ is a $C^{2,1}$-regular value of $\dfct$ with respect to $\X$, we find that (by potentially adjusting $h_0$) there exists $h_0>0$ such that for all $-h_0<v<h_0$ 
		\begin{align*}
			\left|\int_{\Gamma_{y+v}\cap\X}\!g(x)\,d\Haus{d-1}(x)-\int_{\Gamma_{y}\cap\X}\!g(x)\,d\Haus{d-1}(x)\right|\leq C_y|v|\Haus{d-1}\left(\Gamma_y\right)+C_y|v|\leq C_y|v|.
		\end{align*}
		This gives the claim.
	\end{proof}
	
	\subsection{Proof of Theorem \ref{thm:h2bound}}
	By assumption, we have that $\nabla\dfct \neq 0$ on the level set $\Gamma_y$. Furthermore, we have assumed that the function $\dfct$ is $C^{2,1}$ on an open neighborhood of $\Gamma_y$. Thus, there exists $h_0>0$ such that $||\nabla\dfct|| > 0 $ on \begin{equation}
		\dfct^{-1}[y-h_0,y+h_0]=\{x\in\R^d\,:\,y-h_0\leq \dfct(x)\leq y+h_0 \}.
	\end{equation} Throughout the following let $0<h< h_0$. As $\supp(k)=[-1,1]$, we get that
	\begin{align*}
		\left|\int_\X k\left(\frac{\dfct(x) - y}{h}\right) g(x)\,d\lambda^d(x)\right|
		=&\left|\int_{\{x\in\X\,:\,| \dfct(x)-y|\leq h\}} k\left(\frac{\dfct(x) - y}{h}\right) g(x)\,d\lambda^d(x)\right|.
	\end{align*}
	Since $||\nabla\dfct(x)|| > 0 $ for $x\in \dfct^{-1}[y-h_0,y+h_0]$ and $|g(x)|\leq \alpha$ for all $x$, we obtain that 
	\begin{equation}\label{eq:coarea assumption verification}
		\sup_{x\in\R^d}\left|\frac{g(x)}{||\nabla\dfct(x)||}\indifunc{x\in\X\,:\,|\dfct(x)-y|\leq h}\right|<C_y,
	\end{equation}
	where $C_y$ denotes a constant that only depends on $\dfct,y,\alpha$ and $\X$ (in particular it can be chosen independently from $h$). In the following, $C_y$ may vary from line to line. Clearly, \eqref{eq:coarea assumption verification} implies that the function 
	\[x\mapsto\frac{|g(x)|}{||\nabla\dfct(x)||}\indifunc{x\in\X\,:\,|\dfct(x)-y|\leq h}\]
	is $\lambda^d$-integrable for any $0\leq h\leq h_0$. Therefore, it follows by Lemma \ref{lem:coarea application} in combination with \eqref{eq:coarea assumption verification} that
	\begin{align*}
		\left|\int_\X k\left(\frac{\dfct(x) - y}{h}\right) g(x)d\lambda^d(x)\right|\!=&\left| \int_{-h}^h\!\int_{\{x\in\X\,:\,\dfct(x)-y=v\}}\!\!k\left(\frac{\dfct(x) - y}{h}\right)\!\frac{g(x)}{||\nabla\dfct(x)||}d\Haus{d-1}(x)dv\right|\\
		=&\left| \int_{-h}^h k\left(\frac{v}{h}\right)\!\int_{\{x\in\X\,:\,\dfct(x)-y=v\}}\!\frac{g(x)}{||\nabla\dfct(x)||}d\Haus{d-1}(x)\,dv\right|.
	\end{align*}
	We note that 
	\[\{x\in\X\,:\,\dfct(x)-y=v\}=\dfct^{-1}(y+v)\cap\X =\Gamma_{y+v}\cap\X.\]
	This yields that
	\begin{align*}
		&\left|\int_\X k\left(\frac{\dfct(x) - y}{h}\right) g(x)\,d\lambda^d(x)\right|\nonumber\\ \leq& 
		\left| \int_{-h}^h k\left(\frac{v}{h}\right)\int_{\Gamma_{y}\cap\X}\!\frac{g(x)}{||\nabla\dfct(x)||}\,d\Haus{d-1}(x)\,dv\right|\nonumber\\+&\left|\int_{-h}^h k\left(\frac{v}{h}\right)\left(\int_{\Gamma_{y+v}\cap\X}\!\frac{g(x)}{||\nabla\dfct(x)||}\,d\Haus{d-1}(x)-\int_{\Gamma_{y}\cap\X}\!\frac{g(x)}{||\nabla\dfct(x)||}\,d\Haus{d-1}(x)\!\!\right)dv\right|\eqqcolon T_5+T_6.
	\end{align*}
	Next, we consider both summands separately. First of all, we observe that the integral \[\int_{\Gamma_{y}\cap\X}\!\frac{g(x)}{||\nabla\dfct(x)||}\,d\Haus{d-1}(x)\] does not depend on $v$. Consequently, we obtain that
	\[T_5 \overset{(i)}{\leq} C_y\left| \int_{-h}^h k\left(\frac{v}{h}\right)\,dv\right|\left|\int_{\Gamma_{y}\cap\X}\!\,d\Haus{d-1}(x)\right|\overset{(ii)}{\leq} C_y\left| \int_{-h}^h k\left(\frac{v}{h}\right)\,dv \right|. \]
	Here, $(i)$ follows by \eqref{eq:coarea assumption verification} and $(ii)$ follows since $\Haus{d-1}(\Gamma_{y}\cap\X)\leq C<\infty$ for some constant $C$, as already argued in the proof of of Lemma \ref{lem:local Lipschitz}. Setting $u=v/h$ and using $\int\!k(u)\,du=0$ gives that
	\[T_5\leq C_y h \left|\int_{-1}^1\! k\left(u\right)\,du\right|=0. \]
	Hence, it only remains to consider the second summand $T_6$. Let $\X^*=\X\cap\dfct^{-1}([y-h_0,y+h_0])$. Since $h\leq h_0$, we obtain that
	\[T_6\leq \int_{-h}^h\left| k\left(\frac{v}{h}\right)\right|\left|\int_{\Gamma_{y+v}\cap\X^*}\!\frac{g(x)}{||\nabla\dfct(x)||}\,d\Haus{d-1}(x)-\int_{\Gamma_{y}\cap\X^*}\!\frac{g(x)}{||\nabla\dfct(x)||}\,d\Haus{d-1}(x)\right|dv\]
	We realize that the function 
	\[g^*:\X^*\to\R,~~x\mapsto\frac{g(x)}{||\nabla\dfct(x)||}\]
	is Lipschitz continuous, as $||\nabla\dfct(x)||>0$ for $x\in \dfct^{-1}([y-h_0,y+h_0])$, the function $||\nabla\dfct(x)||$ is in $C^{1,1}(\dfct(^{-1}(y-h_0,y+h_0))$ and $g$ is Lipschitz continuous and bounded by assumption. As $y$ is a $C^{2,1}$-regular bounded value of $\dfct$ with respect to $\X$, it is straightforward to verify that it is also one with respect to $\X^*$.  Thus, the requirements of Lemma \ref{lem:local Lipschitz} are met. By potentially decreasing $h_0$, we find for all $h$ small enough that 
	\begin{align*}
		T_6
		\leq C_y\int_{-h}^h\! \left|k\left(\frac{v}{h}\right)\right|v\,dv.
	\end{align*}
	Setting $u=v/h$ gives that
	\begin{align*}
		T_6
		\leq C_yh^2\int_{-1}^1\! \left|k\left(u\right)\right|u\,du\leq c_yh^2,
	\end{align*}
	where $c_y>0$ depends only on $\dfct,y,\alpha$ and $\X$.\\
	
	All in all, this gives
	\[\left|\int_\X k\left(\frac{\dfct(x) - y}{h}\right) g(z)\,d\lambda^d(z)\right|\leq T_5+T_6\leq c_yh^2 ,\]
	which yields the claim.
	\QEDA

	\vskip 0.2in	

\end{document}